\title{\begin{flushleft}\LARGE\textbf{Method of Equal Shares with Bounded Overspending}\end{flushleft}}
\newcommand{\WAS}{W\k{A}S}
\newcommand{\Was}{W\k{a}s}
\author{\begin{flushleft}\large GEORGIOS PAPASOTIROPOULOS, \textit{\small University of Warsaw}\\ \vspace{-0.2cm}
		\large SEYEDEH ZEINAB PISHBIN, \textit{\small University of Warsaw, University of Tehran}\\ \vspace{0.15cm}
		\large OSKAR SKIBSKI, \textit{\small University of Warsaw}\\ \vspace{0.15cm}
		\large PIOTR SKOWRON, \textit{\small University of Warsaw}\\ \vspace{0.15cm}
		\large TOMASZ \WAS, \textit{\small University of Oxford}
	\end{flushleft}
}
\pgfplotsset{compat=1.18}
\definecolor{winered}{rgb}{0.5,0.1,0.1}
\definecolor{darkgreen}{rgb}{0,0.5,0}
\renewcommand{\faCheck}{\textcolor{darkgreen}{\ding{51}}}
\newcolumntype{Y}{>{\centering\arraybackslash}X}
\newcolumntype{Z}{>{\raggedleft\arraybackslash}X}
\newcolumntype{a}{>{\columncolor{black!05}}Y}
\newcommand{\rightcell}[1]{\multicolumn{1}{r}{#1} }
\renewcommand*{\le}{\leqslant}
\renewcommand*{\leq}{\leqslant}
\renewcommand*{\ge}{\geqslant}
\renewcommand*{\geq}{\geqslant}
\renewcommand{\epsilon}{\varepsilon}
\newcommand{\reals}{{{\mathbb{R}}}}
\newcommand{\cost}{{{\mathrm{cost}}}}
\newcommand{\ini}{{{\mathrm{ini}}}}
\renewcommand{\part}{{{\mathrm{part}}}}
\newcommand{\calR}{{{\mathcal{R}}}}
\newcommand{\dist}{{{\mathrm{dist}}}}
\renewcommand*{\le}{\leqslant}
\renewcommand*{\leq}{\leqslant}
\renewcommand*{\ge}{\geqslant}
\renewcommand*{\geq}{\geqslant}
\renewcommand{\epsilon}{\varepsilon}
\newcommand{\myemph}[1]{{\color{winered}\emph{#1}}}
\newcommand{\costc}{{{\mathrm{cost(c)}}}}
\renewcommand{\part}{{{\mathrm{part}}}}
\newcommand\supminus{\smash{\scalebox{0.55}[0.55]{\(-\)}}}
\newcommand\supplus{\smash{\scalebox{0.55}[0.55]{\(+\)}}}
\newtheoremstyle{ex}
{}{}
{}
{}
{\bfseries}
{.}
{ }
{%
	\thmname{#1}
	\thmnumber{ #2}
	\thmnote{{\bfseries: \ifex(\fi#3\ifex) \fi}}
}
\newif\ifex
\theoremstyle{ex}{
		\newtheorem{exof}{Example}}}
\newcommand{\argmin}{{\mathrm{argmin}}}
\newcommand{\lv}[1]{}
\newcommand{\sv}[1]{#1}
\newcommand{\appendixText}{}
\newcommand{\toappendix}[1]{\gappto{\appendixText}{{#1}}}
\newcommand{\appendixsectionformat}{%
	\titleformat{\section}[block]
	{\normalfont\large\bfseries}{Appendix \Alph{section}.}{1em}{}
}
\Crefname{table}{Table}{Tables}
\Crefname{table}{Table}{Tables}
\Crefname{figure}{Figure}{Figures}
\Crefname{theorem}{Theorem}{Theorems}
\Crefname{definition}{Definition}{Definitions}
\Crefname{corollary}{Corollary}{Corollaries}
\Crefname{observation}{Observation}{Observations}
\Crefname{question}{Question}{Question}
\Crefname{lemma}{Lemma}{Lemmas}
\Crefname{claim}{Claim}{Claims}
\Crefname{example}{Example}{Examples}
\Crefname{reduction}{Reduction}{Reductions}
\Crefname{construction}{Construction}{Constructions}
\Crefname{subsection}{Section}{Sections}
\Crefname{section}{Section}{Sections}
\Crefname{proposition}{Proposition}{Propositions}
\Crefname{algorithm}{Algorithm}{Algorithms}
\Crefname{algocf}{Algorithm}{Algorithms}
\Crefname{equation}{Inequality}{Inequalities}
\Crefname{lstlisting}{listing}{listings}
\Crefname{exof}{Example}{Examples}
\theoremstyle{definition}
\newtheorem{definition}{Definition}
\newtheorem{example}{Example}
\newtheorem*{examplecont}{Continuation of \Cref{ex:mes}}
\theoremstyle{plain}
\newtheorem{theorem}{Theorem}
\newtheorem{corollary}[theorem]{Corollary}
\newtheorem{claim}[theorem]{Claim}
\renewenvironment{description}
{\list{}{\labelwidth=8pt \leftmargin=12pt
		}}
{\endlist}
\newcommand{\pabulib}{{{\textsc{Pabulib}}}}
\begin{document}

	\date{}

	\newgeometry{left=0.95in, right=0.95in, top=-0.05in, bottom=0.8in}
	\maketitle

	\vspace{-1cm}{\small\noindent	In participatory budgeting (PB), voters decide through voting which subset of projects to fund within a given budget. 
		Proportionality in the context of PB is crucial to ensure equal treatment of all groups of voters. 
		However, pure proportional rules can sometimes lead to suboptimal outcomes. We introduce the Method of Equal Shares with Bounded Overspending (BOS Equal Shares), a robust variant of Equal Shares that balances proportionality and efficiency. 
		BOS Equal Shares addresses inefficiencies implied by strict proportionality axioms, yet the rule still provides fairness guarantees, similar to the original Method of Equal Shares. 
		Our extensive empirical analysis on real-world PB instances shows excellent performance of BOS Equal Shares across several metrics.
		In the course of the analysis, we also present and examine a fractional variant of the Method of Equal Shares which allows for partial funding of projects.}
	
	\setcounter{tocdepth}{2} 
	\noindent\rule{\textwidth}{0.8pt}
	\begin{center}
		\textsc{Contents}  
	\end{center}
	\vspace{-1.2cm}
	\renewcommand{\contentsname}{}
	\begin{spacing}{0.9}
		\tableofcontents  
	\end{spacing}
	\noindent\rule{\textwidth}{0.5pt}
	
	\setcounter{secnumdepth}{2}
	
	\newgeometry{left=0.95in, right=0.95in, top=0.75in, bottom=0.8in}
	\section{Introduction}

We consider the participatory budgeting (PB) scenario, where a group of voters decides, through voting, which subset of projects to fund. 
The projects have varying prices, and the total cost of the selected projects cannot exceed a given budget. 
Voters express their preferences by casting ballots---typically either by indicating sets of approved projects (i.e., they cast approval ballots) or by assigning numerical scores to projects (so-called range voting). 
PB has been recently adopted by many municipalities worldwide,\footnote{\url{https://en.wikipedia.org/wiki/List_of_participatory_budgeting_votes}} but the model applies more broadly. 
It extends the framework of committee elections~\citep{lac-sko:multiwinner-book, FSST-trends} and can be used even if the voters and candidates are not humans but represent abstract objects, such as validators in proof-of-stake blockchains~\citep{cevallos2020validator}.

Proportionality is a critical requirement in the context of PB elections. 
Intuitively, it says that each group of similar-minded voters should be entitled to decide about a proportional fraction of the available funds (e.g., if 30\% of voters like similar projects, then roughly  30\% of funds should be designated to the projects these voters support). 
Proportionality, among others, ensures equal treatment of minorities, geographical regions, and various project categories~\citep{fal-fli-pet-pie-sko-sto-szu-tal:pb-experiments, rey-mal:survey-on-pb}. 
It also ensures that groups of voters forming pluralities are not overrepresented, thus protecting elections against certain strategies employed by coordinated voters or project owners. 
As a result, several proportionality criteria and new voting rules have been proposed in the literature (cf. the overview of \citet{rey-mal:survey-on-pb}). 
One voting method, the Method of Equal Shares~\citep{pet-sko:laminar,pet-pie-sko:c:participatory-budgeting-cardinal}, stands out by exhibiting particularly strong proportionality properties~\citep{los2022proportional,brill2023proportionality} as well as robustness to changes in the voter participation~\citep{benade2023participatory}.
It has also been successfully used in real-life PB elections.%
\footnote{\url{https://equalshares.net/elections}}

Informally speaking, the Method of Equal Shares first virtually distributes the budget equally among the voters. 
Projects are then selected based on the vote count, but each time a project is selected, its cost is split among its supporting voters. 
Thus, only the projects whose supporters still have enough funds to cover their costs can be selected.
Furthermore, the votes of those who have run out of money are no longer counted. 
This way, in subsequent rounds, the votes of minorities who have not yet influenced the decision (hence, still have money) are taken into account.

While this method offers strong proportionality guarantees, and largely exhibits desired behavior in practice,
it is not without its flaws.
In this work, we begin by uncovering drawbacks of Equal Shares (discussed in detail in \Cref{sec:challenging-instances,sec:experiments}).

\begin{description}
	\item[Underspending]
	As it has been already observed in the literature, the Method of Equal Shares can significantly underspend the available funds~\citep{fal-fli-pet-pie-sko-sto-szu-tal:pb-experiments}.
	Indicatively, on real instances of PB elections it uses on average only 45\% of the budget (see \Cref{sec:experiments}).
	Thus, in practice it heavily depends on the completion method with which it is combined (cf. \Cref{sec:prelims}).
	\item[Helenka Paradox]
	In Equal Shares, even a small group of voters can propose a modest project which, if unanimously supported by that group, would likely be selected. However, due to strict budget constraints, this could prevent a larger project---potentially benefiting the vast majority of voters---from being funded. We observed this issue in the Helenka district of Zabrze, Poland, where the Method of Equal Shares would have resulted in 97\% of voters left with no project.
	\item[Tail Utilities] 
	When there are large discrepancies in voters' scores assigned to projects, Equal Shares selects a project based on the score of the least satisfied voter who must cover its cost. This mechanism  is egalitarian in principle, and so, the method may not properly take into account significant utility values in the decision process.
\end{description}

Based on these observations, we introduce and analyze a new voting rule, which we call the \myemph{Method of Equal Shares with Bounded Overspending} (in short, BOS Equal Shares, or simply BOS). 
The new method is a more robust variant of Equal Shares, remarkably effective in handling scenarios akin to particularly challenging instances of PB elections.

Interestingly, we argue that some of the problems we identify are not solely due to the Method of Equal Shares itself, but rather to the strict proportionality guarantees the method aims to provide. 
Specifically, it appears that the appealing axiom of extended justified representation (EJR)~\citep{aziz2017justified, pet-pie-sko:c:participatory-budgeting-cardinal, BP-ejrplus} might enforce inefficiencies in certain scenarios. 
Consequently, BOS Equal Shares is not meant to generally satisfy EJR. 
Nevertheless, in most cases, our new method provides strong proportionality guarantees, mirroring those of the original Method of Equal Shares. 
To confirm this we prove that BOS satisfies an approximate variant of EJR.

Our main argument for the advantages of BOS Equal Shares
comes from our extensive empirical analysis on real-world PB instances,
in which BOS shows very good and robust performance in a number of metrics.
In particular, it provides EJR+ up to one (a strong EJR-style axiom proposed by \citet{BP-ejrplus}) in more than 95\% of cases,
and, in comparison to the original Method of Equal Shares, leaves less voters empty-handed.
Furthermore, our rule has been recently proved to be superior in the context of selecting a representative set of influential nodes in networks~\citep{pap-ski-sko-was25}.

Noteworthy, in the course of designing BOS Equal Shares, we propose and analyze a fractional variant of the Method of Equal Shares, which we refer to as the \myemph{Fractional Equal Shares} (FrES). 
The new variant works in a model, where projects are allowed to be funded partially. It extends the Generalized Method of Equal Shares~\citep{lu2024approval}, another rule recently proposed for the fractional model, but only for approval ballots. 

\section{Preliminaries}
\label{sec:prelims}
A PB election (in short, an election), is a tuple $E = (C, V, b)$, where $C = \{c_1, \ldots, c_m\}$ is a set of available \myemph{candidates} (also referred to as \myemph{projects}), $V = \{v_1, \ldots, v_n\}$ is a set of voters, and $b \in \mathbb{R}$ is the budget value. 
Each candidate $c \in C$ is associated with a cost, denoted as $\cost(c),$ assumed to be upper bounded by $b$. 
We extend this notation to sets of candidates, setting $\cost(W) = \sum_{c \in W} \cost(c)$ for all $W \subseteq C$. 
An \myemph{outcome} of an election is a subset of candidates; an outcome $W$ is \myemph{feasible} if $\cost(W) \leq b$. An \myemph{election rule} is a function that for each election returns a nonempty collection of feasible outcomes. 
Typically, we are interested in a single outcome, yet we allow for ties. 
Each voter $v_i\in V$ has a utility function $u_i \colon C \to \reals_{\geq 0}$ that assigns values to the candidates. 
We assume the utilities are additive, and write $u_i(W) = \sum_{c \in W} u_i(c)$ for each $W \subseteq C$. 
The voters' utility functions and the candidates' costs are integral parts of the election.

There are two special types of utility functions that are particularly interesting, and pivotal for certain parts of our work. In some cases it is natural to assume that the utilities directly correspond to the scores extracted from the voters' ballots; then we speak about \myemph{score utilities}.
In case of approval ballots, where the voters only indicate subsets of supported candidates, we simply assume that  the voter assigns scores of one to the approved candidates, and  scores of zero to those she does not approve.
On the other hand, the \myemph{cost utility} of a voter from a project is its score utility multiplied by the projects' cost. 
For example, if $W \subseteq C$ and $A_i $ is the set of candidates that voter $v_i$ approves, then we have $u_i(W) = |W \cap A_i|$ for score utilities (the utility is the number of selected projects the voter approves), and $u_i(W) = \cost(W \cap A_i)$ for cost utilities (thus, the utility is the amount of public funds allocated to projects she supports).

\subsection{Method of Equal Shares}
\label{sec:mes}

Arguably the simplest, and most commonly used voting rule in PB elections is the Utilitarian Method.
This method selects the candidates by their vote count, omitting those whose selection would exceed the budget; it stops when no further candidate can be added. 
Since this approach is highly suboptimal from the perspective of proportionality (see for instance \Cref{ex:mes} that follows), we will be particularly interested in the Method of Equal Shares, a proportional election rule recently introduced in the literature~\citep{pet-pie-sko:c:participatory-budgeting-cardinal, pet-sko:laminar}.

\begin{description}
	\item[Method of Equal Shares] 
	Let $b_i$ be the virtual budget of voter $v_i$; initially $b_i := b_{\ini} = \nicefrac{b}{n}$.
	In each round, we say that a not yet elected project $c$ is \myemph{$\rho$-affordable} for $\rho \in \mathbb{R}_{+}$, if
	\begin{align*}
		\textstyle \cost(c) = \sum_{v_i \in V} \min\left(b_i, u_i(c) \cdot\rho\right).
	\end{align*}
	In a given round the method selects the $\rho$-affordable candidate for the lowest possible value of $\rho$ and updates the voters' accounts accordingly: $b_i := b_i - \min\left(b_i, u_i(c) \cdot\rho\right)$; then it moves to the next round.
	The rule stops if there is no $\rho$-affordable candidate for any value of $\rho$. 
\end{description}

The concept of $\rho$-affordability is crucial to our work. Intuitively, voters supporting a $\rho$-affordable candidate $c$ can cover its cost in such a way that each of them pays $\rho$ per unit of utility or all of their remaining funds. In simpler terms, $\rho$ represents the rate (price per unit of utility) at which the least advantaged supporter of the project would “purchase'' their satisfaction, if the project is selected.
Also, note that a situation in which a candidate is not $\rho$-affordable for all $\rho \in \mathbb{R}_{+}$ happens only if its supporters do not have enough money to cover its cost.

\begin{example}\label{ex:mes}
	
	\begin{table}[t]
		\centering
		\begin{tabularx}{12cm}{lc|XXXXXXXXXX}
			& cost & $v_1$  & $v_2$ & $v_3$ & $v_4$ & $v_5$ & $v_6$ & $v_7$ & $v_8$ & $v_9$ & $v_{10}$ \\
			\midrule
			Project A 
			& \$300k & \faCheck & \faCheck & \faCheck & \faCheck & \faCheck & \faCheck & & & & \\
			Project B  
			& \$400k & & \faCheck & \faCheck & \faCheck & \faCheck & \faCheck & & & & \\
			Project C  
			& \$300k & & \faCheck & \faCheck & \faCheck  & \faCheck  & & & & & \faCheck \\
			Project D 
			& \$240k & & & & & & & \faCheck & \faCheck & \faCheck & \faCheck \\
			Project E  
			& \$170k & & \faCheck & & & & & \faCheck & \faCheck & \faCheck & \\
			Project F  
			& \$100k & & & & & & \faCheck & & & \faCheck & \faCheck  \\
		\end{tabularx}
		\caption{An example of a PB election with $10$ voters and $6$ projects of varying costs. Approvals of voters towards projects are indicated by the {\faCheck} symbol. Given a budget of $\$1{,}000{,}000$, the Utilitarian Method selects $\{\text{A}, \text{B}, \text{C}\}$, the Method of Equal Shares selects $\{\text{A},\text{D},\text{E}\}$, while the BOS Equal Shares selects $\{\text{A},\text{C},\text{D},\text{F}\}$.}
		\label{tab:running_example}	
	\end{table}

	Consider the PB election depicted in \Cref{tab:running_example}, and assume cost-utilities.
	The Utilitarian Method would select projects solely based on their vote count, thus choosing Projects A, B, and C. 
	This seems unfair since a large fraction of the voters (namely voters $v_7$ to $v_9$ making up $30\%$ of the electorate) would not approve any of the selected projects.

	At the beginning, the Method of Equal Shares assigns \$100k to each voter. The table that appears below indicates the available (virtual) budget of each voter (in thousands of dollars).
	\begin{center}
		\setlength{\tabcolsep}{2pt}
		\begin{tabularx}{10cm}{c|YYYYYYYYYY}
			& $v_1$ & $v_2$ & $v_3$ & $v_4$ & $v_5$ & $v_6$ & $v_7$ & $v_8$ & $v_9$ & $v_{10}$ \\
			\midrule
			\ $b_i$\ \ & \small $100$ & \small $100$ & \small $100$ & \small $100$ & \small $100$ & \small $100$ & \small $100$ & \small $100$ & \small $100$ & \small $100$ \\
		\end{tabularx}
	\end{center}
	We first need to determine how affordable each project is. Project A is $\nicefrac{1}{6}$-affordable, as it can be funded if each of its supporters pays \$50k, which is $\nicefrac{1}{6}$ of its cost (recall that we assumed cost-utilities for this example).
	Analogously, each other project that received $x$ votes is $\nicefrac{1}{x}$-affordable.
	\begin{center}
		\setlength{\tabcolsep}{2pt}
		\begin{tabularx}{5cm}{c|aYYYYY}
			& A & B & C & D & E & F \\
			\midrule
			\ $\rho$\ \  & \nicefrac{1}{6} & \nicefrac{1}{5} & \nicefrac{1}{5} & \nicefrac{1}{4} & \nicefrac{1}{4} & \nicefrac{1}{3} \\
		\end{tabularx}
	\end{center}
	Thus, in the first round the rule simply selects the project with the highest vote count, namely project A. After paying its cost, voters $v_1$ to $v_6$ are left with $\$(100\text{k} - 300\text{k}/6) = \$50\text{k}$. Voters' remaining budget follows.
	\begin{center}
		\setlength{\tabcolsep}{2pt}
		\begin{tabularx}{10cm}{c|YYYYYYYYYY}
			& $v_1$ & $v_2$ & $v_3$ & $v_4$ & $v_5$ & $v_6$ & $v_7$ & $v_8$ & $v_9$ & $v_{10}$ \\
			\midrule
			\ $b_i$\ \ & \small $50$ & \small $50$ & \small $50$ & \small $50$ & \small $50$ & \small $50$ & \small $100$ & \small $100$ & \small $100$ & \small $100$ \\
		\end{tabularx}
	\end{center}
	
	In the second round, project B is no longer affordable, as its supporters do not have a total of at least \$400k to fund it. Project C is $\nicefrac{1}{3}$-affordable. Indeed, to fund it, voters $v_2$, $v_3$, $v_4$, $v_5$ and $v_{10}$ would have to use all their money. This means, in particular, that voter $v_{10}$ would pay \$100k out of \$300k which is $\nicefrac{1}{3}$ of the cost of the project. The rest of the projects remain $\rho$-affordable for the same values of $\rho$, as their costs can still be spread equally among their supporters. 
	\begin{center}
		\setlength{\tabcolsep}{2pt}
		\begin{tabularx}{5cm}{c|aYYaYY}
			& A & B & C & D & E & F \\
			\midrule
			\ $\rho$\ \ & -- & -- & \nicefrac{1}{3} & \nicefrac{1}{4} & \nicefrac{1}{4} & \nicefrac{1}{3} \\
		\end{tabularx}
	\end{center}
	Hence, projects D and E are both $\rho$-affordable for the smallest value of $\rho = \nicefrac{1}{4}$. Let us assume the former project is selected, by breaking ties lexicographically. After paying its cost, voters $v_7$ to $v_{10}$ are left with $\$(100\text{k} - 240\text{k}/4) = \$40\text{k}$.
	\begin{center}
		\setlength{\tabcolsep}{2pt}
		\begin{tabularx}{10cm}{c|YYYYYYYYYY}
			& $v_1$ & $v_2$ & $v_3$ & $v_4$ & $v_5$ & $v_6$ & $v_7$ & $v_8$ & $v_9$ & $v_{10}$ \\
			\midrule
			\ $b_i$\ \  & \small $50$ & \small $50$ & \small $50$ & \small $50$ & \small $50$ & \small $50$ & \small $40$ & \small $40$ & \small $40$ & \small $40$ \\
		\end{tabularx}
	\end{center}
	
	In the third round, project E is selected with $\rho = \nicefrac{5}{17}$ (see \Cref{app:examples} for details),
	which makes voters $v_2$, $v_7$, $v_8$, and $v_9$ run out of money.
	\begin{center}
		\setlength{\tabcolsep}{2pt}
		\begin{tabularx}{10cm}{c|YYYYYYYYYY}
			& $v_1$ & $v_2$ & $v_3$ & $v_4$ & $v_5$ & $v_6$ & $v_7$ & $v_8$ & $v_9$ & $v_{10}$ \\
			\midrule
			\ $b_i$\ \  & \small $50$ & \small $0$ & \small $50$ & \small $50$ & \small $50$ & \small $50$ & \small $0$ & \small $0$ & \small $0$ & \small $40$ \\
		\end{tabularx}
	\end{center}
	
	At this point, no project is affordable since the supporters of projects B, C and F have in total \$200k, \$150k and \$90k, respectively, therefore, the procedure stops having selected the outcome $\{A,D,E\}$.
	Note that the purchased bundle comes at a total cost of \$710k, which is \$290k less than the initially available budget.
	Thus, in principle, we could afford to additionally fund project F.
	However, the supporters of this project do not have enough (virtual) money to fund it,
	and so the project is not selected by the Method of Equal Shares. 
	
	Clearly, the selection made by the Method of Equal Shares is less discriminatory than the one by Utilitarian,
	as each voter approves at least one of the selected projects. 
	\hfill $\lrcorner$
\end{example}

\Cref{ex:mes} shows that Equal Shares is non-exhaustive: an outcome $W \subseteq C$ is \myemph{exhaustive} if it utilizes the available funds in a way that no further project can be funded, in other words if for each unelected candidate $c \notin W$ it holds that $\cost(W \cup \{c\}) > b$. While non-exhaustiveness itself may not be a critical flaw, a more concerning issue is that the method tends to significantly underspend the available funds. In real instances of participatory budgeting elections, Equal Shares allocates, on average, only 45\% of the available budget (see \Cref{sec:experiments}).

To deal with this issue, Equal Shares is typically used together with a completion strategy. 
An example of a well-performing strategy suggested in the literature is Add1U:
\begin{description}
	\item[Add1U] 
	We gradually increment the initial endowment $b_{\ini}$ by one unit
	and rerun the Method of Equal Shares from scratch, until it produces an outcome that exceeds the budget.
	Then we return the outcome computed for the previous value of $b_\ini$, hence the feasible outcome produced for the highest tested value of $b_\ini$. 
	Since the result may still be non-exhaustive (though typically at this point most of the funds are already spent), as the final step, we select affordable projects with the highest vote count until no further candidate can be added
	\citep{fal-fli-pet-pie-sko-sto-szu-tal:pb-experiments}.
\end{description}

\section{Limitations of the Method of Equal Shares}
\label{sec:challenging-instances}

In this section, we present concrete case studies which indicate that using the Method of Equal Shares in its basic form may lead to intuitively suboptimal solutions. 

\subsection{Helenka Paradox}
\label{ex:helenka-paradox}

The instance that follows comes from the PB elections held in 2020 in the Polish city of Zabrze, in district Helenka.%
\footnote{\url{https://pabulib.org/?search=Poland\%20Zabrze\%202021\%20Helenka}}
Two projects were proposed in this district, namely an expansion and modernization of sports facilities (to be called project A), and a plant sculpture (project B). 
Their costs and number of supporters are as follows.

\begin{center}
	\begin{tabularx}{8cm}{lr|YY}
		& cost & 403 voters  & 11 voters \\
		\midrule
		Project A & \$310k & \faCheck &  \\
		Project B  & \$6k & & \faCheck  \\
	\end{tabularx}
\end{center}

The budget is $b = \$310,000$ and assume cost utilities. 
The second group of 11 voters should intuitively be entitled to $\$(\nicefrac{11}{414} \cdot 310,000) \approx \$8,000$. 
In fact, any rule that satisfies EJR must select project~B and so does Equal Shares.	
However, selecting project~B precludes the inclusion of project~A within the budget constraint. 
This is highly counterintuitive since it leaves a great majority of voters (over 97\% of the electorate) empty handed, despite the fact that they commonly approve an affordable project. Thus, the Helenka Paradox serves not only as a critique of the Method of Equal Shares, but also of the prominent axiom of EJR itself.

In order to solve the indicated problem, additional strategies could be employed. 
Cities may compare the outcomes returned by the Method of Equal Shares and by the standard Utilitarian Method. 
If ballots indicate that more voters prefer the outcome of the standard Utilitarian Method, then it could be selected.
An alternative solution is to put an upper bound on the initial cost of the projects. 
Interestingly, while working on the experimental part of our work, we observed no evident paradoxes like the discussed one in data from elections where the cost of the projects did not exceed 30\% of the budget. 
While these two solutions can work well in practice, neither is perfect. 
The runoff approach might potentially result in a utilitarian solution where some groups of voters are underrepresented.
Moreover, imposing an upper limit on project costs might exclude some worthwhile and highly popular ideas, especially in small-scale elections.

\subsection{Tail Utilities}
\label{ex:tail-utils}

Consider the following election with $m=2$ candidates, and $n=100$ voters casting ballots via range voting, as follows (the values in the table indicate the assigned scores).

\begin{center}
	\begin{tabularx}{7.5cm}{lr|YY}
		& cost & 99 voters  & 1 voter \\
		\midrule
		Project A & \$1 & 100 & 1 \\
		Project B & \$1 & 2 & 2  \\
	\end{tabularx}
\end{center}

Assume that the budget is $b = \$1$. 
Under the Method of Equal Shares, all voters have to pay all their virtual money to cover the cost of the one project that will be selected.
As a result, if project A is selected, 99 voters will pay $\$0.0001$ per unit of utility and 1 voter will pay $\$0.01$ per unit of utility.
In turn, for project B, all voters will pay $\$0.005$ per unit of utility.
Thus, project A is $\nicefrac{1}{100}$-affordable while project B is $\nicefrac{1}{200}$-affordable,
and the rule selects project B, even though 99\% of voters consider project A as a much better option.
This is because the Method of Equal Shares is in some sense egalitarian: when assessing the quality of a candidate,
it essentially considers the utility assigned to the candidate by the least satisfied voter among those covering its cost. 

Note that the presented problem does not appear in approval elections. 
We observed this issue when applying Equal Shares to certain range voting committee elections.

\section{Method of Fractional Equal Shares}
\label{sec:fres}

To develop intuitions required for the introduction of BOS Equal Shares,
we first present Fractional Equal Shares (FrES)---an adaptation of Equal Shares
to fractional PB,
where projects can be partially funded.
The idea is simple: a fraction $\alpha$ of a candidate $c$ can be bought for the corresponding fraction of its cost: $\alpha \cdot \cost(c)$. 
If such a purchase is made, each voter $v_i \in V$ receives the utility of $\alpha \cdot u_i(c)$. 

Let us start by extending the notion of $\rho$-affordability to project fractions.
For $\alpha \in (0,1]$, we say that a candidate $c$ is \myemph{$(\alpha,\rho)$-affordable} if its $\alpha$ fraction can be bought with ratio $\rho$ such that
\begin{equation}
\label{eq:alpha:rho:affordable}
	\textstyle \alpha \cdot \cost(c) = \sum_{v_i \in V} \min(b_i, \alpha \cdot u_i(c) \cdot \rho).
\end{equation}

\begin{description}
	\item[Fractional Equal Shares] 
	Let $b_i$ be the virtual budget of voter $v_i$, initially set to $b_i := \nicefrac{b}{n}$.
	In each round,
	the method selects the candidate $c$ which is $(\alpha,\rho)$-affordable for the lowest possible value of $\rho$ and buys the largest possible fraction $\alpha$ for which the candidate remains $(\alpha,\rho)$-affordable.
	Then, the accounts of the supporters of $c$ are updated accordingly: $b_i := b_i - \min(b_i, \alpha \cdot u_i(c) \cdot \rho)$.
	The method stops when no further fraction of a project can be selected within the budget.
\end{description}

Let us explain this method in more detail. Consider a partially funded candidate $c$ and let $S$ be the set of its supporters who still have money. Note that as $\alpha$ increases, the ratio $\rho$ cannot decrease, meaning the minimum value of $\rho$ occurs at small $\alpha$. If $\alpha$ is small enough, the candidate can be funded with all voters contributing proportionally to their utilities, since no one exhausts their funds,
i.e., in $\min(b_i, \alpha \cdot u_i(c) \cdot \rho)$, it is always $\alpha \cdot u_i(c) \cdot \rho$ that is smaller (or equal). Then, from Eq.~\eqref{eq:alpha:rho:affordable} we obtain that
\[ \rho = \frac{\cost(c)}{\sum_{v_i \in S} u_i(c)}. \]
Hence, Fractional Equal Shares selects the project with the lowest such $\rho$ value and covers the fraction of the candidate's cost with payments proportional to the voters' utilities. This fraction is determined by the first moment a supporter exhausts their funds or the selected project is fully funded, whichever comes first. 
Hence, the rule runs in polynomial time. 
Moreover,
\[ \alpha = \min \left(1 - W_c, \min_{v_i \in S} \frac{b_i}{\rho \cdot u_i(c)} \right), \]
where $W_c$ is the fraction of the project $c$ bought already. 
\Cref{algorithm:fres} contains the pseudo-code of the Fractional Equal Shares.

\begin{algorithm}[t]
	\SetKw{Return}{return}
	\SetKw{Input}{Input:}
	\SetKw{Break}{break}
	\captionsetup{labelfont={sc,bf}, labelsep=newline}
	\Input{A PB election $(C,V,b)$\\}
	\DontPrintSemicolon
	\SetAlgoNoEnd
	\SetAlgoLined
	$W_c \gets 0 \text{~for each~} c \in C$ \; 
	$b_i \gets \nicefrac{b}{n}  \text{~for each~} v_i \in V$ \; %
	$S \gets V$ \;
	\While{\emph{exists} $c$ \emph{such that} $W_c \neq 1$ \emph{and} $\sum_{v_i \in S} u_i(c) > 0$}{
		$c \gets \argmin_{c \in C : W_c \neq 1} \left(\cost(c)/\sum_{v_i \in S} u_i(c)\right)$ \; 
		$\rho \gets \ \cost(c)/\sum_{v_i \in S} u_i(c)$ \;
		$\alpha \gets \min(1-W_c, \min_{v_i \in S} b_i / (\rho \cdot u_i(c)))$\;
		$W_{c} \gets W_{c} + \alpha$ \;
		\For{$v_i \in S$}{
			$b_i \gets b_i - \alpha \cdot \rho \cdot u_i(c)$ \;
			\If{$b_i = 0$}{
				$S \gets S \setminus \{v_i\}$ \;
			}	
		}
	}

	\Return{W}\;
	\caption{Pseudo-code of Fractional Equal Shares.}
	\label{algorithm:fres}
\end{algorithm}

\begin{examplecont}
	As an illustration, we review \Cref{ex:mes} this time applying Fractional Equal Shares.
	Again, we assume cost-utilities.
	Note that in such a setting, in each round, FrES selects the project
	with the most supporters who still have money left.
	The method purchases the largest portion of the project that can be covered with equal payments of the supporters.
	
	In the first round, as in the Method of Equal Shares, project A is chosen.
	It is bought in full, as its whole price can be split equally between supporters, each paying \$50k.
The voters' remaining funds can be seen in the table that follows.
	\begin{center}
		\setlength{\tabcolsep}{2pt}
		\begin{tabularx}{10cm}{c|YYYYYYYYYY}
			& $v_1$ & $v_2$ & $v_3$ & $v_4$ & $v_5$ & $v_6$ & $v_7$ & $v_8$ & $v_9$ & $v_{10}$ \\
			\midrule
			\ $b_i$\ \ & \small $50$ & \small $50$ & \small $50$ & \small $50$ & \small $50$ & \small $50$ & \small $100$ & \small $100$ & \small $100$ & \small $100$ \\
		\end{tabularx}
	\end{center}
	
	In the second round, project B or C can be selected. Let us assume project C is selected. Since voters $v_2$ to $v_5$ have only \$50k left, only $\nicefrac{5}{6}$ of it is bought and each voter pays \$50k.
	\begin{center}
		\setlength{\tabcolsep}{2pt}
		\begin{tabularx}{6cm}{c|aYaYYY}
			& A & B & C & D & E & F \\
			\midrule
			\ $\rho$\ \  & -- & \nicefrac{1}{5} & \nicefrac{1}{5} & \nicefrac{1}{4} & \nicefrac{1}{4} & \nicefrac{1}{3} \\
			\midrule
			\ $\alpha$\ \  & 1 & & $\nicefrac{5}{6}$ \\
		\end{tabularx}
	\end{center}
	The following table depicts the remaining amount of money of each voter.
	\begin{center}
		\setlength{\tabcolsep}{2pt}
		\begin{tabularx}{10cm}{c|YYYYYYYYYY}
			& $v_1$ & $v_2$ & $v_3$ & $v_4$ & $v_5$ & $v_6$ & $v_7$ & $v_8$ & $v_9$ & $v_{10}$ \\
			\midrule
			\ $b_i$\ \ & \small $50$ & \small $0$ & \small $0$ & \small $0$ & \small $0$ & \small $50$ & \small $100$ & \small $100$ & \small $100$ & \small $50$ \\
		\end{tabularx}
	\end{center}
	
	Since voters $v_2$ to $v_5$ run out of money, in the third round, project B has only one supporter and project E gets 3 votes instead of 4.
	Hence, $\nicefrac{5}{6}$ of project D is bought.
	In the following rounds, the remaining $\nicefrac{1}{6}$ of project D is purchased,
	then $\nicefrac{11}{17}$ of project E,
	and finally $\nicefrac{1}{2}$ of project F
	(see \Cref{app:examples} for details).
	This leads to the following selection.
	\begin{center}
		\setlength{\tabcolsep}{2pt}
		\begin{tabularx}{6cm}{aYaaaa}
			A & B & C & D & E & F \\
			\midrule
			1 & & $\nicefrac{5}{6}$ & 1 & \nicefrac{11}{17} & \nicefrac{1}{2} \\
		\end{tabularx}
	\end{center}
	After that, the only voter with positive amount of money is $v_1$ who is left with \$50k.	
	However, the only project that $v_1$ supports has already been fully bought.
	Thus, FrES concludes.

	As a result, in our example FrES allocated \$550k to projects A--C
	and \$400k to projects D--F, while the Method of Equal Shares allocated \$300k to the former and \$410k to the latter.
	In what follows, we will propose a modification of Equal Shares
	that is more aligned with the outcomes of FrES
	and in this example spends more funds on projects A--C than on D--F.
	\hfill $\lrcorner$ 
\end{examplecont}

Since FrES may also not spend the whole budget
we can complete its outcomes in the utilitarian fashion,
i.e., 
buying projects that maximize total utility per cost until the budget is exhausted.

In the discrete model, it is established that, unless P = NP, no election rule computable in (strongly) polynomial time can satisfy EJR. 
This hardness result stems from a reduction from
the \textsc{knapsack} problem~\citep{pet-pie-sko:c:participatory-budgeting-cardinal},
and holds even for instances with a single voter.
However, this does not extend to the fractional setting as
a simple greedy algorithm can solve fractional \textsc{knapsack} optimally---in fact, FrES applied to an instance with a single voter
is equivalent to such an algorithm and
returns the optimal solution.
This opens the possibility that a polynomial-time rule, like FrES, could indeed satisfy (the fractional analog of) EJR. In what follows, we will show that this is indeed the case.

\begin{definition}[\textbf{Fractional EJR}]\label{frejr}
	A group of voters $S \subseteq V$ is $(T, \beta, \gamma)$-cohesive for $T\subseteq C$, $\beta:C\rightarrow [0,1]$, and $\gamma:C\rightarrow \mathbb{R}_{\ge 0}$ if
	\begin{enumerate}
		\item $\sum_{c\in T}\cost(c)\cdot \beta(c)  \leq b \cdot (|S| / n)$, and 
		\item $u_i(c)\cdot \beta(c)\geq \gamma(c)$ for all $c\in T$ and $v_i\in S$.
	\end{enumerate}
	A fractional outcome $W$ satisfies Fractional EJR if for every  $(T, \beta, \gamma)$-cohesive group of voters $S$ there is a voter $v_i\in S$ for which $\sum_{c\in C}u_i(c)\cdot W_c \geq \sum_{c\in T} \gamma(c)$. \hfill $\lrcorner$
\end{definition}

By fixing $\beta(c)=1$ for all $c \in C$, and by admitting only the integral solutions (that is, assuming $W_c \in \{0,1\}$),
we get the standard definition of EJR
for the integral PB model and general utilities \citep{pet-pie-sko:c:participatory-budgeting-cardinal}.
If we also fix $\gamma(c)=1$ and $u_i(c)\in \{0,1\}$ for each voter $v_i$ and candidate $c$,
we get the classic definition of EJR for approval-based committee voting~\citep{aziz2017justified}.

Additionally, if applied to approval ballots with cost utilities,
\Cref{frejr} is equivalent to Cake EJR proposed by \citet{bei2024truthful}.
Fractional Equal Shares under approval ballots with cost utilities is equivalent to
\myemph{Generalized Method of Equal Shares} introduced by \citet{lu2024approval},
who showed that their rule satisfies Cake EJR.
In the following theorem,
we generalize this result
and show that FrES satisfies Fractional EJR under arbitrary additive utilities.\footnote{Proofs of results marked by $\spadesuit$ are deferred to \Cref{app:proofs}.}

\begin{restatable}[$\spadesuit$]{theorem}{frejrthm}\label{frejr_thm}
	FrES satisfies Fractional EJR.
\end{restatable}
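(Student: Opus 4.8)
The plan is to argue by contradiction, in the style of the EJR proofs for the Method of Equal Shares. Suppose $S$ is $(T,\beta,\gamma)$-cohesive and yet every voter $v_i \in S$ receives utility $\sum_{c} u_i(c) W_c < \Gamma := \sum_{c\in T}\gamma(c)$ from the fractional outcome $W$ produced by FrES. If $\Gamma = 0$ the claim is trivial, so assume $\Gamma > 0$; I may also discard from $T$ every project $c$ with $\gamma(c)=0$, so that after this reduction every $c \in T$ has $\gamma(c)>0$, and hence, by cohesiveness condition~2, $u_i(c) > 0$ for all $v_i \in S$ (each $c \in T$ is unanimously supported by $S$). Throughout I will use two elementary structural facts about FrES, both immediate from the definition: (i) the ratios $\rho^{(1)} \le \rho^{(2)} \le \cdots$ used in successive rounds are non-decreasing — once a project is $(\alpha,\rho)$-affordable, dropping voters who run out of money can only raise every project's ratio, and FrES always selects the current minimum; and (ii) in the round in which $c^{(t)}$ is bought, every still-active voter pays exactly $\alpha^{(t)}\rho^{(t)}u_i(c^{(t)})$, i.e.\ pays the rate $\rho^{(t)}$ per unit of utility gained.

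Next I would record two consequences. First, summing cohesiveness condition~2 over $c \in T$ and then over $v_i \in S$ gives $\sum_{c\in T}\beta(c)\sum_{v_i\in S}u_i(c) \ge |S|\,\Gamma$, while condition~1 gives $\sum_{c\in T}\beta(c)\cost(c) \le b\,|S|/n$; dividing, and using that a weighted average dominates its minimum, there is a project $c^\dagger \in T$ with $\cost(c^\dagger)/\sum_{v_i\in S}u_i(c^\dagger) \le b/(n\Gamma)$ — intuitively, the group can afford to drive the ratio down to $b/(n\Gamma)$ on at least one of its projects. Second, I would show that under the contradiction hypothesis every voter of $S$ must run out of money: each $v_i \in S$ strictly values the still-unfunded part of its bundle, since $\sum_{c\in T}u_i(c)\max(0,\beta(c)-W_c) \ge \Gamma - u_i(W) > 0$, so $v_i$ supports some project with $W_c < 1$, and were $v_i$ to finish with money left, the termination condition of FrES would be violated. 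Consequently the voters of $S$ pay $b\,|S|/n$ in total, and, using (i)--(ii), each $v_i \in S$ satisfies $b/n = \sum_{t\le\tau_i}\alpha^{(t)}\rho^{(t)}u_i(c^{(t)}) \le \rho^{(\tau_i)} u_i(W)$, where $\tau_i$ is the round in which $v_i$ exhausts; hence $u_i(W) \ge b/(n\,\rho^{(\tau_i)})$.

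It remains to exhibit a voter $v_i \in S$ whose exhaustion round carries rate $\rho^{(\tau_i)} \le b/(n\Gamma)$, which by the last inequality forces $u_i(W) \ge \Gamma$, a contradiction. Consider the run of FrES up to the start of the first round carrying rate strictly above $b/(n\Gamma)$ (if no such round exists, the first voter of $S$ to run out does so at rate $\le b/(n\Gamma)$ and we are done). If some voter of $S$ has already exhausted by then, it did so at rate $\le b/(n\Gamma)$ and again we are done. Otherwise all of $S$ is still active at that point, so the projects of $T$ split into those already fully funded and those whose current ratio — which, since all of $S$ is active, is at most $\cost(c)/\sum_{v_i\in S}u_i(c)$ — must exceed $b/(n\Gamma)$. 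Feeding this dichotomy back into the averaging identity $\sum_{c\in T}\beta(c)\sum_{v_i\in S}u_i(c) \ge |S|\,\Gamma$ forces some already-fully-funded project of $T$ to have been bought, while all of $S$ was active, at a rate strictly below $b/(n\Gamma)$; iterating the argument on the still-unfunded projects of $T$, with the budget and the target decremented by what this project cost and delivered, the recursion terminates with a voter of $S$ who has banked utility $\ge \Gamma$. I expect this last step — the amortized/inductive bookkeeping that keeps the cost and utility ledgers of $S$ aligned as the cheap projects of $T$ get consumed at different times and interleave with payments on projects outside $T$ — to be the main obstacle; it is precisely where the analogous EJR proofs for the Method of Equal Shares become technical, and it is also where the degenerate cases ($\Gamma=0$, projects unsupported by $S$, voters who never exhaust) must be folded in.
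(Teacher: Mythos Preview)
Your preliminary observations are correct and useful: the monotonicity of the rates, the proportional payments, the fact that every $v_i\in S$ must exhaust her budget under the contradiction hypothesis, and the inequality $u_i(W)\ge b/(n\rho^{(\tau_i)})$ all hold. The averaging argument producing a project $c^\dagger\in T$ with $\cost(c^\dagger)/\sum_{v_j\in S}u_j(c^\dagger)\le b/(n\Gamma)$ is also fine.

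The gap is exactly where you flagged it, and it is real rather than merely technical. The intermediate target you set --- ``exhibit a voter $v_i\in S$ with $\rho^{(\tau_i)}\le b/(n\Gamma)$'' --- is false in general. Take $n=1$, $S=\{v_1\}$, $T=\{c_1,c_2\}$ with $\beta\equiv 1$, $\gamma\equiv 1$, $u_1(c_1)=u_1(c_2)=1$, $\cost(c_1)=b/4$, $\cost(c_2)=3b/4$. Cohesiveness holds with $\Gamma=2$ and $b/(n\Gamma)=b/2$. FrES buys $c_1$ at rate $b/4$ and then $c_2$ at rate $3b/4$; the sole voter of $S$ exhausts at rate $3b/4>b/2$. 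EJR is still satisfied ($u_1(W)=2$), but not via your bound. Your proposed recursion does not rescue this: decrementing the target to $\Gamma'=\Gamma-\gamma(c_1)=1$ raises the threshold to $b/(n\Gamma')=b$, and now the exhaustion rate $3b/4\le b$ gives only $u_1(W)\ge b/(n\cdot 3b/4)=4/3<2$. You cannot add back the ``banked'' $\gamma(c_1)=1$ either, because the bound $u_1(W)\ge 4/3$ already counts the utility from $c_1$; adding would double-count. The same obstruction arises whenever $T$ contains projects with heterogeneous per-utility costs, which is the generic case.

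What is missing is precisely the device the paper uses: instead of bounding the exhaustion \emph{rate}, it tracks the money a fixed voter of $S$ has spent as a function of the utility she has accumulated, and dominates this function pointwise by the corresponding function for an idealised run restricted to $S$ and $T$ (with utilities replaced by $\min_{v_j\in S}u_j(c)$ and purchases capped at $\beta(c)$). The domination is proved by comparing left derivatives, i.e.\ rates, at each utility level --- which is the clean amortised version of the ``iterate and decrement'' idea you were reaching for. A direct calculation then shows that in the idealised run the voter reaches utility $\sum_{c\in T}\beta(c)\min_{v_j\in S}u_j(c)\ge\Gamma$ while spending at most $b/n$, and the domination transfers this to the actual run. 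Your setup feeds naturally into this argument; what it lacks is the money-vs-utility comparison, which replaces the incorrect single-threshold bound on $\rho^{(\tau_i)}$.
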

\toappendix{
  \sv{\frejrthm*}
\begin{proof}
	Consider an $(T,\beta,\gamma)-$cohesive set of voters $S$.
	We will prove that there exists a voter in $S$ for which the outcome of FrES results in the satisfaction of at least
	$\tilde{\gamma}=\sum_{c\in T}\min_{v_i \in S}u_i(c)\beta(c)\geq \sum_{c\in T}\gamma(c)$.
	Without loss of generality, let us assume that $\gamma(c) > 0$ for every $c \in T$.
	Our proof follows a similar strategy to the proof that MES satisfies EJR up-to-one~\citep{pet-pie-sko:c:participatory-budgeting-cardinal}.
	We examine runs of the following variants of FrES:
	\begin{itemize}
		\item[(A)] FrES in the original instance and formulated as in \cref{algorithm:fres}.
		\item[(B)] FrES where voters from $S$ can go with their money $b_i$ below zero
		when paying for projects in $T$
		(so they do not have a budget constraint when they pay for these projects,
		but they do have when paying for the rest),
		and each project $c \in T$ can be bought up to an amount of $\beta(c)$.
		\item[(C)] FrES in the instance truncated to the voters from $S$ and projects from $T$.
		The utility of a voter $v_i$ from a project $c$ is set to $u_i(c)=\min_{j\in S}u_j(c)$.
		Additionally, each voter from $S$ can go with their money $b_i$ below zero
		and each project $c \in T$ can be bought up to an amount of $\beta(c)$.
	\end{itemize}
	
	For each variant of the rule $\calR$ and each iteration $t$, we denote the value of the variables from \cref{algorithm:fres} 
	after $t$ by putting $\calR$ and $t$ as superscripts; for example, $b^{A,t}_i$ or $W^{B,t}_c$. The initial values are indicated with $t = 0$, while the final values are written without an iteration number.

	Observe that in variants (B) and (C) the utility of all voters from $S$ towards projects in $T$
	is strictly positive and no voters from $S$ have budget constraint for these projects.
	Thus, at the end of those procedures each project $c$ from $T$ will be bought up to $\beta(c)$.
	If at the end of (B) no voter $v_i$ from $S$ has negative $b_i$,
	then the solution of (B) coincides with the solution of (A).
	Thus, in such a case, for every project $c \in T$, 
	 at least $\beta(c)$ of this project will be bought in $W^A$.
	Therefore, every voter $v_i$ in~$S$ would have a total satisfaction of at least
	\begin{align*}
	\sum_{c\in T}u_i(c)\beta(c)\geq \sum_{c\in T}\min_{i\in S}u_i(c)\beta(c) =\tilde{\gamma} \text{.}
	\end{align*}
	
	Hence, in the remainder of the proof, we will consider the case
	in which a voter from $S$ overspent in (B)
	and denote the first such a voter by $v_i$.
	
	Let $f_A(x)$ be the total amount of money spent by $v_i$
	during the execution of variant (A)
	at the point when $v_i$ has exactly the utility of $x$.
	Formally, for every $x \in (0, u_i(W^A)]$,
	\[
	f_A(x) = (b/n - b^{A,t-1}_i) + \rho^{A,t} (x - u_i(W^{A,t-1})),
	\quad
	\mbox{where } t \mbox{ is such that } x \in (u_i(W^{A,t-1}), u_i(W^{A,t})].
	\]
	Here, $(b/n - b^{A,t-1}_i)$ is the money spent in the first $t-1$ rounds
	and $\rho^{A,t} (x - u_i(W^{A,t-1}))$ is the money spent in the $t$-th round
	just to the point of obtaining utility $x$.
	Let us analogously define $f_B(x)$ and $f_C(x)$.
	In what follows we will prove that
	(1) $f_C(\tilde{\gamma}) \le b/n$,
	(2) $f_B(x) \le f_C(x)$ for every $x \le \tilde{\gamma}$, and
	(3) $f_B(x) = f_A(x)$ for $x$ such that $f_B(x) = b/n$.
	Together, these three statements will imply the thesis.
	
	(1) Since in (C) the voters do not have budget constraints
	we buy all candidates from $T$,
	up to the level of $\beta$
	in increasing order of $\rho$.
	In every iteration $t \in [|T|]$, we buy candidate $c^{C,t}$ with $\rho^{C,t}$ equal to :
	\[
	\rho^{C,t} = \frac{\cost(c^{C,t})}{\sum_{v_j \in S} \min_{v_j \in S}u_j(c)} = \frac{\cost(c^{C,t})}{|S| \min_{v_j \in S}u_j(c)}.
	\]
	As we buy $\beta(c^{C,t})$ fraction of this candidate, we have
	\[
	b^{C,t}_i - b^{C,t-1}_i = \beta(c^{C,t}) \cdot \min_{v_j \in S}u_j(c) \cdot \rho = \beta(c^{C,t}) \cdot \frac{\cost(c^{C,t})}{|S|}.
	\]
	Summing this up for all $t \in [|T|]$, we have
	\[
	b^{C,|T|}_i - b^{C,0}_i =  \frac{1}{|S|} \sum_{c \in T} \beta(c) \cdot \cost(c),
	\]
	which by the first assumption of \Cref{frejr}, is not greater than $b/n$.
	Since at the end of iteration $|T|$ in (C), the utility of voter $v_i$ is exactly $\tilde{\gamma}$,
	we have that indeed, $f_C(\tilde{\gamma}) \le b/n$.
	
	(2) To show that $f_B(x) \le f_C(x)$ for every $x \le \tilde{\gamma}$,
	we will show that the left derivative of $f_B(x)$ is not greater than that of $f_C(x)$,
	for every such $x$.
	Intuitively, this means that under (B) the money of voter $v_i$ is spent 
	at least as effectively as under (C).
	Observe that the left derivative of $f_B(x)$ is just $\rho^{B,t}$
	for $t$ such that $x \in (u_i(W^{B,t-1}), u_i(W^{B,t})]$ and analogously for $f_C(x)$.
	Thus, for a contradiction, assume that there is $x \le \tilde{\gamma}, t$, and $t'$ such that
	$x \in (u_i(W^{B,t-1}), u_i(W^{B,t})]$,
	$x \in (u_i(W^{C,t'-1}), u_i(W^{C,t'})]$,
	and $\rho^{B,t} > \rho^{C,t'}$.
	Observe that for each candidate $c \in T$,
	its $\rho$ in variant (B) is not greater than in variant (C)
	as utilities for $c$ of voters in $S$ can only be larger than in (C)
	and additionally, other voters can also pay for it.
	Thus, since $\rho^{B,t} > \rho^{C,t'}$, this means that at iteration $t$ in (B)
	candidate $c^{C,t'}$ is already bought up to $\beta(c^{C,t'})$
	(otherwise we should buy candidate $c^{C,t'}$ as it has better $\rho$).
	Moreover, for every $t'' < t'$ we have that $\rho^{B,t} > \rho^{C,t'} > \rho^{C,t''}$,
	thus candidate $c^{C,t''}$ is also already bought up to $\beta(c^{C,t''})$
	at the start of the iteration $t$ in (B).
	However, this allows us to bound the utility of $v_i$ in iteration $t$ in (B) as follows
	\[
	x > \sum_{t'' \in [t']} \beta(c^{C,t''}) \cdot u_i(c^{C,t''})
	\ge \sum_{t'' \in [t']} \beta(c^{C,t''}) \cdot \min_{v_j \in S} u_j(c^{C,t''})
	\ge x,
	\]
	where the last inequality holds as $v_i$ obtains utility $x$ during iteration $t'$ in (C).
	But this is a contradiction.
	
	(3) Finally, let $t$ be the first iteration in which the money of voter $v_i$ crossed $0$,
	and let $p \in T$ be a project that was selected in this iteration.
	Observe that in all iterations up to $t$, the executions of (A) and (B) coincided.
	Moreover, unless $b^{B,t-1}_i$ is exactly zero,
	in iteration $t$ project $p$ was also selected by (A) as its $\rho$
	and $\rho$s of all other projects were the same.
	The difference is that in (A) the smaller fraction of $p$ was bought
	so as to leave $v_i$ with exactly zero money.
	Thus, $f_B(x) = f_A(x)$ for $x$ such that $f_B(x) = b/n$.
	Moreover, if such $x$ would be smaller than $\tilde{\gamma}$,
	then from the fact that $f_B$ is strictly increasing we would have that
	$f_B(x) < f_B(\tilde{\gamma}) \le f_C(\tilde{\gamma}) \le b/n$,
	which is a contradiction.
	Therefore, the utility of $v_i$ at the end of iteration $t$ in (A) is at least $\tilde{\gamma}$
	and at the end of the algorithm it cannot be smaller,
	which concludes the proof.
\end{proof}
}


Aspects of fairness and proportionality
in models related to our fractional setting have also been examined by \citet{fain2016core,kroer2025computing,aziz2019fair,bogomolnaia2005collective,brandl2021distribution,munagala2022approximate,suzuki2024maximum}.

\section{Method of Equal Shares with Bounded Overspending}
\label{sec:bos}

In this section we build upon the idea behind Fractional Equal Shares, and design a new method for the standard (integral) model of participatory budgeting.
The Method of Equal Shares with Bounded Overspending, in short, BOS Equal Shares, or just BOS, can be viewed as a rounding procedure for FrES, but also as a variant of Equal Shares, where voters may occasionally spend more than their initial entitlement.

Under the fractional rule, an $\alpha$-fraction of a candidate can be purchased for a corresponding fraction of its cost; then each voter receives a proportional fraction of the utility. 
However, in the integral model, fractional purchases are not possible. 
In BOS, we simulate buying an $\alpha$-fraction of a candidate, still assuming the voters gain fractions of utilities. However, the voters are now required to cover the full cost.
The distribution of the payments is proportional to the one computed for the $\alpha$-fraction.
Specifically, for $\rho$ satisfying
\begin{align*}
	\alpha \cdot \cost(c) = \sum_{v_i \in V} \min(b_i, \alpha \cdot u_i(c) \cdot \rho),
\end{align*}
a voter $v_i$ supporting $c$ has to pay $p_i(c) = \min(b_i, \alpha \cdot u_i(c) \cdot \rho)/\alpha$.
Notice that we divide by $\alpha$ as we now need to cover the entire cost of the project,
not only its $\alpha$ fraction.
Assuming $\alpha \cdot u_i(c)$ is the utility from candidate $c$, the highest payment per unit of utility is then $\rho / \alpha$.
Observe that $p_i(c)$ can be actually greater than $b_i,$ i.e., the remaining budget of voter $v_i$.
In such a case we say that voter $v_i$ is \myemph{overspending}
and we set its account to zero.

\begin{description}
	\item[BOS Equal Shares] 
	Let $b_i$ be the virtual budget of voter $v_i$, initially set to $b_i := \nicefrac{b}{n}$. 
	Each round, among all candidates that fit within the remaining budget, BOS buys the $(\alpha, \rho)$-affordable candidate $c$ with the lowest possible value of $\rho/\alpha$. 
	The accounts of the supporters of $c$ are updated accordingly: $b_i := \max(0, b_i - u_i(c) \cdot \rho)$. 
	The method stops if no remaining candidate fitting within the budget is $(\alpha, \rho)$-affordable for some $\alpha \in (0,1]$ and $\rho \in \mathbb{R}_+$.
\end{description}
Note that voters without money, including these that overspent before, do not have an impact on the further decisions.

\SetKw{Return}{return}
\SetKw{Input}{Input:}
\begin{algorithm}[t]
	\small
	\captionsetup{labelfont={sc,bf}, labelsep=newline}
	\caption{Pseudo-code of the Method of Equal Shares with Bounded Overspending.}\label{alg:bounded-overspending}
	\Input{A PB election $(C,V,b)$\\}
	\DontPrintSemicolon
	\SetAlgoNoEnd
	\SetAlgoLined
	$W \gets \emptyset$ \; 
	$b_i \gets \nicefrac{b}{n}  \text{~for each~} v_i \in V$ \;
	\While{$C' = \{ c \in C \setminus W : \cost(c) \le b-\cost(W) \mbox{ \emph{and} } \sum_{v_i \in V: u_i(c) > 0} b_i > 0\}$ \emph{is nonempty}}{
		$(\alpha^*, \rho^*, c^*) \gets (1, +\infty, c)$ \;
		
		\For{$c \in C'$}{
			$\lambda' \gets \lambda$ satisfying $\cost(c) = \sum_{i=1}^n \min(b_i, u_i(c) \cdot \lambda)$ or $+\infty$ if there is no such $\lambda$, i.e., $\cost(c)>\sum_{v_i \in V : u_i(c) > 0}b_i$\;
			\For{$\lambda \in \{b_i/u_i(c) : v_i \in V, b_i >0, u_i(c) > 0\} \cup \{\lambda'\}$}{
				$\alpha \gets \min(\left(\sum_{i=1}^n \min(b_{i},u_{i}(c) \cdot \lambda)\right)/\cost(c),1)$ \;
				$\rho \gets \lambda / \alpha$ \;
				\If{$\rho/\alpha < \rho^*/\alpha^*$}{
					$(\alpha^*, \rho^*, c^*) \gets (\alpha, \rho, c)$ \;
				}
			}
		}
		$W \gets W \cup \{c^*\}$ \;
		\For{$v_i \in V$ \emph{such that} $b_i > 0$ \emph{and} $u_i(c^*) > 0$}{
			$b_i \gets \max(0, b_i - u_i(c^*) \cdot \rho^*)$ \;	
		}
	}
	\Return{W}\;
\end{algorithm}

We provide a pseudo-code of the Method of Equal Shares with Bounded Overspending in \Cref{alg:bounded-overspending} and 
we begin the discussion by analyzing how BOS works on our running example.

\begin{examplecont}
	Consider the instance depicted in \Cref{tab:running_example}.
	In the first round, BOS, as the Method of Equal Shares, selects project A,
	which is the one with the highest number of votes,
	and equally distributes its cost among the supporters
	(see \Cref{app:examples} for details).
	After the purchase, voters have the following amount of money.
	\begin{center}
		\setlength{\tabcolsep}{2pt}
		\begin{tabularx}{10cm}{l|YYYYYYYYYY}
			& $v_1$ & $v_2$ & $v_3$ & $v_4$ & $v_5$ & $v_6$ & $v_7$ & $v_8$ & $v_9$ & $v_{10}$ \\
			\midrule
			\ $b_i$\ \  & \small $50$ & \small $50$ & \small $50$ & \small $50$ & \small $50$ & \small $50$ & \small $100$ & \small $100$ & \small $100$ & \small $100$ 
		\end{tabularx}
	\end{center}			
	
	In the second round the remaining budget is $\$700\text{k}$. 	
	The following table shows values of $\rho$ for which each project is $(\alpha, \rho)$-affordable
	for each $\alpha \in \{1, \nicefrac{5}{6}, \nicefrac{5}{8}\}$.
	\begin{center}
		\setlength{\tabcolsep}{2pt}
		\begin{tabularx}{7cm}{l|aYaYYY}
			& A & B & C & D & E & F \\
			\midrule
			\ $\alpha = 1, \rho$\ \  & -- & -- & \nicefrac{1}{3} & \nicefrac{1}{4} & \nicefrac{1}{4} & \nicefrac{1}{3} \\
			\ $\alpha = \nicefrac{5}{6}, \rho$\ \  & -- & -- & \nicefrac{1}{5} & \nicefrac{1}{4} & \nicefrac{1}{4} & \nicefrac{1}{3} \\
			\ $\alpha = \nicefrac{5}{8}, \rho$\ \ & -- & \nicefrac{1}{5} & \nicefrac{1}{5} & \nicefrac{1}{4} & \nicefrac{1}{4} & \nicefrac{1}{3} \\
		\end{tabularx}
	\end{center}
	Since voters $v_1$ to $v_6$ have only $\$50\text{k}$ each, project B cannot be purchased in full.
	However, its supporters can cover $\alpha = \nicefrac{250}{400} = \nicefrac{5}{8}$ of its cost.
	This cost would be equally spread among five voters, so $\rho = \nicefrac{1}{5}$ and the ratio ($\rho$ scaled by $\alpha$) equals $\rho/\alpha = \nicefrac{8}{25}$.
	In principle, we could buy this project also with $\alpha < \nicefrac{5}{8}$,
	but since we have equal payments, $\rho$ would still be equal to $\nicefrac{1}{5}$,
	hence overall $\rho/\alpha$ would decrease in such a case.
	
	Now, let us consider project C.
	This project can be bought in full, but only if voter $v_{10}$ pays $\rho = \nicefrac{1}{3}$ of the cost.
	The Method of Equal Shares rejects this option as imbalanced and selects a project with a smaller $\rho$ parameter.
	In turn, BOS considers also buying a fraction of the project with balanced payments.
	Specifically, to maintain equal payments, $\alpha=\nicefrac{5}{6}$ of the project can be bought.
	This would result in $\rho=\nicefrac{1}{5}$ and ratio $\rho/\alpha = \nicefrac{6}{25}$.
	Since this ratio is smaller than the ratio of project B and the ratio of projects D, E and F which remain unchanged,
	BOS Equal Shares selects project C with $\alpha = \nicefrac{5}{6}$.
	Again, in principle, we could consider different values of $\alpha$,
	but this is unnecessary, which we will soon formally prove (see \Cref{theorem:bos_thm}).
	To cover $\nicefrac{5}{6}$ of the cost of project C, its supporters would have to pay $\$50\text{k}$ each.
	Hence, in BOS Equal Shares, each voter pays $\nicefrac{6}{5} \cdot \$50\text{k} = \$60\text{k}$.
	This means that some voters overspend their available funds,
	which effectively means that we set their accounts to 0.
	\begin{center}
		\setlength{\tabcolsep}{2pt}
		\begin{tabularx}{10cm}{l|YYYYYYYYYY}
			& $v_1$ & $v_2$ & $v_3$ & $v_4$ & $v_5$ & $v_6$ & $v_7$ & $v_8$ & $v_9$ & $v_{10}$ \\
			\midrule
			\ $b_i$\ \  & \small $50$ & \small $0$ \tiny$(-10)$ & \small $0$ \tiny$(-10)$ & \small $0$ \tiny$(-10)$ & \small $0$ \tiny$(-10)$ & \small $50$ & \small $100$ & \small $100$ & \small $100$ & \small $40$
		\end{tabularx}
	\end{center}			
	
	In the third round BOS selects project D with $\alpha = 1$ and $\rho = \nicefrac{5}{18}$
	and in the fourth round project F is bought with $\alpha = \nicefrac{5}{6}$ and $\rho = \nicefrac{3}{5}$
	(see \Cref{app:examples} for details).
	After that, the remaining budget is \$60k, no project can be added, and BOS terminates.

	The final outcome is then $\{A, C, D, F\}$.
	Note that similarly to Fractional Equal Shares and in contrast to the original Method of Equal Shares,
	BOS spends more funds on projects A--C than on D--F,
	which is reasonable as the former projects are more popular.
	\hfill $\lrcorner$
\end{examplecont}

As we have noted in the above example, we do not have to consider all values of $\alpha$. 
In fact, it can be shown that it is optimal to either consider $\alpha = 1$ or a fraction $\alpha$ that, for some voter, is the smallest amount for which she runs out of money, assuming proportional payments, i.e., in which $b_i = u_i(c) \cdot \rho$ for some voter $v_i \in V$. 
In such a case, it holds that $\rho = b_i/u_i(c)$. 
Since there are at most $n$ such values, the outcome of BOS can be computed in polynomial time.

\begin{restatable}{theorem}{bosthm}
	\label{theorem:bos_thm}
	BOS runs in polynomial time.
\end{restatable}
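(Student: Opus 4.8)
The plan is to prove that \Cref{alg:bounded-overspending} returns an outcome of BOS using a polynomial number of arithmetic operations. Since its budget update and its loop guard transcribe verbatim the definition of BOS --- the guard $C'\neq\emptyset$ says precisely that some unselected candidate both fits within the remaining budget and is $(\alpha,\rho)$-affordable for some admissible $\alpha\in(0,1]$, $\rho\in\mathbb{R}_+$, because $\sum_{v_i:u_i(c)>0}b_i>0$ is equivalent to $\sum_{v_i\in V}\min(b_i,u_i(c)\cdot\lambda)>0$ for some $\lambda>0$ --- everything comes down to justifying the remark preceding the theorem: in a single round, for a fixed candidate $c$ and fixed virtual budgets $(b_i)_{v_i\in V}$, the minimum of $\rho/\alpha$ over all $(\alpha,\rho)$-affordable pairs of $c$ is attained by one of the $O(n)$ pairs inspected by the inner \textbf{for}-loop, after which the running-time count is routine.

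The key device is to reparametrize affordable pairs by $\lambda=\alpha\rho$. Write $h_c(\lambda)=\sum_{v_i\in V}\min(b_i,u_i(c)\cdot\lambda)$; this is a continuous, non-decreasing, concave, piecewise-linear function of $\lambda\ge 0$ whose breakpoints are exactly the numbers $b_i/u_i(c)$ with $b_i>0$ and $u_i(c)>0$. The affordability equation $\alpha\cdot\cost(c)=\sum_{v_i\in V}\min(b_i,\alpha u_i(c)\rho)$ reads $\alpha\cdot\cost(c)=h_c(\lambda)$, so $\alpha$ is forced to equal $h_c(\lambda)/\cost(c)$ and then $\rho=\lambda/\alpha$; conversely every $\lambda>0$ with $h_c(\lambda)\le\cost(c)$ produces an admissible affordable pair. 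Hence $\lambda\mapsto(\alpha,\rho)$ is a bijection from $(0,\bar\lambda_c]$ onto the affordable pairs of $c$, where $\bar\lambda_c$ is the largest $\lambda$ with $h_c(\lambda)\le\cost(c)$ (and $\bar\lambda_c=+\infty$ if $\sum_{v_i:u_i(c)>0}b_i<\cost(c)$), and under this bijection $\rho/\alpha=\lambda/\alpha^2=\cost(c)^2\cdot\lambda/h_c(\lambda)^2$. So the task is to minimise $\phi_c(\lambda):=\lambda/h_c(\lambda)^2$ over $(0,\bar\lambda_c]$ and to argue that a minimiser can always be found among the breakpoints of $h_c$ together with $\bar\lambda_c$ --- that is, among the values $\{b_i/u_i(c):b_i>0,u_i(c)>0\}\cup\{\lambda'\}$, where $\lambda'=\bar\lambda_c$ when finite and $\lambda'=+\infty$ otherwise.

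The heart of the argument is the behaviour of $\phi_c$ on a single linear piece, where $h_c(\lambda)=c_0+c_1\lambda$ with $c_0\ge 0$ and $c_1\ge 0$. When $c_1>0$ one computes $\phi_c'(\lambda)=(c_0-c_1\lambda)/(c_0+c_1\lambda)^3$, so on that piece $\phi_c$ is decreasing if $c_0=0$, and otherwise strictly increasing and then strictly decreasing, i.e.\ unimodal with an interior \emph{maximum}; when $c_1=0$ (the unbounded final piece) $\phi_c$ is increasing. In all cases the infimum of $\phi_c$ over a piece is attained at one of the piece's endpoints. Since, moreover, $\phi_c(\lambda)\to+\infty$ as $\lambda\to0^+$ (near $0$, $h_c$ is linear with slope $\sum_{v_i:u_i(c)>0}u_i(c)>0$, as $c\in C'$) and as $\lambda\to+\infty$, the global minimum of $\phi_c$ over $(0,\bar\lambda_c]$ is attained and must lie at a breakpoint not exceeding $\bar\lambda_c$ or at $\bar\lambda_c$. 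Finally, the breakpoints $b_i/u_i(c)>\bar\lambda_c$ that the loop also inspects cannot spoil the selection: for such $\lambda$ one has $h_c(\lambda)>\cost(c)$, so the algorithm sets $\alpha=1$, $\rho=\lambda$, a pair with ratio $\lambda>\bar\lambda_c$, while $\lambda'=\bar\lambda_c$ yields a genuine affordable pair of smaller ratio $\bar\lambda_c$; hence such $\lambda$ are never preferred over a true optimum, which is in the enumerated set. This settles what the inner loop returns for each $c$, hence the candidate $c^*$, its $\alpha^*$, and its $\rho^*$ chosen in the round, and thus the whole trajectory of \Cref{alg:bounded-overspending}.

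It remains to count. Each iteration of the outer \textbf{while}-loop adds one candidate to $W$, so there are at most $m=|C|$ iterations; within an iteration the algorithm ranges over at most $m$ candidates $c\in C'$, for each of which it solves the monotone piecewise-linear equation $h_c(\lambda)=\cost(c)$ for $\lambda'$ (by sorting the at most $n$ breakpoints and locating the crossing segment) and then inspects at most $n+1$ values of $\lambda$, each costing an $O(n)$-time evaluation of a sum over the voters. Altogether this is $O(m^2 n^2\log n)$ arithmetic operations; as with MES one checks that the rational numbers manipulated retain polynomially bounded bit-length across the at most $m$ rounds, so the procedure runs in polynomial time, which proves the theorem. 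The step I expect to be the real obstacle is the piecewise analysis of $\phi_c$: the substitution $\lambda=\alpha\rho$ is what makes it tractable, since in the native variables both the objective $\rho/\alpha$ and the affordability constraint are genuinely two-dimensional and the ``only breakpoints matter'' phenomenon is not at all evident.
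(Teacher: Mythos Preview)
Your proof is correct and follows essentially the same strategy as the paper: both arguments show that on each ``piece'' the objective $\rho/\alpha$ is unimodal with an interior maximum, forcing the minimum to lie at an endpoint, and both then enumerate the $O(n)$ endpoints per candidate. The only real difference is your choice of coordinate: the paper parametrizes pieces by $\alpha$ and differentiates $\rho/\alpha$ with respect to $\alpha$, whereas you substitute $\lambda=\alpha\rho$, write $\alpha=h_c(\lambda)/\cost(c)$, and differentiate $\phi_c(\lambda)=\lambda/h_c(\lambda)^2$ on linear pieces of $h_c$. Since the two parametrizations are related by an increasing affine map on each piece, the analyses are equivalent; your version has the advantage of matching the algorithm's own variable $\lambda$ directly and of making the breakpoint set $\{b_i/u_i(c)\}$ appear without further argument. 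You also explicitly dispose of the breakpoints $\lambda>\bar\lambda_c$ that the inner loop visits but that correspond to no affordable pair---a detail the paper's proof passes over.
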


\begin{proof}
	It is enough to argue that in each round, the next project can be selected in polynomial time, or equivalently, that we can find parameters $\rho$ and $\alpha$ that minimize $\rho/\alpha$ in polynomial time.
	
	Consider a fixed round of BOS, and assume each voter has a budget $b_i$. Let $S$ be the set of voters who still have money left.
	Fix a not-yet-elected candidate $c$ that has supporters in $S$ (otherwise, it is not $(\alpha,\rho)$-affordable for any finite $\rho$).
	If $c$ can be fully funded by voters from $S$ with payments proportional to their utilities, then clearly, the optimal $\alpha$ equals $1$, and $\rho = \cost(c) / \sum_{v_i \in S} u_i(c)$.
	
	Assume otherwise.
	Consider a sequence $(v_{i_1}, \dots, v_{i_{\ell}})$ of supporters of $c$ who still have money left, sorted in the ascending order by $b_i/u_i(c)$. 
	Note that if we gradually increase the fraction $\alpha$ of the project that we fund, and for each such value of $\alpha$ minimize the value of $\rho$, then the first voter to run out of money would be $v_{i_1}$, the second (or possibly ex-aequo first) would be $v_{i_2}$, and so on. 
	Let $\alpha_j$ be the fraction of the project for which voter $v_{i_j}$ runs out of money (or $\alpha_j=1$ if they do not).
	Assume $\alpha_0 = 0$ and consider buying a fraction $\alpha \in [\alpha_j, \alpha_{j+1}]$ of the project for some $j \in \{0,\dots,\ell-1\}$.
	We argue that $\alpha$ equal to $\alpha_j$ or $\alpha_{j+1}$ minimizes the ratio $\rho/\alpha$ on this interval.
	
	For $\alpha \in [\alpha_j, \alpha_{j+1}]$, we know that each voter $v_i \in \{v_{i_1},\dots,v_{i_{j}}\}$ would pay $b_i$, and each voter $v_i \in \{v_{i_{j+1}},\dots,v_{i_{\ell}}\}$ would pay $\alpha \cdot u_i(c) \cdot \rho$.
	Hence, we have $\alpha \cdot \cost(c) = \sum_{h=1}^j b_{i_h} + \sum_{h={j+1}}^{\ell} \alpha \cdot u_{i_h}(c) \cdot \rho,$ and, therefore
\[ \rho/\alpha = \frac{\alpha \cdot \cost(c) - \sum_{h=1}^j b_{i_h}}{\alpha^2 \cdot \sum_{h={j+1}}^{\ell} u_{i_h}(c)}. \] 
	Taking the derivative with respect to $\alpha$, we get that 
	\[ \frac{d}{d \alpha}(\rho/\alpha) = \frac{2 \sum_{h=1}^j b_{i_h} - \alpha \cost(c)}{\alpha^3 \sum_{h=j+1}^{\ell} u_{i_h}(c)}. \]
	Since $\cost(c) > 0$, this function has only one zero at $\alpha^* = 2 \sum_{h=1}^j b_{i_h} / \cost(c)$, and it is positive for $\alpha < \alpha^*$ and negative for $\alpha > \alpha^*$. 
	This implies that $\rho/\alpha$ is increasing for $\alpha < \alpha^*$ and decreasing for $\alpha > \alpha^*$.
	Hence, for any interval, the minimum value is at one of its ends, and so $\rho/\alpha$ is minimal for $\alpha \in \{\alpha_j, \alpha_{j+1}\}$.
	Note that for the interval $[\alpha_0, \alpha_1]$ all payments are proportional to their utilities, hence $\rho$ is a constant and $\rho/\alpha$ is maximized for $\alpha = \alpha_1$.
	
	This shows that the fraction $\alpha$ that minimizes $\rho/\alpha$ belongs to the set $\{\alpha_1,\dots,\alpha_{\ell}\} \cup \{1\}$.
	If $\alpha = \alpha_j < 1$ for some $i_j \in \{1,\dots,\ell\}$, then we get $\rho = b_{i_j}/u_{i_j}(c)$.
	On the other hand, if $\alpha = 1$, then $\rho$ can be computed as in the Method of Equal Shares.
\end{proof}

\subsection{Addressing the Limitations of the Method of Equal Shares}

Next, let us discuss how BOS addresses each of the limitations of Equal Shares
that we have identified and listed in the introduction of our work.

For underspending,
observe that BOS Equal Shares leads to a non-exhaustive outcome
only if all voters that still have remaining funds have all of their supported projects selected
(otherwise there is an $(\alpha,\rho)$-affordable project for some, possibly very small, $\alpha$).
Based on our empirical analysis (see \Cref{sec:experiments})
such a situation is extremely rare in practice.
As a result, BOS spends on average a similar fraction of a budget as the Utilitarian Method
and does not need to be combined with a completion mechanism,
in contrast to the Method of Equal Shares.
However, if an exhaustive rule is required,
a simple modification can be made to BOS Equal Shares:
whenever all projects supported by a voter are already selected,
we remove this voter from the election
and redistribute her remaining funds equally among the remaining voters.
We note that similar modification for the original Method of Equal Shares
would still lead to a non-exhaustive rule.

Next, we discuss how our method handles the instances of PB elections presented in \Cref{sec:challenging-instances}.

\stepcounter{exof}
\begin{exof}[BOS on Helenka Paradox]
	\label{ex:bos_helenka}

	Consider the Helenka Paradox election instance presented in \Cref{sec:challenging-instances}.
	The initial endowment is $\$(\nicefrac{310,000}{414}) \approx \$750$.
	Only the $\alpha = \nicefrac{403}{414}$ fraction of the project A can be paid for by its supporters.
	This gives $\rho =\nicefrac{1}{403}$, which results in $\rho/\alpha = \nicefrac{414}{403^2}$.
	In turn, project B can be fully bought, thus we have $\alpha = 1$ and $\rho = \nicefrac{1}{11} = \rho/\alpha$.
	Hence, BOS, in contrast to the Method of Equal Shares, would select project A.
	
	The reader may wonder what would happen if the support for project B was higher.
	Clearly, as the support for project B that is affordable at full increases
	and that of project A taking the whole budget decreases,
	at some point we should switch from selecting A to selecting B.
	At what point would BOS start selecting project B?
	Assume that $x$ voters support A and $n-x$ voters support B.
	Then, for A we have $\rho= 1/x$ and $\alpha=x/n$, thus $\rho/\alpha = n/x^2$.
	In turn, for B we have $\rho= 1/(n-x)$ and $\alpha=1$, thus $\rho/\alpha = 1/(n-x)$.
	Hence, we can select project B if $1/(n-x) \le n/x^2$ or equivalently $n^2 - xn \ge x^2$.
	Solving the quadratic equation, we get that if $x>0$,
	then this is equivalent to $x/n \le (\!\sqrt{5} -1)/2 = 1/\varphi$.
	Therefore, it turns out that the switching point is when the supports of the projects
	are in the golden ratio to each other, i.e.,
	the project using the entire budget is supported by roughly 61.8\% of the electorate.
	This is close to the value of the thresholds required for a supermajority in many democratic systems (e.g., 3/5 or 2/3).
	\hfill $\lrcorner$
\end{exof}

\begin{exof}[BOS on Tail Utilities]
	\label{ex:bos_tail} 
	Consider the Tail Utilities instance from \Cref{sec:challenging-instances}.
	All voters have the same utility from project B, hence, every fraction $\alpha$ of the project can be bought with equal payments and $\rho = \cost(c)/\sum_{v_i \in V} u_i(c) = \nicefrac{1}{200}$.
	Therefore, we obtain the minimal ratio $\rho/\alpha = \nicefrac{1}{200},$ for $\alpha = 1$.
	
	Let us focus now on project A. 
	If the project is bought in full, the last voter would have to pay $\$0.01$ per unit of utility, giving $\rho = \nicefrac{1}{100}$.
	However, it is also possible to simulate the purchase of the fraction of the project that ensures the payments are proportional to utilities.
	This happens if 99 voters pay $\$0.01$ and 1 voter pays $\$0.0001$.
	This way, voters pay for a fraction of $\alpha = \nicefrac{9901}{10000}$ of the project, achieving a much better ratio: $\rho = \cost(c)/\sum_{v_i \in V} u_i(c) = \nicefrac{1}{9901}$.
	Thus, we get $\rho/\alpha = \nicefrac{10000}{9901^2}$, and so BOS would select project A.
	\hfill $\lrcorner$
\end{exof}

\subsection{Proportionality Guarantees of BOS Equal Shares}

As previously noted, Extended Justified Representation (EJR) is a well-established axiom of proportionality for participatory budgeting. In \Cref{ex:bos_helenka}, we illustrate that BOS fails to satisfy EJR under cost utilities in the PB setting. Importantly, this failure is not a deficiency but an intended behavior. Nevertheless, BOS still provides strong proportionality guarantees. We primarily demonstrate this through extensive experiments on real data (see \Cref{sec:experiments}), and we additionally prove theoretically that the extent of violation of EJR is limited.

In line with the definition of EJR under cost utilities, given a subset of projects $T \subseteq C$, we say that a group of voters $S$ \myemph{deserves} the satisfaction of $\cost(T)$ if $|S|\geq \cost(T)\frac{n}{b}$ and all voters in $S$ approve all projects in $T$, that is $T \subseteq \bigcap_{i \in S} A_i$.

\begin{definition}
	\label{ejrPB}
	Assuming cost utilities, we say that a rule satisfies EJR up to $t$, if for every election $E = (C,V,b)$, and $W$ being the outcome of the rule on $E$, the following holds:
	For each pair $(S,T)\subseteq V\times C$ such that $S$ deserves the satisfaction of $\cost(T)$, there is a voter $v_i$ such that $u_i(W) \geq \cost(T)-t-\cost(c),$ for all $c\in T\setminus W$.
	\hfill $\lrcorner$
\end{definition}

\noindent The following theorem is the main theoretical contribution of the section.

\begin{theorem}
	\label{thm:posPB}
	For cost utilities, BOS satisfies EJR up to $\frac{n-|S|}{2|S|}\cdot c^*,$ where $c^*:=\max_{c\in C}\cost(c).$
\end{theorem}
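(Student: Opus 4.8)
The plan is to adapt the proofs that the Method of Equal Shares satisfies EJR up to one project and that FrES satisfies Fractional EJR (\Cref{frejr_thm}): fix a group allegedly violating the guarantee, follow the run of BOS, and carefully account for how much money that group spends and how much utility its members gain. The only genuinely new ingredient, relative to those proofs, is that BOS permits overspending, and the extra slack $\frac{n-|S|}{2|S|}c^*$ will come from bounding the total ``subsidy'' that overspending creates over the whole run. Concretely, fix an election, an outcome $W$ of BOS, and a pair $(S,T)$ with $S$ deserving $\cost(T)$; under cost utilities, $u_i(c)=\cost(c)$ for all $v_i\in S$, $c\in T$, and $\cost(T)\le(|S|/n)b$. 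If $T\subseteq W$ then every $v_i\in S$ already has $u_i(W)\ge\cost(T)$, so assume $T\setminus W\neq\emptyset$, fix a cheapest $c_0\in T\setminus W$, and assume for contradiction that the conclusion of \Cref{ejrPB} fails for $t=\frac{n-|S|}{2|S|}c^*$: then, since $\cost(c)\ge\cost(c_0)$, every $v_i\in S$ satisfies $u_i(W)<\cost(T)-t-\cost(c_0)$.

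First I would analyze a single round. When BOS buys a project $c$ with parameters $(\alpha^\star,\rho^\star)$ (notation of \Cref{alg:bounded-overspending}), its supporters pay $\Pi(c):=\sum_i\min(b_i,\rho^\star u_i(c))$ toward the cost, and the deficit $\sigma(c):=\cost(c)-\Pi(c)\ge 0$ is the subsidy incurred. Since $\rho^\star=\lambda^\star/\alpha^\star$ with $\alpha^\star\le 1$, the affordability identity $\alpha^\star\cost(c)=\sum_i\min(b_i,\alpha^\star\rho^\star u_i(c))$ gives $\Pi(c)\ge\alpha^\star\cost(c)$, hence $\sigma(c)\le(1-\alpha^\star)\cost(c)\le(1-\alpha^\star)c^*$, and in particular $\sigma(c)=0$ whenever $\alpha^\star=1$. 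To extract the factor $\tfrac12$ I would push this further using the computation inside the proof of \Cref{theorem:bos_thm}: sorting the money-holding supporters of $c$ by $b_i/u_i(c)$ and letting $v_{i_1},\dots,v_{i_k}$ be the ones who overspend, the fact that the chosen fraction $\alpha^\star$ lies on the decreasing side of the unique critical point $2\big(\sum_{h\le k}b_{i_h}\big)/\cost(c)$ of $\rho/\alpha$ yields $\alpha^\star\cost(c)<2\sum_{h\le k}b_{i_h}$, so that $\sigma(c)$ is at most a constant times the residual budget $\sum_{h\le k}b_{i_h}$ held by the voters who go broke in that very round. The crucial structural point is that a voter overspends at most once, because her account is set to $0$ immediately afterwards, so these residual budgets can be charged to distinct voters.

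Next I would sum the decomposition $\cost(c)=(\text{paid by }S)+(\text{paid by outsiders})+\sigma(c)$ over all purchased $c$. Using that the $n-|S|$ outsiders contribute at most $(n-|S|)\tfrac{b}{n}$ in total, that each voter goes broke at most once, and the per-round bound above, the total subsidy that can be charged beyond $S$'s own endowment is at most $\frac{n-|S|}{2}c^*$, i.e.\ $t$ per voter of $S$. Combining this with $\cost(T)\le(|S|/n)b$, the contradiction hypothesis, and the observation that under cost utilities a voter of $S$ who contributes to a selected project $c$ gains exactly $\cost(c)$ utility, I conclude that at the end of some round in which $c_0$ still fits within the remaining budget the voters of $S$ jointly hold more than $\cost(c_0)$. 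At that point $c_0\in T\setminus W$ belongs to the set $C'$ of \Cref{alg:bounded-overspending} --- it fits and is supported by $S$-voters with positive budget --- so BOS must select $c_0$ or a project with an even smaller $\rho/\alpha$; iterating, and using that the remaining budget only decreases, some project of $T\setminus W$ is eventually purchased, contradicting the choice of $c_0$ (more precisely, the assumption that no voter of $S$ reaches $u_i(W)\ge\cost(T)-t-\cost(c)$ for all $c\in T\setminus W$).

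I expect the main obstacle to be exactly the subsidy bookkeeping of the last two steps: turning the crude per-round bound $\sigma(c)\le(1-\alpha^\star)c^*$ into something that, once summed, gives precisely $\frac{n-|S|}{2|S|}c^*$ and not a weaker bound like one $c^*$ per broke voter. The delicate sub-points will be (i) extracting the inequality $\alpha^\star\cost(c)<2\sum_{h\le k}b_{i_h}$ from the critical-point computation of \Cref{theorem:bos_thm}, which is where the $\tfrac12$ originates; (ii) assigning each unit of subsidy to a distinct broke voter's residual budget with no double counting across rounds; (iii) showing that it is the outsiders' budgets, rather than $S$'s legitimate entitlement $\cost(T)$, that pay for these subsidies; and (iv) dealing with ties and with rounds whose best option lies outside $T$. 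Everything else should be a routine adaptation of the Equal Shares and FrES EJR arguments.
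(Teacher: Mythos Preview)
Your outline has the right shape but is missing the lemma that drives the paper's proof: under the contradiction hypothesis, \emph{every} round of BOS satisfies $\rho/\alpha \le 1/|S|$ (the paper proves this as \Cref{1/s}). This is the only place where the specific group $S$ enters the overspending analysis, and without it your sub-points (i)--(iii) do not go through.

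Concretely, your charging scheme --- bounding $\sigma(c)$ by a constant times the residual budgets $\sum_{h\le k} b_{i_h}$ of the voters going broke in that round --- is false in general: with two supporters each holding $\epsilon$ and a unit-cost project one gets $\alpha^\star=2\epsilon$, $\sigma(c)=1-2\epsilon$, and $\sum_{h\le k} b_{i_h}=2\epsilon$, so the ratio blows up. More tellingly, even the target you state is already too weak: you aim for total subsidy $\le \tfrac{n-|S|}{2}c^*$, but the theorem requires $\tfrac{n-|S|}{2|S|}c^*$. The missing factor $1/|S|$ comes precisely from $\rho/\alpha \le 1/|S|$, which (a) forces $\alpha \ge |S|/|R|$, so per-round overspending is at most $c^*(|R|-|S|)/|R|$, and combined with $|R_{\supminus}|>|R|/2$ (this \emph{is} related to your critical-point observation, and is the paper's \Cref{half_overspend_pb}) plus the elementary $(|R|-2|S|)^2\ge0$ yields average overspending $\le c^*/(2|S|)$ per broke voter; (b) shows voters in $S$ never overspend, so only $n-|S|$ voters are charged; and (c) converts a payment $p$ by an $S$-voter into utility $\ge |S|\cdot p$, not merely $p$ as your ``gains exactly $\cost(c)$'' observation gives. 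Without (c), your final step also falls short: knowing $S$ jointly holds more than $\cost(c_0)$ at some round does not force $c_0\in W$, since BOS may keep selecting other projects with smaller $\rho/\alpha$ until $c_0$ no longer fits the remaining budget; the paper instead derives the contradiction directly from (c) by exhibiting an $S$-voter whose utility already exceeds the assumed bound.
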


\begin{proof}
	Consider a subset of voters $S \subseteq V$ and a subset of candidates $T \subseteq C$ such that $S$ deserves the satisfaction of $\cost(T)$.
	Towards a contradiction, assume that there exists a candidate $\hat{c}\in T\setminus W$ such that
	for all voters $v_i\in S$ it holds that $u_i(W) < \cost(T)-\frac{n-|S|}{2|S|}\cdot c^*-\cost(\hat{c})$.
	
	\begin{claim}\label{1/s}
		If there is a candidate $\hat{c} \in T\setminus W$ such that for all $v_i\in S$ it holds that $u_i(W) < \cost(T)-\frac{n-|S|}{2|S|}\cdot c^*-\cost(\hat{c})$,
		then, in every step of the execution of BOS, it holds $\frac{\rho}{\alpha} \leq \frac{1}{|S|}$.
	\end{claim}
	\begin{proof}
		Towards a contradiction, suppose that in some round $r$ the rule selected a candidate $c$ with $\nicefrac{\rho}{\alpha}>\nicefrac{1}{|S|}$. Assume that $r$ was the first round when this happened.
		Thus, all previous candidates were bought with \(\nicefrac{\rho}{\alpha} \leq \nicefrac{1}{|S|}\). 
		Consider a candidate $\hat{c} \in T \setminus W$. At the time $c$ was bought, there was a voter from $S$, say $v_i$ who had spent strictly more than $\nicefrac{b}{n}-\nicefrac{\cost(\hat{c})}{|S|}$, as otherwise the voters from $S$ would altogether have at least $|S| \cdot (\cost(\hat{c}) / |S|)$ money left, and so $\hat{c}$ could have been bought for $\alpha=1$ and $\rho=\nicefrac{1}{|S|},$ i.e., for $\nicefrac{\rho}{\alpha}= \nicefrac{1}{|S|}$. We now compute the satisfaction of voter $v_i$ with respect to $W$. By the fact that any purchase prior to $c$ was done for $\rho\leq \nicefrac{1}{|S|},$ which is due to the fact that $\alpha$ is upper bounded by $1$, we have that the satisfaction of $v_i$ whenever paying $p$ was increasing by at least $|S|\cdot p$. Therefore, the satisfaction of $v_i$ at the considered step was 
		\begin{align*}
		u_i(W) \ge |S|\left(\frac{b}{n}-\frac{\cost(\hat{c})}{|S|}\right)\geq \cost(T)-\cost(\hat{c}) \text{,}
		\end{align*}
		which contradicts the fact that there are no such voters.
	\end{proof}

	\begin{claim}\label{half_overspend_pb}
		Consider a round when some voters overspend.
				The number of voters that exhaust their budgets in this round is strictly larger 
		than half of the number of all voters who paid for a candidate in this round.
	\end{claim}
	\begin{proof}
		Consider a round of the BOS procedure in which a candidate $c$ is selected
		with certain $\alpha$ and $\rho$.
		Let $R$ be the set of all voters that paid for $c$.
		The voters from $R$ can be partitioned into two subsets:
		those who either overspent their budget or spent all of their available funds on $c$,
		i.e., they exhausted their budgets in this round,
		and the remaining voters---those who will still have money to spend in future rounds.
		Let us denote the first subset by $R_{\supminus}$ and the second by $R_{\supplus}$.
		Clearly $R = R_{\supminus} \cup R_{\supplus}$ and both subsets are disjoint.
		In what follows, we will show that $|R_{\supminus}| > |R|/2$.

		If $R_{\supplus} = \varnothing$, the thesis holds trivially,
		thus let us assume that $|R_{\supplus}|>0$.
		Observe that every voter in $R_{\supplus}$ pays exactly the same amount in this round
		(as we assumed cost utilities and their budgets are not finished).
		Let us denote the amount each of them pays for an $\alpha$ fraction of $c$ by $p$,
		i.e., they pay ${p}/{\alpha}$ altogether.
		Since the voters in $R_{\supplus}$ pay the highest price,
		we get $\rho = \frac{p}{\alpha \costc}$ and $\frac{\rho}{\alpha} = \frac{p}{\alpha^2 \costc}$.

		Now, consider buying candidate $c$ with alternative $\bar{\alpha}$ and $\bar{\rho}$
		which we obtain by making voters in $R_{\supplus}$ pay $p+\beta$ instead of $p$ for a fraction of $c$,
		where $\beta>0$ is small enough that each $v_i \in R_{\supplus}$ has at least $(p + \beta)/\alpha$ money
		(since after paying ${p}/{\alpha}$ they still had a positive amount of money, there is such $\beta$).
		This means that $\bar{\alpha} \ge \alpha + \beta \cdot |R_{\supplus}| / \costc$,
		as we account for the additional payments.
		Also, $\bar{\rho} = \frac{p + \beta}{\bar{\alpha} \costc}$.
		Since BOS has chosen to buy project $c$ with $\alpha$ and $\rho$, we get
		\[
			\frac{p}{\alpha^2 \costc} = 
			\frac{\rho}{\alpha} \le \frac{\bar{\rho}}{\bar{\alpha}} = 
			\frac{p + \beta}{\bar{\alpha}^2 \costc} \le \frac{p + \beta}{(\alpha + \beta \cdot |R_{\supplus}|/\costc)^2 \costc}.
		\]
		After rearrangement we obtain
		\[
			(\alpha \costc + \beta|R_{\supplus}|)^2 \le \alpha^2 \costc^2 \frac{(p + \beta)}{p} = \alpha^2 \costc^2 ( 1 + \beta/p).
		\]
		Observe that the equal payments in $R$ would result in
		every voter paying $\alpha\costc / |R|$ for the $\alpha$ fraction of $c$.
		As voters in $R_{\supplus}$ pay the largest share, we get $p \ge \alpha\costc / |R|$.
		Thus, we have that $1 + \frac{\beta}{p} \le \frac{|R|\beta + \alpha\costc}{\alpha\costc}$.
		Therefore,
		\[
			\alpha^2 \costc^2 + 2\alpha \costc\beta|R_{\supplus}| + \beta^2|R_{\supplus}|^2 =
			 (\alpha \costc + \beta|R_{\supplus}|)^2 \le 
			 \alpha \costc(|R|\beta + \alpha\costc).
		\]
		As $\alpha\costc > 0$ and $\beta > 0$, we divide by $\alpha\costc$, subtract $\alpha\costc$, and divide by $\beta$,
		which yields
		\[
			2|R_{\supplus}| + \frac{\beta|R_{\supplus}|^2}{\alpha\costc} \le |R| \Leftrightarrow
			\frac{\beta|R_{\supplus}|^2}{\alpha\costc} \le |R| - 2|R_{\supplus}| = 2|R_{\supminus}| - |R|.
		\]
		Since the left-hand side is strictly positive, we get that $|R_{\supminus}| > |R|/2$,
		which concludes the proof.
	\end{proof}

	For every voter $v_i \in V$, let us denote the overspending of $v_i$ by
	$\delta_i = -\min (0, b_i - u_i(c^*) \cdot \rho^*)$,
	where $c^*$ is the candidate chosen in the last round in which $v_i$ was paying,
	$\rho^*$ is $\rho$ in that round, and $b_i$ is the remaining budget of $v_i$ at the beginning of that round.
	\begin{claim}\label{overspendPB}
		If there is a candidate $\hat{c} \in T\setminus W$ such that for all $v_i\in S$ it holds that $u_i(W) < \cost(T)-\frac{n-|S|}{2|S|}\cdot c^*-\cost(\hat{c})$, then, the total overspending is bounded by $\sum_{v_j \in V}\delta_j \le \frac{n-|S|}{2|S|}\cdot c^*$.
	\end{claim}
	\begin{proof}
		First, observe that the voters from $S$ will never overspend, i.e., $\delta_i=0$, for every $v_i \in S$. 
		Indeed, from \Cref{1/s} we know that $\rho \le 1/|S|$ in every round,
		which combined with the fact that $u_i(W) < \cost(T)$,
		bounds the total amount of money that voter $v_i$ spends by
		\begin{align*}
			u_i(W) \rho < \cost(T) \rho \leq \cost(T) \cdot (1/|S|) \leq b/n .
		\end{align*}

        Now, consider a subset of voters, $R \subseteq V$, that pay for candidate $c$
		selected with $\alpha$ and $\rho$ in a certain round of BOS.
		In what follows, we will bound the overspending in this round.
		From the proof of \Cref{half_overspend_pb} and using the notation from there
		we know that 
		\(
			\frac{\rho}{\alpha} = \frac{p}{\alpha^2\costc}
		\)
		and
		\(
			p \ge \frac{\alpha\costc}{|R|}.
		\)
		Then, \Cref{1/s} yields
		\(
			|S| \le \frac{\alpha}{\rho} = \frac{\alpha^2\costc}{p} \le \alpha|R|.
		\)
		Hence, $\alpha \ge |S|/|R|$.
		On the other hand, observe that the overspending of voters in $R_{\supminus}$
		(so all voters that exhaust their funds in this round)
		cannot be greater than what remains after buying $\alpha$ fraction of $c$, i.e.,
		\[
			\sum_{v_i \in R_{\supminus}} \delta_i \le (1 - \alpha)\costc \le c^* \big(1 - |S|/|R|\big).
		\]
		Then, by \Cref{half_overspend_pb}, the average overspending of voters in $R_{\supminus}$
		is at most
		\[
			\sum_{v_i \in R_{\supminus}} \frac{\delta_i}{|R_{\supminus}|} \le
			\frac{c^* (|R| - |S|)/|R|}{|R|/2} =
			\frac{2c^*(|R| - |S|)}{|R|^2} \le 
			\frac{c^*}{2|S|},
		\]
		where the last inequality follows from the fact that
		\(
			|R|^2 - 4|S|\!\cdot\!|R| + 4|S|^2 = (|R| - 2|S|)^2 \ge 0.
		\)
		Thus, the average overspending in $V \setminus S$ is also bounded by $c^*/(2|S|)$,
		which concludes the proof.
	\end{proof}

	Recall that $\hat{c}\in T\setminus W$
	is such that $u_i(W) < \cost(T)-\frac{n-|S|}{2|S|}\cdot c^*-\cost(\hat{c})$,
	for every $v_i \in S$.
	Observe that the voters in $V \setminus S$ can spend in total at most 
	$(n-|S|) \cdot b /n$ from their initial budgets
	and, by \Cref{overspendPB}, $\frac{n - |S|}{2 |S|}c^*$ from overspending.
	Thus, voters from $S$ must have spent at least
	\[
	b - \left(\frac{n-|S|}{n}b + \frac{n - |S|}{2 |S|}c^*\right) - \cost(\hat{c})  = \frac{|S|}{n}b - \frac{n - |S|}{2 |S|}c^* - \cost(\hat{c})  \geq \cost(T) - \frac{(n - |S|) c^*}{2 |S|} - \cost(\hat{c})
	\] 
	as otherwise they could afford to buy $\hat{c}$.

	From \Cref{1/s}, we know that any payment of $p$ for a voter in $S$ results in a satisfaction of at least $|S| \cdot p$.
	Thus, there must exist a voter in $S$ with a satisfaction of at least
	\begin{align*}
		\frac{1}{|S|} \cdot \left(\cost(T)-\frac{n-|S|}{2|S|}c^*-\cost(\hat{c})\right) \cdot \frac{1}{\rho} \geq \cost(T)-\frac{n-|S|}{2|S|}c^*-\cost(\hat{c}).
	\end{align*}
	This gives a contradiction and concludes the proof of the theorem.
\end{proof}

\begin{corollary}
	\label{cor:positiveEJR-bos}
	For approval-based committee elections, BOS satisfies EJR up to $\left\lceil\frac{k-\ell}{2\ell}\right\rceil$ candidates,
	where $k$ is the size of the committee, and $\ell = |S| \cdot k / n$ is the number of representatives that voters in $S$ deserve.
\end{corollary}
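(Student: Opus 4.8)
The plan is to obtain \Cref{cor:positiveEJR-bos} as a direct specialization of \Cref{thm:posPB}. The first step is to recall that an approval-based committee election with target committee size $k$ is exactly the PB election in which every candidate has unit cost, $\cost(c)=1$, the budget is $b=k$, and $u_i(c)$ equals $1$ if $c\in A_i$ and $0$ otherwise; with unit costs the cost utilities coincide with the approval utilities, so the two models agree and BOS produces a feasible outcome $W$ with $|W|\le k$. In particular \Cref{thm:posPB} applies verbatim, and since here $c^*=\max_{c\in C}\cost(c)=1$, it yields, for every pair $(S,T)$ such that $S$ deserves the satisfaction of $\cost(T)$, a voter $v_i\in S$ with $u_i(W)\ge \cost(T)-\cost(c)-\tfrac{n-|S|}{2|S|}$ for all $c\in T\setminus W$.

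The second step combines a one-line simplification of the additive term with a rounding argument. Writing $\ell=|S|\cdot k/n$, equivalently $|S|=\ell n/k$, we get
\[
  \frac{n-|S|}{2|S|} = \frac{1-\ell/k}{2\ell/k} = \frac{k-\ell}{2\ell},
\]
so BOS satisfies EJR up to $\tfrac{k-\ell}{2\ell}$ in the sense of \Cref{ejrPB}. In the committee model $u_i(W)=|A_i\cap W|$ and $\cost(T)-\cost(c)=|T|-1$ are integers, so the guarantee $u_i(W)\ge |T|-1-\tfrac{k-\ell}{2\ell}$ is only weakened when the subtracted term is rounded up; since $\lceil (k-\ell)/(2\ell)\rceil$ is an integer, this gives $u_i(W)\ge |T|-1-\lceil (k-\ell)/(2\ell)\rceil$ for all $c\in T\setminus W$, which is precisely EJR up to $\lceil (k-\ell)/(2\ell)\rceil$ candidates.

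I do not expect a genuine obstacle here; the two points that need care are (i) verifying that the embedding of approval-based committee elections into the PB model is faithful — in particular that cost utilities degenerate to approval utilities under unit costs, so that the run of BOS and hence \Cref{thm:posPB} transfer unchanged — and (ii) being precise about the rounding step, i.e.\ that enlarging the subtracted slack never breaks the inequality. (In fact, using integrality of $u_i(W)$ and $|T|$ directly one could sharpen the conclusion to $\lfloor (k-\ell)/(2\ell)\rfloor$ candidates, but the stated $\lceil\cdot\rceil$ bound is all that is claimed and follows immediately.)
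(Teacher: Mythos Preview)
Your proposal is correct and matches the paper's approach: the corollary is stated without proof immediately after \Cref{thm:posPB}, so the paper treats it as exactly the specialization you describe (unit costs, $b=k$, hence $c^*=1$ and $\tfrac{n-|S|}{2|S|}=\tfrac{k-\ell}{2\ell}$). Your observations about the faithful embedding and the rounding step are precisely the details needed, and your remark that integrality would in fact yield the sharper $\lfloor\cdot\rfloor$ bound is also correct.
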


To illustrate this result, consider a group entitled to $\sqrt{k}$ representatives.
BOS ensures that at least one voter in the group receives no fewer than $\nicefrac{\sqrt{k}}{2}$ candidates she likes.
Furthermore, as the group size increases, this guarantee becomes even stronger.

Our guarantees are asymptotically tight.

\begin{restatable}[$\spadesuit$]{proposition}{negEJRbos}\label{prop:negEJRbos}
	For each $\ell$ there exists an approval-based committee election where a group of voters deserves $\left\lceil\frac{k-\ell}{4\ell}\right\rceil$ candidates, and all voters from this group get no representatives under BOS. 
\end{restatable}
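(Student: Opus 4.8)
The plan is, for each $\ell$, to build an explicit approval election with cost utilities in which a cohesive group is shut out completely. I would first reduce the committee model to a PB instance with all costs equal to $1$ and budget $b=k$, so that a group $S$ deserves $\cost(T)=|T|$ exactly when $|S|\ge |T|\cdot n/k$. Set $n=k=4\ell^2+\ell$ and let $S$ consist of $\ell$ voters who approve exactly a common set $T$ of $\ell$ candidates and nothing else. Then $|S|=\ell\ge|T|\cdot n/k=\ell$, so $S$ deserves $\cost(T)=\ell=\lceil(k-\ell)/(4\ell)\rceil$ candidates, as required. The remaining $M=k-\ell=4\ell^2$ voters (the ``outsiders''), together with two families of decoy candidates approved only by outsiders, are designed so that BOS fills the whole committee without ever touching $T$: there are $h=M-\ell=4\ell^2-\ell$ \emph{shared} decoys, each approved by all $M$ outsiders, and $2\ell$ \emph{gadget} decoys $e_1,\dots,e_{2\ell}$, whose supporter sets partition the $M$ outsiders into $2\ell$ blocks of size $2\ell$.

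Next I would trace BOS. Each shared decoy has $M$ supporters, hence is $(1,1/M)$-affordable with ratio $1/M$, which is the global minimum ratio in the instance; so while a shared decoy remains, BOS picks one (each outsider pays $1/M$), and after the $h$ shared decoys are bought every outsider has budget exactly $\beta=1-h/M=1/(4\ell)$. The only candidates left are then the $2\ell$ gadgets and the $\ell$ members of $T$. A gadget $e_j$ is now $(\alpha,\rho)$-affordable with $\alpha=2\ell\beta=\tfrac12<1$ and $\rho=1/(2\ell)$, giving ratio $\rho/\alpha=1/\ell$; and every $c\in T$ is $(1,1/\ell)$-affordable, since its supporters are precisely $S$, still at full budget, so it too has ratio $1/\ell$, and one checks that nothing attains a strictly smaller ratio. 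Ordering the candidate list so the $e_j$'s precede the members of $T$ (a legitimate deterministic tie-break), BOS selects $e_1,\dots,e_{2\ell}$ in turn; on buying $e_j$ each of its $2\ell$ supporters must pay $1/(2\ell)>\beta$, overspends, and is set to $0$, so the next block is untouched and the argument repeats. After $h+2\ell=k$ purchases the committee is full and BOS halts with $W$ containing no member of $T$; hence $u_i(W)=0$ for every $v_i\in S$, i.e.\ all voters of the deserving group get no representative.

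The routine parts are checking that each shared decoy stays affordable at $\alpha=1$ and ratio $1/M$ throughout the first phase (its supporters always hold at least $(\ell+1)/M$, and $\ell+1>1$) and that the bookkeeping closes ($h+2\ell=k$ with $M=k-\ell$). The delicate point — and the only real obstacle — is the second phase: one must verify that \emph{no} candidate, shared or in $T$, ever has ratio strictly below $1/\ell$, so that BOS is genuinely indifferent between a gadget and a $T$-candidate and the tie-break alone forces the outcome. This is where the parameters are pinned to the boundary: a gadget built from $r$ equal-budget supporters can beat ratio $1/\ell$ only if $r>|S|=\ell$, but then the requirement that the shared and gadget decoys together fill a committee of size $k$ cannot be met inside only $M=4\ell^2$ outsiders; hence the gadget ratio cannot be pushed below $1/\ell$, and the proposition is obtained exactly at this tie (equivalently, under the tie-breaking rule described, reading ``no representatives under BOS'' as holding for one of the outcomes BOS returns). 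The same construction goes through verbatim for every $\ell\ge1$.
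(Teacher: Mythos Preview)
Your construction is exactly the one in the paper: $n=k=4\ell^2+\ell$, a cohesive group $S$ of size $\ell$ approving $\ell$ private candidates, $4\ell^2-\ell$ shared decoys approved by all $4\ell^2$ outsiders, and $2\ell$ gadget decoys partitioning the outsiders into blocks of size $2\ell$; the analysis that BOS first buys all shared decoys at ratio $1/(4\ell^2)$, then ties the gadgets and $T$-candidates at ratio $1/\ell$ and breaks ties toward the gadgets, is likewise identical. Your extra paragraph about why the parameters are pinned to the boundary is not in the paper, but it is correct (though note that a gadget with $r$ equal-budget supporters beats ratio $1/\ell$ only when $r>2\ell$, not $r>\ell$); the core argument matches the paper's proof line for line.
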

\toappendix{
  \sv{\negEJRbos*}
  \begin{proof}
	Fix $\ell\geq 1$ and consider the following instance of approval-based committee elections with $|V| = 4\ell^2 + \ell$ voters and $|C| = 4\ell^2 + 2\ell$ candidates; the available budget is equal to $b = 4\ell^2 + \ell$. The preferences of the voters are as follows:
	\begin{itemize}
		\item The first $\ell$ voters, referred to as group $V_1$, approve all $\ell$ candidates from $C_1 = \{c_1, c_2, \ldots, c_{\ell}\}$.
		\item The remaining $4\ell^2$ voters, referred to as group $V_2$, all approve $4\ell^2 - \ell$ common candidates from $C_2 = \{c_{\ell + 1}, c_{\ell + 2}, \ldots, c_{4\ell^2}\}$. These $4\ell^2$ voters are further divided into $2\ell$ subgroups, $V_2^{(1)}, V_2^{(2)}, \ldots, V_2^{(2\ell)}$, each consisting of $2\ell$ voters who each approve one of the remaining $2\ell$ candidates. Let us denote the set of these candidates as $C_3 = \{c_{4\ell^2 + 1}, c_{4\ell^2 + 2}, \ldots, c_{4\ell^2 + 2\ell}\}$.
	\end{itemize}
	According to the EJR, the voters in $V_1$ are entitled to $\ell$ candidates, since they all approve the same set of $\ell$ candidates and $|V_1| \geq \ell \cdot \nicefrac{|V|}{b} = \ell$. Further, $\ell = \left\lceil\frac{b-\ell}{4\ell}\right\rceil$.
	However, we will demonstrate that the outcome of BOS will not include any candidates from $C_1$.
	
	Each voter starts with a unit budget. A candidate from $C_1$ can be purchased with $\rho = \nicefrac{1}{\ell}$ and $\alpha = 1$. Similarly, a candidate from $C_2$ can be bought with  $\rho = \nicefrac{1}{4\ell^2}$ and $\alpha = 1$, and a candidate from $C_3$ with $\rho = \nicefrac{1}{2\ell}$ and $\alpha = 1$. Selecting all $4\ell^2 - \ell$ candidates from $C_2$ leaves each voter in $V_2$ with a remaining budget of:
	\begin{align*}
	1 - (4\ell^2 - \ell) \cdot \frac{1}{4\ell^2} = \frac{1}{4\ell} \text{.}
	\end{align*}
	At this stage, a candidate from $C_3$ can be purchased with $\rho = \nicefrac{1}{2\ell}$ and $\alpha = \nicefrac{1}{2}$, which results in $\nicefrac{\rho}{\alpha} = \nicefrac{1}{\ell}$. Therefore, BOS may choose any candidate from $C_1$ or $C_3$. If we break ties in favor of $C_3$, then ultimately, BOS will select all candidates from $C_2$ and $C_3$.
\end{proof}
}

\subsection{Method of Equal Shares with Bounded Overspending Plus}

While BOS solves multiple problems of Equal Shares, one can still argue that it is not always evident that it produces the most desirable outcome.

\begin{exof}[Selection of Unpopular Projects]
	\label{ex:unpopular projects}
	Consider an election with $m=310$ unit-cost candidates, and $n = 1000$ voters. 
	The budget is $b=\$10$. Voters $v_1,\dots,v_{700}$ approve ten projects $\text{A}_1$ to $\text{A}_{10}$, and each voter from $v_{701}$ to $v_{1000}$ approves a single project from $\text{B}_1$ to $\text{B}_{300}$; each such project is approved by only one voter. See \Cref{table:ex:unpopular projects}.
	
	\begin{table}[t]
		\centering
		\begin{tabularx}{\columnwidth}{lc|YYYYYY}
			& cost & $v_1$ & $\dots$ & $v_{700}$ & $v_{701}$ & $\dots$ & $v_{1000}$ \\
			\midrule
			Project $\text{A}_1$ & \$1 & \faCheck & \faCheck & \faCheck & & & \\
			\ldots & \$1 & \faCheck & \faCheck & \faCheck & & & \\
			Project $\text{A}_{10}$ & \$1 & \faCheck & \faCheck & \faCheck & & & \\
			Project $\text{B}_1$ & \$1 & & & & \faCheck & & \\
			\ldots & \$1 & & & & & \faCheck & \\
			Project $\text{B}_{300}$ & \$1 & & & & & & \faCheck \\
		\end{tabularx}
	\caption{An example of PB election in which BOS may output suboptimal outcome.}
	\label{table:ex:unpopular projects}
	\end{table}
	
	Consider how BOS operates on this instance.
	First, it would buy seven A-projects. 
	After that, voters $v_1$ to $v_{700}$ would not have any money left, so BOS would additionally buy three B-projects,
	which is arguably not the most effective allocation of the funds.
	On the other hand, the Method of Equal Shares would buy only seven A-projects, but its Add1U variant would select all ten A-projects, i.e., it would return the outcome for the initial endowment equal to $b_{\ini} = \$(1-\epsilon)$.
	\hfill $\lrcorner$
\end{exof}

Motivated by this observation we introduce a rule combining the key ideas of BOS and Add1U completion for Equal Shares.
\begin{description}
	\item[BOS Equal Shares Plus (BOS+)]
	In each round we first find the $(\alpha, \rho)$-affordable candidate $c$ that minimizes $\rho/\alpha$, as in standard BOS. 
	However, if buying candidate $c$ requires overspending, we do not buy it, but look for a better candidate that would not overspend more.
	Specifically, we first compute the maximal overspending assuming all voters overspend equally:
	\begin{align*}
		\Delta b = \frac{\cost(c) - \alpha \cost(c)}{|\{v_i \in V \colon \rho u_i(c) \geq b_i > 0\}|} \text{,}
	\end{align*}
	we temporarily set $b_i := b_i + \Delta b$ and pick the $(1, \rho)$-affordable project that fits within the budget and minimizes $\rho$. We charge the voters', revoke the increase of their entitlements, and proceed.
\end{description}
The pseudo-code of BOS+ is given in \Cref{app:bounded-overspending-plus}.

Observe that BOS+ would select ten A-projects in the instance from \Cref{ex:unpopular projects}: every attempt to 
buy a B-project 
would increase initial endowments of voters $v_1$--$v_{700}$ and allow them to buy yet another A-project.
Note also that this example highlights the fact that the optimal outcome may depend on the specific situation it is used in.
Indicatively, if the election method is used for selecting validators in a blockchain protocol, there is particular concern about not over-representing groups of voters or giving them too much voting power~\citep{cevallos2020validator}. 
Hence, BOS seems better suited for such applications. 
However, if the instance comes from participatory budgeting elections conducted by municipalities, the outcome produced by BOS+ appears to be much more aligned with expectations.	

For BOS+ we obtain the same proportionality guarantee as previously. Moreover, the hard instance from \Cref{prop:negEJRbos} also applies to BOS+.

\begin{restatable}[$\spadesuit$]{theorem}{bosPLUSejr}
	\label{thm:posPBPlus}
	For cost utilities, BOS+ satisfies EJR up to $\frac{n-|S|}{2|S|}\cdot c^*,$ where $c^*:=\max_{c\in C}\cost(c).$
\end{restatable}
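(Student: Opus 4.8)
The plan is to recycle the proof of \Cref{thm:posPB} almost verbatim, exploiting the fact that BOS+ behaves exactly like BOS except in rounds where the $(\alpha,\rho)$-affordable candidate $c$ that minimises $\rho/\alpha$ would force some voter to overspend; in such a round BOS+ instead inflates every positive budget by $\Delta b$ and buys the $(1,\rho')$-affordable candidate $c'$ with smallest $\rho'$. Fix $S$ and $T$ with $S$ deserving the satisfaction of $\cost(T)$, write $W$ for the outcome, and assume toward a contradiction that some $\hat c\in T\setminus W$ satisfies $u_i(W)<\cost(T)-\frac{n-|S|}{2|S|}c^*-\cost(\hat c)$ for every $v_i\in S$. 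I would re-establish, for BOS+, the three ingredients used in the proof of \Cref{thm:posPB}: (i) in every round each voter pays at most $1/|S|$ per unit of utility; (ii) in every round in which overspending occurs, strictly more than half of the voters who pay exhaust their budgets; (iii) the total overspending $\sum_{v_j\in V}\delta_j$ (the overspending of each $v_j$) is at most $\frac{n-|S|}{2|S|}c^*$. Once these hold, the closing computation of \Cref{thm:posPB} --- bounding the expenditure and overspending of $V\setminus S$, concluding that some $v_i\in S$ must already have satisfaction $\cost(T)-\frac{n-|S|}{2|S|}c^*-\cost(\hat c)$, and that $\hat c$ still fits in the budget --- applies word for word.

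For (i), a non-overspending round is literally a BOS round, so the original minimal-counterexample argument is unchanged. For an overspending round, take the first round in which $1/|S|$ is exceeded. As in \Cref{thm:posPB}, under the contradiction hypothesis $\hat c$ is still $(1,1/|S|)$-affordable \emph{before} inflation --- otherwise some $v_i\in S$ has already spent more than $b/n-\cost(\hat c)/|S|$, and, using $\rho\le 1/|S|$ in all earlier rounds, has satisfaction at least $|S|\bigl(b/n-\cost(\hat c)/|S|\bigr)\ge\cost(T)-\cost(\hat c)$, a contradiction. Hence the first-phase minimiser $c$ has $\rho/\alpha\le 1/|S|$, and, because inflating budgets only lowers the ratio at which $\hat c$ can be fully bought, the candidate $c'$ actually purchased has $\rho'=\rho'/1\le 1/|S|$ as well. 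In particular no $v_i\in S$ ever overspends, since its total payment is at most $u_i(W)/|S|<\cost(T)/|S|\le b/n$, exactly as in the proof of \Cref{thm:posPB}.

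The step I expect to be the main obstacle is bounding the overspending produced in a BOS+ round, because the candidate that is actually charged, $c'$, is not the one whose parameters define $\Delta b$. Here I would argue through the first-phase candidate $c$ with parameters $\alpha,\rho$: minimality together with the $(1,1/|S|)$-affordability of $\hat c$ gives $\rho/\alpha\le 1/|S|$, and the argument of \Cref{thm:posPB} (equal payments are cheapest, so $p\ge\alpha\cost(c)/|R_c|$) then yields $\alpha|R_c|\ge|S|$, where $R_c$ is the set of voters who would pay for $c$; meanwhile the proof of the second ingredient, applied to the \emph{hypothetical} purchase of $c$ (its argument only uses that $c$ minimises $\rho/\alpha$ among affordable candidates, which BOS+ still guarantees in its first phase), shows that the denominator of $\Delta b$ exceeds $|R_c|/2$. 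Combining,
\[
	\Delta b\;=\;\frac{(1-\alpha)\cost(c)}{|\{v_i\colon\rho\,u_i(c)\ge b_i>0\}|}\;<\;\frac{2(1-\alpha)\cost(c)}{|R_c|}\;\le\;\frac{2(|R_c|-|S|)\,c^*}{|R_c|^2}\;\le\;\frac{c^*}{2|S|},
\]
the last inequality being $(|R_c|-2|S|)^2\ge 0$. Since every voter overspends by at most $\Delta b$ in a BOS+ round, the temporary inflation is applied only to voters with positive balance (so a voter who once exhausts her funds is never revived, hence overspends in at most one round), and every overspending voter lies in $V\setminus S$, we obtain $\sum_{v_j\in V}\delta_j\le(n-|S|)\cdot\frac{c^*}{2|S|}$, which is (iii). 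The two points that need care are precisely (a) that the budget inflation cannot bring an exhausted voter back into play --- this must be read off the pseudocode of BOS+ in the appendix --- and (b) the legitimacy of invoking the ``more-than-half exhaust'' claim for the candidate $c$ that is computed but not bought. (For tightness there is nothing new to do: the hard instance of \Cref{prop:negEJRbos} never triggers the replacement step, since every candidate there is fully affordable, so it applies unchanged to BOS+.)
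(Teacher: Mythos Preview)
Your strategy coincides with the paper's: reuse \Cref{1/s} and \Cref{half_overspend_pb} on the first-phase candidate $c$ to get $\alpha|R_c|\ge|S|$ and that the denominator of $\Delta b$ exceeds $|R_c|/2$, deduce $\Delta b\le c^*/(2|S|)$, bound total overspending by $(n-|S|)\,c^*/(2|S|)$, and close exactly as in \Cref{thm:posPB}. You are in fact more thorough than the paper on one point it leaves implicit --- you explicitly argue that the inflated-phase purchase $c'$ satisfies $\rho'\le 1/|S|$, which is needed so that payments by voters in $S$ translate into satisfaction at rate $|S|$ in the closing computation.

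There is, however, a real gap precisely at the point you flag as (a). Reading the pseudocode in \Cref{app:bounded-overspending-plus}, the inflated budget is $b_i^*=b_i+\max(0,\Delta b-\mathrm{over}_i)$ for \emph{every} voter, not only those with $b_i>0$; a voter with $b_i=0$ and $\mathrm{over}_i<\Delta b$ \emph{is} revived and can overspend again, so ``overspends in at most one round'' is false. The conclusion you need still holds, but for a different reason: the $\mathrm{over}_i$ bookkeeping caps cumulative overspending. Whenever $\mathrm{pay}_i>b_i$ the update is $\mathrm{over}_i\leftarrow\mathrm{over}_i+(\mathrm{pay}_i-b_i)$, and since $\mathrm{pay}_i\le b_i^*=b_i+\max(0,\Delta b-\mathrm{over}_i)$ the new value of $\mathrm{over}_i$ is at most the current round's $\Delta b$. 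Hence each $\delta_j=\mathrm{over}_j\le\max_r\Delta b_r\le c^*/(2|S|)$, and summing over $V\setminus S$ yields (iii). The paper's proof is equally terse at this step (it simply writes ``As with BOS, we came to the conclusion\dots''), so this is a detail both you and the paper elide, but it is exactly what makes the overspending bound go through for BOS+.
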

\toappendix{
  \sv{\bosPLUSejr*}
  \begin{proof}
  The proof follows a similar strategy to that of \Cref{thm:posPB}. Note that both BOS and BOS+ initially identify the best-affordable project $c$ in the same way. The key difference is that BOS+ includes an additional step in which the budget is temporarily increased. As a result, the part of the proof concerning the identification of the best-affordable project $c$ can be directly reused. In particular, this applies to \Cref{1/s,half_overspend_pb}. Additionally, a significant portion of the proof of \Cref{overspendPB} remains applicable. 
  Specifically, as in the proof of \Cref{overspendPB}, we obtain that
\begin{align*}
\alpha \geq \frac{|S|}{|R|} \text{.}
\end{align*}
Now, we can bound the temporary increase in the voters' budgets, as
\begin{align*}
	\Delta b = \frac{\cost(c) - \alpha \cost(c)}{|\{v_i \in V \colon \rho u_i(c) \geq b_i > 0\}|} = \frac{\cost(c)(1 - \alpha)}{|R_{-}|} \text{.}
\end{align*}
By \Cref{half_overspend_pb} we further get that $|R_{-}| > |R|/2$ and thus
\begin{align*}
	\Delta b \leq \frac{2\cost(c)(1 - \alpha)}{|R|} \leq  \frac{2\cost(c)(1 - \frac{|S|}{|R|})}{|R|} = \frac{2\cost(c)(|R| - |S|)}{|R|^2}\leq  \frac{c^*}{2 |S|} \text{,}
\end{align*}
where the inequality follows simply by the fact that $|R|^2- 4|S|\!\cdot\! |R|+4|S|^2\geq 0$.

As with BOS, we came to the conclusion that the overall overspending of the voters is upper-bounded by  $\frac{n-|S|}{2|S|}c^*$. From there the proof follows the same way as the proof of \Cref{thm:posPB}.	
  \end{proof}
 }

\section{Empirical Analysis}
\label{sec:experiments}
Finally, we study our rules empirically
on real-world PB data from \pabulib~\citep{fal-fli-pet-pie-sko-sto-szu-tal:pb-experiments}
and on synthetic Euclidean elections.

\subsection{Pabulib Instances}
We computed the outcomes of the Utilitarian Method, Equal Shares (with Add1U completion),
FrES (with utilitarian completion), BOS, and BOS+ across all 1274 participatory budgeting instances in \pabulib.
To compare performance, we analyzed six statistics:
\myemph{score satisfaction}, \myemph{cost satisfaction}, \myemph{exclusion ratio}, \myemph{running time}, \myemph{EJR+ violations}, and \myemph{exhaustiveness}.
For the first four, we observed that the behavior largely depends on the instance size,
defined as the number of projects within it.
Thus, we partitioned the instances into four size ranges,
aiming for ranges with an almost equal number of instances.
In \Cref{fig:experiments:pabulib} we present the average of each metric by rule and size range.
The aggregated values for all six metrics can be found in \Cref{tab:pabulib:aggregated}.

\begin{table}
	\centering
	\begin{tabular}{l ccccc}
		\toprule
		Metric & Util. & Eq. Shares & BOS & BOS+ & FrES \\
		\midrule
		Avg. rel. cost satisfaction & 1.000 & 0.836 & 0.903 & 0.909 & 0.921 \\ 
		Avg. rel. score satisfaction & 1.000 & 1.201 & 1.160 & 1.169 & 1.029 \\ 
		Avg. exclusion ratio & 19.85\% & 17.62\% & 16.16\% & 16.50\% & 0.00\% \\ 
		Avg. running time in sec. & 0.001 & 6.822 & 0.086 & 0.263 & 2.151 \\ 
		Avg. EJR+ violations & 0.953 & 0.000 & 0.061 & 0.060 & 0.000 \\ 
		EJR+ violation instances & 26.20\% & 0.00\% & 4.59\% & 4.48\% & 0.00\% \\ 
		Avg. budget spending & 96.47\% & $44.84\%^{\dagger}$ & 93.98\% & 93.95\% & $93.40\%^{\dagger}$ \\ 
		Exhausted budgets & 100.00\% & $6.99\%^{\dagger}$ & 99.84\% & 99.69\% & $24.65\%^{\dagger}$ \\ 
		\bottomrule
	\end{tabular}
	\caption{Aggregated statistics from running our rules on the instances from \pabulib.
	The values for Equal Shares assume Add1U completion, except for the average budget spending and exhausted budgets.
	Similarly, FrES is completed in a utilitarian fashion except for these two cases. The usage of a completion method is denoted by $\dagger$.}
	\label{tab:pabulib:aggregated}
\end{table}

Detailed data on score and cost satisfaction as well as
exclusion ratio can be found in \Cref{app:experiments}.
There, we provide mean, standard deviation, and five quantiles of values for each datapoint
shown in the first three plots of \Cref{fig:experiments:pabulib},
together with the statistical significance of the differences in values for each pair of rules.
We also include the aggregated results for all statistics
broken down by the types of ballots used in the voting process.

\begin{figure*}[t]
\pgfdeclareplotmark{mystar}{
    \node[star,star point ratio=2.25,minimum size=6pt,
          inner sep=0pt,draw=none,solid,fill=Plum] {};
}
\pgfdeclareplotmark{mydiamond}{
    \node [diamond, minimum size=0.2cm, fill=NavyBlue, inner sep=0pt, aspect=2] {};
}
\begin{center}
\hspace{-1cm}
\begin{tikzpicture}
    \begin{axis}[
        width=6.5cm,
        height=4.5cm,
        xlabel={number of projects},
        ylabel={rel. score sat.},
        xlabel style={yshift=5pt},   
        ylabel style={yshift=-5pt},   
        xtick={1,2,3,4},
        xticklabels={{1--8}, {9--15}, {16--27}, {28+}},
        ytick={1.0, 1.1, 1.2, 1.3, 1.4}, 
        yticklabel style={/pgf/number format/fixed,/pgf/number format/precision=2,/pgf/number format/zerofill},
        legend pos=north east,
        legend cell align={left},
        legend style={nodes={scale=0.8, transform shape}},
        tick label style={font=\scriptsize},
        label style={font=\small},
        legend style={font=\small}
    ]
    
    \addplot[color=BrickRed,mark=*,thick,error bars/.cd,y dir=both,y explicit]
    table[x expr=\thisrowno{0}-0.10, y index=1, y error index=2] {data/score_sat.dat};
    
    \addplot[color=NavyBlue,mark=mydiamond,thick,error bars/.cd,y dir=both,y explicit]
    table[x expr=\thisrowno{0}-0.05, y index=3, y error index=4] {data/score_sat.dat};
    
    \addplot[color=OliveGreen,mark=square*,thick,error bars/.cd,y dir=both,y explicit]
    table[x expr=\thisrowno{0}, y index=5, y error index=6] {data/score_sat.dat};
    
    \addplot[color=YellowGreen,mark=triangle*,thick,error bars/.cd,y dir=both,y explicit]
    table[x expr=\thisrowno{0}+0.05, y index=7, y error index=8] {data/score_sat.dat};
    
    \addplot[color=Plum,mark=mystar,thick,error bars/.cd,y dir=both,y explicit]
    table[x expr=\thisrowno{0}+0.10, y index=9, y error index=10] {data/score_sat.dat};
    
    \end{axis}
\end{tikzpicture}
\qquad
\begin{tikzpicture}
    \begin{axis}[
        width=6.5cm,
        height=4.5cm,
        xlabel={number of projects},
        ylabel={rel. cost sat.},
        xlabel style={yshift=5pt},   
        ylabel style={yshift=-5pt},   
        xtick={1,2,3,4},
        xticklabels={{1--8}, {9--15}, {16--27}, {28+}},
        ytick={0.80,0.85,0.90,0.95,1.00}, 
        yticklabel style={/pgf/number format/fixed,/pgf/number format/precision=2,/pgf/number format/zerofill},
        legend pos=north east,
        legend cell align={left},
        legend style={nodes={scale=0.8, transform shape}},
        tick label style={font=\scriptsize},
        label style={font=\small},
        legend style={font=\small}
    ]
    
    \addplot[color=BrickRed,mark=*,thick,error bars/.cd,y dir=both,y explicit]
    table[x expr=\thisrowno{0}-0.10, y index=1, y error index=2] {data/cost_sat.dat};
    
    \addplot[color=NavyBlue,mark=mydiamond,thick,error bars/.cd,y dir=both,y explicit]
    table[x expr=\thisrowno{0}-0.05, y index=3, y error index=4] {data/cost_sat.dat};
    
    \addplot[color=OliveGreen,mark=square*,thick,error bars/.cd,y dir=both,y explicit]
    table[x expr=\thisrowno{0}, y index=5, y error index=6] {data/cost_sat.dat};
    
    \addplot[color=YellowGreen,mark=triangle*,thick,error bars/.cd,y dir=both,y explicit]
    table[x expr=\thisrowno{0}+0.05, y index=7, y error index=8] {data/cost_sat.dat};
    
    \addplot[color=Plum,mark=mystar,thick,error bars/.cd,y dir=both,y explicit]
    table[x expr=\thisrowno{0}+0.10, y index=9, y error index=10] {data/cost_sat.dat};
    
    \end{axis}
\end{tikzpicture}
\end{center}
\begin{center}
\hspace{-1cm}
\begin{tikzpicture}
    \begin{axis}[
        width=6.5cm,
        height=4.5cm,
        xlabel={number of projects},
        ylabel={exclusion ratio},
        xlabel style={yshift=5pt},   
        ylabel style={yshift=-5pt},   
        xtick={1,2,3,4},
        xticklabels={{1--8}, {9--15}, {16--27}, {28+}},
        ymin=0.075, ymax=0.325,
        ytick={0.10, 0.15, 0.20, 0.25, 0.30}, 
        yticklabel style={/pgf/number format/fixed,/pgf/number format/precision=2,/pgf/number format/zerofill},
        legend pos=north east,
        legend cell align={left},
        legend style={nodes={scale=0.8, transform shape}},
        tick label style={font=\scriptsize},
        label style={font=\small},
        legend style={font=\small}
    ]
    
    \addplot[color=BrickRed,mark=*,thick,error bars/.cd,y dir=both,y explicit]
    table[x expr=\thisrowno{0}-0.10, y index=1, y error index=2] {data/exclusion_ratio.dat};
    
    \addplot[color=NavyBlue,mark=mydiamond,thick,error bars/.cd,y dir=both,y explicit]
    table[x expr=\thisrowno{0}-0.05, y index=3, y error index=4] {data/exclusion_ratio.dat};
    
    \addplot[color=OliveGreen,mark=square*,thick,error bars/.cd,y dir=both,y explicit]
    table[x expr=\thisrowno{0}, y index=5, y error index=6] {data/exclusion_ratio.dat};
    
    \addplot[color=YellowGreen,mark=triangle*,thick,error bars/.cd,y dir=both,y explicit]
    table[x expr=\thisrowno{0}+0.05, y index=7, y error index=8] {data/exclusion_ratio.dat};
    
    \end{axis}
\end{tikzpicture}
\qquad
\begin{tikzpicture}
    \begin{axis}[
        width=6.5cm,
        height=4.5cm,
        xlabel={number of projects},
        ylabel={running time (s.)},
        xlabel style={yshift=5pt},   
        ylabel style={yshift=-5pt},   
        xtick={1,2,3,4},
        xticklabels={{1--8}, {9--15}, {16--27}, {28+}},
        ymode=log,
        ytick={0.0001, 0.001, 0.01, 0.1, 1, 10, 100}, 
        legend pos=north east,
        legend cell align={left},
        legend style={nodes={scale=0.8, transform shape}},
        tick label style={font=\scriptsize},
        label style={font=\small},
        legend style={font=\small}
        ]
    
    \addplot[color=BrickRed,mark=*,thick,error bars/.cd,y dir=both,y explicit]
    table[x expr=\thisrowno{0}-0.10, y index=1, y error index=2] {data/time.dat};
    
    \addplot[color=NavyBlue,mark=mydiamond,thick,error bars/.cd,y dir=both,y explicit]
    table[x expr=\thisrowno{0}-0.05, y index=3, y error index=4] {data/time.dat};
    
    \addplot[color=OliveGreen,mark=square*,thick,error bars/.cd,y dir=both,y explicit]
    table[x expr=\thisrowno{0}, y index=5, y error index=6] {data/time.dat};
    
    \addplot[color=YellowGreen,mark=triangle*,thick,error bars/.cd,y dir=both,y explicit]
    table[x expr=\thisrowno{0}+0.05, y index=7, y error index=8] {data/time.dat};
				
    \addplot[color=Plum,mark=mystar,thick,error bars/.cd,y dir=both,y explicit]
    table[x expr=\thisrowno{0}+0.10, y index=9, y error index=10] {data/time.dat};
    
    \end{axis}
\end{tikzpicture}
\qquad
\end{center}
\begin{center}
\hspace{-2cm}
\begin{tikzpicture}
    \draw[white] (-2.1,0) rectangle (0,0);
    \draw (0,0) rectangle (12,0.5);
    \def\y{0}

    \node (util_s) at (0.15, \y + 0.25) {};
    \node (util_e) at (1.8, \y + 0.25) {\small Utilitarian};
    \node [circle, minimum size=0.175cm, fill=BrickRed, inner sep=0pt] (_) at (0.58, \y + 0.25) {};
    \node (mes_s) at (2.9, \y + 0.25) {};
    \node (mes_e) at (4.6, \y + 0.25) {\small Eq. Shares};
    \node [diamond, minimum size=0.2cm, fill=NavyBlue, inner sep=0pt, aspect=2] (_) at (3.35, \y + 0.25) {};
    \node (bos_s) at (5.8, \y + 0.25) {};
    \node (bos_e) at (7.1, \y + 0.25) {\small BOS};
    \node [regular polygon,regular polygon sides=4, minimum size=0.225cm, fill=OliveGreen, inner sep=0pt] (_) at (6.305, \y + 0.25) {};
    \node (bosp_s) at (7.8, \y + 0.25) {};
    \node (bosp_e) at (9.2, \y + 0.25) {\small BOS+};
    \node [isosceles triangle,
        minimum width=0.18cm,
        minimum height=0.17cm,
        fill=YellowGreen,
        inner sep=0pt,
        shape border rotate=90,
        isosceles triangle stretches]
        (_) at (8.29, \y + 0.238) {};
    \node (fres_s) at (9.9, \y + 0.25) {};
    \node (fres_e) at (11.2, \y + 0.25) {\small FrES};
    \node[star,star point ratio=2.25,minimum size=6pt,
        inner sep=0pt,draw=none,solid,fill=Plum]
        (_) at (10.39, \y + 0.258) {};

    \path[thick]
    (util_s) edge[draw=BrickRed] (util_e)
    (mes_s) edge[draw=NavyBlue] (mes_e)
    (bos_s) edge[draw=OliveGreen] (bos_e)
    (bosp_s) edge[draw=YellowGreen] (bosp_e)
    (fres_s) edge[draw=Plum] (fres_e)
    ;
\end{tikzpicture}
\end{center}
	\caption{
        Average 
		score satisfaction (top-left),
		cost satisfaction (top-right),
        exclusion ratio (bottom-right), and
        running time in seconds (bottom-left)
		of the outputs of the considered rules
		based on \pabulib{} data by the number projects in an instance.
		Cost and score satisfactions are reported as fractions
		of the respective satisfaction under utilitarian rule.
		Thin vertical lines mark 90\% confidence intervals.
        Note that the $y$-axis in the running time plot is in a logarithmic scale.}
	\label{fig:experiments:pabulib}
\end{figure*}

\paragraph{Score and Cost Satisfaction.}
These metrics assess the total utility from the selected projects based on score and cost utility measures.
In \Cref{fig:experiments:pabulib}, the values are normalized against Utilitarian.
For score satisfaction, in medium and large instances, Equal Shares, BOS, and BOS+ yield similar and significantly higher results than FrES and Utilitarian; for small instances, the performance of all is comparable.
Regarding cost satisfaction, Utilitarian outperforms all others  across all instances. Indeed, it can be considered a greedy method maximizing cost satisfaction objective.
Yet, importantly, BOS and FrES outperform Equal Shares in small and medium instances.
Notably, BOS+ performs as good as BOS for the smaller instances and slightly surpasses it for the larger.

\paragraph{Exclusion Ratio.}
This metric represents the fraction of voters who do not support any of the selected projects.
For medium and large instances
Utilitarian excludes significantly more voters on average
than the other rules, which perform similarly. 
However, for the smallest instances, Equal Shares performs considerably worse than the others. 
This is due to small instances often resembling the Helenka Paradox.
For FrES, the exclusion ratio is always zero, 
so it is not included. 

\paragraph{Running Time.}
We observe that BOS and BOS+ are an order of magnitude faster than Equal Shares.
This is because Equal Shares with Add1U completion,
requires computing the outcome multiple times
until the final value of voter endowments is established.
The running time for FrES
grows much more sharply with the instance size increase
than for the other rules.
This can be explained by the fact that
FrES involves making multiple decisions about fractional purchases
rather than selecting whole projects.
The running times reported here are based on computation in Python
on a MacBook Air laptop with an Apple M3 processor and 16 GB of RAM.

\paragraph{EJR+ Violations.}
EJR+ is one of the strongest satisfiable proportionality axioms, strictly stronger than EJR~\citep{BP-ejrplus}.
Its satisfaction can be efficiently verified,
making it a suitable metric for empirical comparison of the proportionality of different rules.
Unfortunately, to date, it is defined only for approval ballots,
thus we limit our analysis to such instances.
On a high level, for each unselected candidate in an election,
we say that it violates EJR+ up to one
(we actually count violations of this established relaxation~\citep{BP-ejrplus}),
if it is approved by a sufficiently large group of voters
each of whom is inadequately satisfied by the outcome of a rule.
For Equal Shares, it is known that its outcome cannot have any violations.
In contrast, Utilitarian averages 0.933 violations per instance,
setting it apart from BOS and BOS+, averaging only 0.061 and 0.06, respectively.
This indicates that although our rules, in theory, do not guarantee EJR+ up to one,
they almost always yield proportional solutions according to this strong axiom.
In \Cref{tab:pabulib:aggregated}, for each rule we also provide
in how many instances there is at least one EJR+ violation.

\paragraph{Exhaustiveness}
Recall that the outcome of a rule is exhaustive,
if there is no unselected project with a cost not exceeding the remaining budget
(for FrES, we require that it either spends the whole budget
or selects all projects fully).
By definition, the Utilitarian Method is always exhaustive.
In contrast, if we run Equal Shares without any completion,
then it is exhaustive in only 7\% of the instances
and on average spends less than 45\% of the available budget.
On the other hand, BOS is non-exhaustive in
only 2 instances out of 1274 in the dataset.
Moreover, in both of these instances,
it spends more than 99\% of the budget
(only very cheap additional projects can fit the budget in both cases).
BOS+ is non-exhaustive in a few more instances,
but spends similar fraction of the budget on average.
As a result, while Equal Shares should not be used without a completion mechanism,
BOS and BOS+ can safely be deployed without one.
Finally, we note that FrES is rarely exhaustive.
This is typically caused by voters that support single projects.
If FrES selects such projects fully,
the remaining funds of their supporters will not be spent.

\subsection{Euclidean Instances}

We further illustrate the differences between the examined rules using synthetic Euclidean elections.
Our results are presented in \Cref{fig:experiments:euclidean}. 
This analysis again highlights that BOS offers more desirable results than Equal Shares.

\begin{figure}[t]
	\centering
	\setlength{\tabcolsep}{-1.2pt}
	\renewcommand{\arraystretch}{-6}
	\begin{tabular}{cccccc}
		{\small Voters} & {\small Utilitarian} & {\small Equal Shares} & {\small BOS} & {\small BOS+} & {\small FrES} \\
		\includegraphics[width=2.25cm]{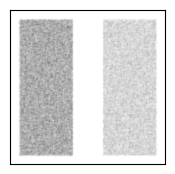} &
		\includegraphics[width=2.25cm]{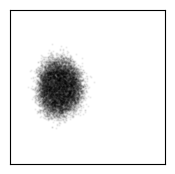} &
		\includegraphics[width=2.25cm]{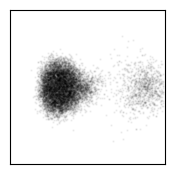} &
		\includegraphics[width=2.25cm]{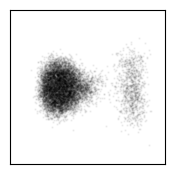} &
		\includegraphics[width=2.25cm]{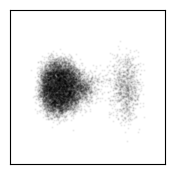} &
		\includegraphics[width=2.25cm]{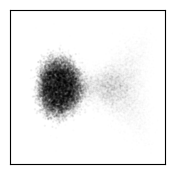} \\
		\includegraphics[width=2.25cm]{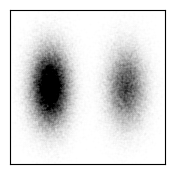} &
		\includegraphics[width=2.25cm]{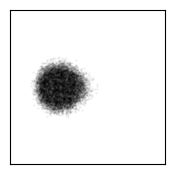} &
		\includegraphics[width=2.25cm]{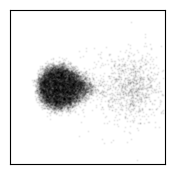} &
		\includegraphics[width=2.25cm]{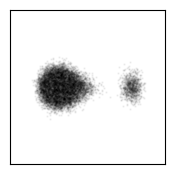} &
		\includegraphics[width=2.25cm]{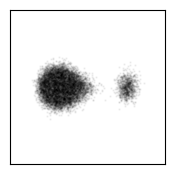} &
		\includegraphics[width=2.25cm]{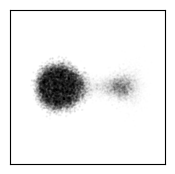} \\
		\includegraphics[width=2.25cm]{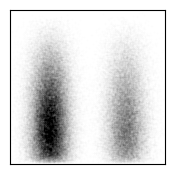} &
		\includegraphics[width=2.25cm]{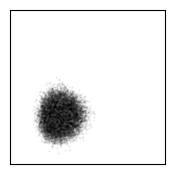} &
		\includegraphics[width=2.25cm]{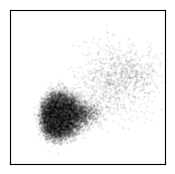} &
		\includegraphics[width=2.25cm]{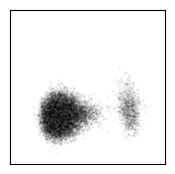} &
		\includegraphics[width=2.25cm]{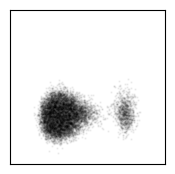} &
		\includegraphics[width=2.25cm]{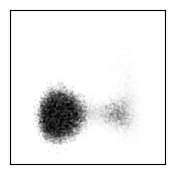} \\
	\end{tabular}
	\caption{Results on Euclidean elections.
		The first column presents the superimposed positions of voters in all generated elections and the following columns show positions of candidates selected by respective rules
		(for FrES, the opacity reflects the selected fraction of candidates).}
	\label{fig:experiments:euclidean}
\end{figure}

In the examined setting, each voter and candidate is represented as a point in a 2D space.
The utility of the voter $ v_i $ for candidate $ c $
is defined as $ u_i(c) = (\dist(v_i, c) + \lambda)^{-1} $,
where $ \dist(v_i, c) $ is the Euclidean distance between them, and
the denominator is shifted by the constant of $\lambda$
to bound the maximal utility.
Here, we present the results for $\lambda = 1$;
the results for $\lambda = \nicefrac{1}{2}$ and $\lambda = 2$
(in \Cref{app:experiments}) lead to similar conclusion.
For each election, we placed 150 candidates uniformly at random in a unit square
and considered three different voter distributions.
In the first, 100 voters were drawn uniformly from the rectangle $(0.05, 0.05)-(0.4, 0.95)$
and 50 from $(0.6, 0.05)-(0.95, 0.95)$. 
The second and third used Gaussian distributions for $x$-coordinates:
100 voters centered at 0.25 and 50 centered at 0.75. 
In the second distribution, $y$-coordinates were drawn from a Gaussian centered at 0.5,
while in the third, they were drawn from a beta distribution with $a=1.5$ and $b=3$,
pushing more voters to the bottom of the square. 
The candidates had unit costs, and the budget was set to 10.
We sampled 1000 elections per distribution.

The Utilitarian Method selects only candidates from the left-hand side of the square.
Equal Shares, BOS, BOS+, and FrES also select candidates from the right-hand side
but less than 1/3 of the total, which is the proportion of the voters there. 
This is because the right-hand side voters gain some utility from left-hand side candidates,
thus spending part of their budget on them. 
In the third distribution,
Equal Shares selects candidates from the top-right part of the square,
which has fewer voters, while BOS and BOS+ choose candidates from the bottom-right,
where there is greater overall support.
BOS+ and BOS generally output very similar sets of candidates,
with BOS+ giving a bit more concentrated outputs.
FrES's selection resembles that of BOS,
particularly in the last two distributions,
reinforcing the view that BOS is a rounding of FrES.

\section{Conclusion}

We have introduced the Method of Equal Shares with Bounded Overspending (BOS), a robust variant of the Method of Equal Shares, along with its refined version, BOS+. We have identified inefficiencies in the original method, illustrated by simple well-structured examples and supported by the analysis of real-life data. Specifically, we have shown that Equal Shares underperforms in small PB instances. BOS effectively handles all types of instances, in particular solving the problematic issues of Equal Shares.
Our new rule maintains strong fairness properties, which we have confirmed both theoretically and in experiments. 
In the process of developing BOS, we have also introduced and analyzed FrES, a fractional variant of Equal Shares for the PB model with additive utilities. 
Future research will focus on further theoretical guarantees for BOS and BOS+.

\section*{Acknowledgements}
T. \Was{} was supported by UK Engineering and Physical Sciences Research Council (EPSRC) under grant EP/X038548/1. The rest of the authors were partially supported by the European Union (ERC, PRO-DEMOCRATIC, 101076570). Views and opinions expressed are however those of the authors only and do not necessarily reflect those of the European Union or the European Research Council. Neither the European Union nor the granting authority can be held responsible for them.

\bibliographystyle{apalike}
\bibliography{simplified-mes}

\begin{thebibliography}{}

\bibitem[Aziz et~al., 2019]{aziz2019fair}
Aziz, H., Bogomolnaia, A., and Moulin, H. (2019).
\newblock Fair mixing: the case of dichotomous preferences.
\newblock In {\em Proceedings of the 20th ACM Conference on Economics and Computation}, pages 753--781.

\bibitem[Aziz et~al., 2017]{aziz2017justified}
Aziz, H., Brill, M., Conitzer, V., Elkind, E., Freeman, R., and Walsh, T. (2017).
\newblock Justified representation in approval-based committee voting.
\newblock {\em Social Choice and Welfare}, 48(2):461--485.

\bibitem[Bei et~al., 2024]{bei2024truthful}
Bei, X., Lu, X., and Suksompong, W. (2024).
\newblock Truthful cake sharing.
\newblock {\em Social Choice and Welfare}, 64:309–343.

\bibitem[Benade et~al., 2023]{benade2023participatory}
Benade, G., K.~Gal, K., and Fairstein, R. (2023).
\newblock Participatory budgeting design for the real world.
\newblock In {\em Proceedings of the 37th AAAI Conference on Artificial Intelligence (AAAI'23)}, pages 5633--5640.

\bibitem[Bogomolnaia et~al., 2005]{bogomolnaia2005collective}
Bogomolnaia, A., Moulin, H., and Stong, R. (2005).
\newblock Collective choice under dichotomous preferences.
\newblock {\em Journal of Economic Theory}, 122(2):165--184.

\bibitem[Brandl et~al., 2021]{brandl2021distribution}
Brandl, F., Brandt, F., Peters, D., and Stricker, C. (2021).
\newblock Distribution rules under dichotomous preferences: two out of three ain't bad.
\newblock In {\em Proceedings of the 22nd ACM Conference on Economics and Computation}, pages 158--179.

\bibitem[Brill et~al., 2023]{brill2023proportionality}
Brill, M., Forster, S., Lackner, M., Maly, J., and Peters, J. (2023).
\newblock Proportionality in approval-based participatory budgeting.
\newblock In {\em Proceedings of the 37th AAAI Conference on Artificial Intelligence (AAAI'23)}, pages 5524--5531.

\bibitem[Brill and Peters, 2023]{BP-ejrplus}
Brill, M. and Peters, J. (2023).
\newblock Robust and verifiable proportionality axioms for multiwinner voting.
\newblock In {\em Proceedings of the 24th ACM Conference on Economics and Computation}, page 301.

\bibitem[Cevallos and Stewart, 2021]{cevallos2020validator}
Cevallos, A. and Stewart, A. (2021).
\newblock A verifiably secure and proportional committee election rule.
\newblock In {\em Proceedings of the 3rd {ACM} Conference on Advances in Financial Technologies (AFT'21)}, pages 29--42.

\bibitem[Fain et~al., 2016]{fain2016core}
Fain, B., Goel, A., and Munagala, K. (2016).
\newblock The core of the participatory budgeting problem.
\newblock In {\em Proceedings of the 12th International Conference on Web and Internet Economics (WINE'16)}, pages 384--399.

\bibitem[Faliszewski et~al., 2023]{fal-fli-pet-pie-sko-sto-szu-tal:pb-experiments}
Faliszewski, P., Flis, J., Peters, D., Pierczynski, G., Skowron, P., Stolicki, D., Szufa, S., and Talmon, N. (2023).
\newblock Participatory budgeting: Data, tools and analysis.
\newblock In {\em Proceedings of the 32nd International Joint Conference on Artificial Intelligence (IJCAI'23)}, pages 2667--2674.

\bibitem[Faliszewski et~al., 2017]{FSST-trends}
Faliszewski, P., Skowron, P., Slinko, A., and Talmon, N. (2017).
\newblock Multiwinner voting: A new challenge for social choice theory.
\newblock In Endriss, U., editor, {\em Trends in Computational Social Choice}. AI Access.

\bibitem[Kroer and Peters, 2025]{kroer2025computing}
Kroer, C. and Peters, D. (2025).
\newblock Computing lindahl equilibrium for public goods with and without funding caps.
\newblock In {\em Proceedings of the 26th ACM Conference on Economics and Computation}.

\bibitem[Lackner and Skowron, 2023]{lac-sko:multiwinner-book}
Lackner, M. and Skowron, P. (2023).
\newblock {\em Multi-Winner Voting with Approval Preferences}.
\newblock Springer Briefs in Intelligent Systems, Springer. Springer.

\bibitem[Los et~al., 2022]{los2022proportional}
Los, M., Christoff, Z., and Grossi, D. (2022).
\newblock Proportional budget allocations: Towards a systematization.
\newblock In {\em Proceedings of the 31st International Joint Conference on Artificial Intelligence (IJCAI'22)}, pages 398--404.

\bibitem[Lu et~al., 2024]{lu2024approval}
Lu, X., Peters, J., Aziz, H., Bei, X., and Suksompong, W. (2024).
\newblock Approval-based voting with mixed goods.
\newblock {\em Social Choice and Welfare}, 62:643–677.

\bibitem[Munagala et~al., 2022]{munagala2022approximate}
Munagala, K., Shen, Y., Wang, K., and Wang, Z. (2022).
\newblock Approximate core for committee selection via multilinear extension and market clearing.
\newblock In {\em Proceedings of the Annual ACM-SIAM Symposium on Discrete Algorithms (SODA'22)}, pages 2229--2252.

\bibitem[Papasotiropoulos et~al., 2025]{pap-ski-sko-was25}
Papasotiropoulos, G., Skibski, O., Skowron, P., and \Was{}, T. (2025).
\newblock Proportional selection in networks.
\newblock {\em arXiv preprint arXiv:2502.03545}.

\bibitem[Peters et~al., 2021]{pet-pie-sko:c:participatory-budgeting-cardinal}
Peters, D., Pierczyński, G., and Skowron, P. (2021).
\newblock Proportional participatory budgeting with additive utilities.
\newblock In {\em Proceedings of the 35th Conference on Neural Information Processing Systems (NeurIPS'21)}, pages 12726--12737.

\bibitem[Peters and Skowron, 2020]{pet-sko:laminar}
Peters, D. and Skowron, P. (2020).
\newblock Proportionality and the limits of welfarism.
\newblock In {\em Proceedings of the 21st ACM Conference on Economics and Computation}, pages 793--794.

\bibitem[Rey and Maly, 2023]{rey-mal:survey-on-pb}
Rey, S. and Maly, J. (2023).
\newblock The (computational) social choice take on indivisible participatory budgeting.
\newblock {\em arXiv preprint arXiv:2303.00621}.

\bibitem[Suzuki and Vollen, 2024]{suzuki2024maximum}
Suzuki, M. and Vollen, J. (2024).
\newblock Maximum flow is fair: A network flow approach to committee voting.
\newblock In {\em Proceedings of the 25th ACM Conference on Economics and Computation}, pages 964--983.

\end{thebibliography}

\clearpage
\appendix
\appendixsectionformat 

\section{Complete Examples Illustrating the Rules’ Execution}
\label{app:examples}
In this section, we provide detailed description
of all steps in the runs of Equal Shares, FrES, and BOS
on the instance from \Cref{ex:mes}.
Let us repeat the table with the projects and voters for convenience.

\begin{center}
	\begin{tabularx}{12cm}{lc|XXXXXXXXXX}
		& cost & $v_1$  & $v_2$ & $v_3$ & $v_4$ & $v_5$ & $v_6$ & $v_7$ & $v_8$ & $v_9$ & $v_{10}$ \\
		\midrule
		Project A 
		& \$300k & \faCheck & \faCheck & \faCheck & \faCheck & \faCheck & \faCheck & & & & \\
		Project B  
		& \$400k & & \faCheck & \faCheck & \faCheck & \faCheck & \faCheck & & & & \\
		Project C  
		& \$300k & & \faCheck & \faCheck & \faCheck  & \faCheck  & & & & & \faCheck \\
		Project D 
		& \$240k & & & & & & & \faCheck & \faCheck & \faCheck & \faCheck \\
		Project E  
		& \$170k & & \faCheck & & & & & \faCheck & \faCheck & \faCheck & \\
		Project F  
		& \$100k & & & & & & \faCheck & & & \faCheck & \faCheck  \\
	\end{tabularx}
\end{center}

The Utilitarian Method would select projects solely based on their vote count, thus choosing Projects A, B and C. 
This seems unfair since a large fraction of the voters (namely voters $v_7$ to $v_9$ making up $30\%$ of the electorate) would not approve any of the selected projects.

\subsection{Method of Equal Shares}

At the beginning, the Method of Equal Shares assigns \$100k to each voter. The table that appears below indicates the available (virtual) budget of each voter (in thousands of dollars).
\begin{center}
	\setlength{\tabcolsep}{2pt}
	\begin{tabularx}{10cm}{c|YYYYYYYYYY}
		& $v_1$ & $v_2$ & $v_3$ & $v_4$ & $v_5$ & $v_6$ & $v_7$ & $v_8$ & $v_9$ & $v_{10}$ \\
		\midrule
		\ $b_i$\ \ & \small $100$ & \small $100$ & \small $100$ & \small $100$ & \small $100$ & \small $100$ & \small $100$ & \small $100$ & \small $100$ & \small $100$ \\
	\end{tabularx}
\end{center}
We first need to determine how affordable each project is. Project A is $\nicefrac{1}{6}$-affordable, as it can be funded if each of its supporters pays \$50k, which is $\nicefrac{1}{6}$ of its cost (recall that we assumed cost-utilities for this example).
Analogously, each other project that received $x$ votes is $\nicefrac{1}{x}$-affordable.
\begin{center}
	\setlength{\tabcolsep}{2pt}
	\begin{tabularx}{5cm}{c|aYYYYY}
		& A & B & C & D & E & F \\
		\midrule
		\ $\rho$\ \  & \nicefrac{1}{6} & \nicefrac{1}{5} & \nicefrac{1}{5} & \nicefrac{1}{4} & \nicefrac{1}{4} & \nicefrac{1}{3} \\
	\end{tabularx}
\end{center}
Thus, in the first round the rule simply selects the project with the highest vote count, namely project A. After paying its cost, voters $v_1$ to $v_6$ are left with $\$(100\text{k} - 300\text{k}/6) = \$50\text{k}$. Voters' remaining budget follows.
\begin{center}
	\setlength{\tabcolsep}{2pt}
	\begin{tabularx}{10cm}{c|YYYYYYYYYY}
		& $v_1$ & $v_2$ & $v_3$ & $v_4$ & $v_5$ & $v_6$ & $v_7$ & $v_8$ & $v_9$ & $v_{10}$ \\
		\midrule
		\ $b_i$\ \ & \small $50$ & \small $50$ & \small $50$ & \small $50$ & \small $50$ & \small $50$ & \small $100$ & \small $100$ & \small $100$ & \small $100$ \\
	\end{tabularx}
\end{center}

In the second round, project B is no longer affordable, as its supporters do not have a total of at least \$400k to fund it. Project C is $\nicefrac{1}{3}$-affordable. Indeed, to fund it, voters $v_2$, $v_3$, $v_4$, $v_5$ and $v_{10}$ would have to use all their money. This means, in particular, that voter $v_{10}$ would pay \$100k out of \$300k which is $\nicefrac{1}{3}$ of the cost of the project. The rest of the projects remain $\rho$-affordable for the same values of $\rho$, as their cost can still be spread equally among their supporters. 
\begin{center}
	\setlength{\tabcolsep}{2pt}
	\begin{tabularx}{5cm}{c|aYYaYY}
		& A & B & C & D & E & F \\
		\midrule
		\ $\rho$\ \ & -- & -- & \nicefrac{1}{3} & \nicefrac{1}{4} & \nicefrac{1}{4} & \nicefrac{1}{3} \\
	\end{tabularx}
\end{center}
Hence, projects D and E are both $\rho$-affordable for the smallest value of $\rho = \nicefrac{1}{4}$. Let us assume the former project is selected, by breaking ties lexicographically. After paying its cost, voters $v_7$ to $v_{10}$ are left with $\$(100\text{k} - 240\text{k}/4) = \$40\text{k}$.
\begin{center}
	\setlength{\tabcolsep}{2pt}
	\begin{tabularx}{10cm}{c|YYYYYYYYYY}
		& $v_1$ & $v_2$ & $v_3$ & $v_4$ & $v_5$ & $v_6$ & $v_7$ & $v_8$ & $v_9$ & $v_{10}$ \\
		\midrule
		\ $b_i$\ \  & \small $50$ & \small $50$ & \small $50$ & \small $50$ & \small $50$ & \small $50$ & \small $40$ & \small $40$ & \small $40$ & \small $40$ \\
	\end{tabularx}
\end{center}

In the third round, project C is no longer affordable, as its supporters have together only \$240k. Project E is $\nicefrac{5}{17}$-affordable: voters $v_7$, $v_8$ and $v_9$ have only \$120k, which means that voter $v_2$ needs to pay \$50k to cover the cost of \$170k.
Project F remains $\nicefrac{1}{3}$-affordable.
\begin{center}
	\setlength{\tabcolsep}{2pt}
	\begin{tabularx}{5cm}{c|aYYaaY}
		& A & B & C & D & E & F \\
		\midrule
		\ $\rho$\ \ & -- & -- & -- & -- & \nicefrac{5}{17} & \nicefrac{1}{3} \\
	\end{tabularx}
\end{center}
As a result, project E is selected and voters $v_2$, $v_7$, $v_8$, and $v_9$ run out of money.
\begin{center}
	\setlength{\tabcolsep}{2pt}
	\begin{tabularx}{10cm}{c|YYYYYYYYYY}
		& $v_1$ & $v_2$ & $v_3$ & $v_4$ & $v_5$ & $v_6$ & $v_7$ & $v_8$ & $v_9$ & $v_{10}$ \\
		\midrule
		\ $b_i$\ \  & \small $50$ & \small $0$ & \small $50$ & \small $50$ & \small $50$ & \small $50$ & \small $0$ & \small $0$ & \small $0$ & \small $40$ \\
	\end{tabularx}
\end{center}

At this point, no project is affordable since the supporters of projects B, C and F have in total \$200k, \$150k and \$90k, respectively, therefore, the procedure stops having selected the outcome $\{A,D,E\}$. Note that the purchased bundle comes at a total cost of \$710k, which is \$290k less than the initially available budget. Thus, in principle, we could afford to additionally fund project F. However, the supporters of this project do not have enough (virtual) money to fund it, and so the project is not selected by the Method of Equal Shares. 

Clearly, the selection made by the Method of Equal Shares is less discriminatory than the one by Utilitarian, as each voter approves at least one of the selected projects.

\subsection{Fractional Equal Shares}

Note that for cost utilities, in each round Fractional Equal Shares selects the project with the most supporters who still have money left.
The method purchases the largest portion of the project that can be covered with equal payments of the supporters.

In the first round, as in the Method of Equal Shares, project A is chosen. It is bought in full, as its whole price can be split equally between supporters, each paying \$50k.
Voters' remaining funds appear below.
\begin{center}
	\setlength{\tabcolsep}{2pt}
	\begin{tabularx}{10cm}{c|YYYYYYYYYY}
		& $v_1$ & $v_2$ & $v_3$ & $v_4$ & $v_5$ & $v_6$ & $v_7$ & $v_8$ & $v_9$ & $v_{10}$ \\
		\midrule
		\ $b_i$\ \ & \small $50$ & \small $50$ & \small $50$ & \small $50$ & \small $50$ & \small $50$ & \small $100$ & \small $100$ & \small $100$ & \small $100$ \\
	\end{tabularx}
\end{center}

In the second round, projects B or C can be selected. Let us assume project C is selected. Since voters $v_2$ to $v_5$ have only \$50k left, only $\nicefrac{5}{6}$ of it is bought and each voter pays \$50k as the following table indicates.
\begin{center}
	\setlength{\tabcolsep}{2pt}
	\begin{tabularx}{6cm}{c|aYaYYY}
		& A & B & C & D & E & F \\
		\midrule
		\ $\rho$\ \  & -- & \nicefrac{1}{5} & \nicefrac{1}{5} & \nicefrac{1}{4} & \nicefrac{1}{4} & \nicefrac{1}{3} \\
		\midrule
		\ $\alpha$\ \  & 1 & & $\nicefrac{5}{6}$ \\
	\end{tabularx}
\end{center}
The following table depicts the remaining amount of money of each voter.
\begin{center}
	\setlength{\tabcolsep}{2pt}
	\begin{tabularx}{10cm}{c|YYYYYYYYYY}
		& $v_1$ & $v_2$ & $v_3$ & $v_4$ & $v_5$ & $v_6$ & $v_7$ & $v_8$ & $v_9$ & $v_{10}$ \\
		\midrule
		\ $b_i$\ \ & \small $50$ & \small $0$ & \small $0$ & \small $0$ & \small $0$ & \small $50$ & \small $100$ & \small $100$ & \small $100$ & \small $50$ \\
	\end{tabularx}
\end{center}

Since voters $v_2$ to $v_5$ run out of money, in the third round project B has only one vote and project E has 3 votes instead of 4. Hence, project D with 4 votes is selected.
\begin{center}
	\setlength{\tabcolsep}{2pt}
	\begin{tabularx}{6cm}{c|aYaaYY}
		& A & B & C & D & E & F \\
		\midrule
		\ $\rho$\ \  & -- & \nicefrac{1}{1} & \nicefrac{1}{1} & \nicefrac{1}{4} & \nicefrac{1}{3} & \nicefrac{1}{2} \\
		\midrule
		\ $\alpha$\ \  & 1 & & $\nicefrac{5}{6}$ & \nicefrac{5}{6} \\
	\end{tabularx}
\end{center}
For buying project D all voters pay the maximal equal price: \$50k and their remaining budget appears below:
\begin{center}
	\setlength{\tabcolsep}{2pt}
	\begin{tabularx}{10cm}{c|YYYYYYYYYY}
		& $v_1$ & $v_2$ & $v_3$ & $v_4$ & $v_5$ & $v_6$ & $v_7$ & $v_8$ & $v_9$ & $v_{10}$ \\
		\midrule
		\ $b_i$\ \ & \small $50$ & \small $0$ & \small $0$ & \small $0$ & \small $0$ & \small $50$ & \small $50$ & \small $50$ & \small $50$ & \small $0$ \\
	\end{tabularx}
\end{center}

In the fourth round, project D lost one vote as $v_{10}$ run out of money, but still has the highest number of votes (ex aequo with project E). Hence, Fractional Equal Shares buys the remaining $\nicefrac{1}{6}$ of project D asking voters $v_7$ to $v_9$ to pay \$13.$\bar{3}$ each.
\begin{center}
	\setlength{\tabcolsep}{2pt}
	\begin{tabularx}{6cm}{c|aYaaYY}
		& A & B & C & D & E & F \\
		\midrule
		\ $\rho$\ \  & -- & \nicefrac{1}{1} & -- & \nicefrac{1}{3} & \nicefrac{1}{3} & \nicefrac{1}{2} \\
		\midrule
		\ $\alpha$\ \  & 1 & & $\nicefrac{5}{6}$ & 1 \\
	\end{tabularx}
\end{center}
After the purchase, the remaining funds are as follows.
\begin{center}
	\setlength{\tabcolsep}{2pt}
	\begin{tabularx}{10cm}{c|YYYYYYYYYY}
		& $v_1$ & $v_2$ & $v_3$ & $v_4$ & $v_5$ & $v_6$ & $v_7$ & $v_8$ & $v_9$ & $v_{10}$ \\
		\midrule
		\ $b_i$\ \ & \small $50$ & \small $0$ & \small $0$ & \small $0$ & \small $0$ & \small $50$ & \small $36.\bar{6}$ & \small $36.\bar{6}$ & \small $36.\bar{6}$ & \small $0$ \\
	\end{tabularx}
\end{center}

Since the fourth round did not change the numbers of supporters as no new voters ran out of budget, a fraction of $\nicefrac{11}{17}$ of project E is bought next. 
\begin{center}
	\setlength{\tabcolsep}{2pt}
	\begin{tabularx}{6cm}{c|aYaaaY}
		& A & B & C & D & E & F \\
		\midrule
		\ $\rho$\ \  & -- & \nicefrac{1}{1} & -- & -- & \nicefrac{1}{3} & \nicefrac{1}{2} \\
		\midrule
		\ $\alpha$\ \  & 1 & & $\nicefrac{5}{6}$ & 1 & \nicefrac{11}{17} \\
	\end{tabularx}
\end{center}
Regarding voters' currently available budget, only $v_1$ and $v_6$ have a positive amount, each still having \$50k. 
\begin{center}
	\setlength{\tabcolsep}{2pt}
	\begin{tabularx}{10cm}{c|YYYYYYYYYY}
		& $v_1$ & $v_2$ & $v_3$ & $v_4$ & $v_5$ & $v_6$ & $v_7$ & $v_8$ & $v_9$ & $v_{10}$ \\
		\midrule
		\ $b_i$\ \ & \small $50$ & \small $0$ & \small $0$ & \small $0$ & \small $0$ & \small $50$ & \small $0$ & \small $0$ & \small $0$ & \small $0$ \\
	\end{tabularx}
\end{center}
As a result, projects B and F have one supporter each whose money can be used to buy a portion of one of these projects. Assume $\nicefrac{1}{2}$ of project F is bought and $v_6$ pays for it. 
\begin{center}
	\setlength{\tabcolsep}{2pt}
	\begin{tabularx}{6cm}{c|aYaaaa}
		& A & B & C & D & E & F \\
		\midrule
		\ $\rho$\ \  & -- & \nicefrac{1}{1} & -- & -- & -- & \nicefrac{1}{1} \\
		\midrule
		\ $\alpha$\ \  & 1 & & $\nicefrac{5}{6}$ & 1 & \nicefrac{11}{17} & \nicefrac{1}{2} \\
	\end{tabularx}
\end{center}

Finally, the only voter with positive amount of money is $v_1$ who is left with \$50k.
However, the only project that $v_1$ supports has already been bought.
Thus, Fraction Equal Shares concludes.

As a result, in our example, FrES allocated \$550k to projects A--C
and \$400k to projects D--F, while the Method of Equal Shares allocated \$300k to the former and \$410k to the latter.
In what follows, we will propose a modification of Equal Shares
that is more aligned with the outcomes of FrES
and in this example spends more funds on projects A--C than on D--F.

\subsection{BOS Equal Shares}

Consider the first round. At the beginning, each voter has the same amount of money:
\begin{center}
	\setlength{\tabcolsep}{2pt}
	\begin{tabularx}{10cm}{l|YYYYYYYYYY}
		& $v_1$ & $v_2$ & $v_3$ & $v_4$ & $v_5$ & $v_6$ & $v_7$ & $v_8$ & $v_9$ & $v_{10}$ \\
		\midrule
		\ $b_i$\ \  & \small $100$ & \small $100$ & \small $100$ & \small $100$ & \small $100$ & \small $100$ & \small $100$ & \small $100$ & \small $100$ & \small $100$
	\end{tabularx}
\end{center}			
Note that the cost of every project can be covered by its supporters.
As observed in \Cref{ex:bos_helenka}, for every project, $\alpha=1$ is optimal, which corresponds to buying the project in full.
\begin{center}
	\setlength{\tabcolsep}{2pt}
	\begin{tabularx}{7cm}{l|aYYYYY}
		& A & B & C & D & E & F \\
		\midrule
		\ $\alpha = 1, \rho$\ \ & \nicefrac{1}{6} & \nicefrac{1}{5} & \nicefrac{1}{5} & \nicefrac{1}{4} & \nicefrac{1}{4} & \nicefrac{1}{3}
	\end{tabularx}
\end{center}
Hence, BOS, as the Method of Equal Shares, selects project A, being the one with the highest number of votes and equally distributes its cost among the supporters.
After the purchase, voters have the following amount of money:
\begin{center}
	\setlength{\tabcolsep}{2pt}
	\begin{tabularx}{10cm}{l|YYYYYYYYYY}
		& $v_1$ & $v_2$ & $v_3$ & $v_4$ & $v_5$ & $v_6$ & $v_7$ & $v_8$ & $v_9$ & $v_{10}$ \\
		\midrule
		\ $b_i$\ \  & \small $50$ & \small $50$ & \small $50$ & \small $50$ & \small $50$ & \small $50$ & \small $100$ & \small $100$ & \small $100$ & \small $100$ 
	\end{tabularx}
\end{center}			

In the second round the remaining budget is $\$700\text{k}$. 	
Reasonable values of $\alpha$ and the corresponding $\rho$ are as follows:
\begin{center}
	\setlength{\tabcolsep}{2pt}
	\begin{tabularx}{7cm}{l|aYaYYY}
		& A & B & C & D & E & F \\
		\midrule
		\ $\alpha = 1, \rho$\ \  & -- & -- & \nicefrac{1}{3} & \nicefrac{1}{4} & \nicefrac{1}{4} & \nicefrac{1}{3} \\
		\ $\alpha = \nicefrac{5}{6}, \rho$\ \  & -- & -- & \nicefrac{1}{5} & \nicefrac{1}{4} & \nicefrac{1}{4} & \nicefrac{1}{3} \\
		\ $\alpha = \nicefrac{5}{8}, \rho$\ \ & -- & \nicefrac{1}{5} & \nicefrac{1}{5} & \nicefrac{1}{4} & \nicefrac{1}{4} & \nicefrac{1}{3} \\
	\end{tabularx}
\end{center}
Since voters $v_1$ to $v_6$ have only $\$50\text{k}$ each, project B cannot be purchased in full. However, its supporters can cover $\alpha = \nicefrac{250}{400} = \nicefrac{5}{8}$ of its cost. This cost would be equally spread among five voters, so $\rho = \nicefrac{1}{5}$ and the ratio ($\rho$ scaled by $\alpha$) equals $\rho/\alpha = \nicefrac{8}{25}$.

Now, let us consider project C. This project can be bought in full, but only if voter $v_{10}$ pays $\rho = \nicefrac{1}{3}$ of the cost. The Method of Equal Shares rejects this option as imbalanced and selects a project with a smaller $\rho$ parameter. In turn, Bounded Overspending considers also buying a fraction of the project with balanced payments. Specifically, to maintain equal payments, $\alpha=\nicefrac{5}{6}$ of the project can be bought. This would result in $\rho=\nicefrac{1}{5}$ and ratio $\rho/\alpha = \nicefrac{6}{25}$. Since this ratio is smaller than the ratio of project B and the ratio of projects D, E and F which remain unchanged, Bounded Overspending selects project C with $\alpha = \nicefrac{5}{6}$. To cover $\nicefrac{5}{6}$ of its cost, its supporters would have to pay $\$50\text{k}$ each. Hence, in Bounded Overspending, each voter pays $\nicefrac{6}{5} \cdot \$50\text{k} = \$60\text{k}$ (or their whole budget).
\begin{center}
	\setlength{\tabcolsep}{2pt}
	\begin{tabularx}{10cm}{l|YYYYYYYYYY}
		& $v_1$ & $v_2$ & $v_3$ & $v_4$ & $v_5$ & $v_6$ & $v_7$ & $v_8$ & $v_9$ & $v_{10}$ \\
		\midrule
		\ $b_i$\ \  & \small $50$ & \small $0$ \tiny$(-10)$ & \small $0$ \tiny$(-10)$ & \small $0$ \tiny$(-10)$ & \small $0$ \tiny$(-10)$ & \small $50$ & \small $100$ & \small $100$ & \small $100$ & \small $40$
	\end{tabularx}
\end{center}			

In the third round, the remaining budget is $\$400\text{k}$.
Consider the following $\alpha$ and $\rho$ values:
\begin{center}
	\setlength{\tabcolsep}{2pt}
	\begin{tabularx}{7cm}{l|aYaaYY}
		& A & B & C & D & E & F \\
		\midrule
		\ $\alpha = 1, \rho$\ \  & -- & -- & -- & \nicefrac{5}{18} & \nicefrac{1}{3} & \nicefrac{1}{3} \\
		\ $\alpha = \nicefrac{2}{3}, \rho$\ \  & -- & -- & -- & \nicefrac{1}{4} & \nicefrac{1}{3} & \nicefrac{1}{3} \\
		\ $\alpha = \nicefrac{1}{8}, \rho$\ \  & -- & \nicefrac{1}{1} & -- & \nicefrac{1}{4} & \nicefrac{1}{3} & \nicefrac{1}{3} \\
	\end{tabularx}
\end{center}			
Note that only $\nicefrac{1}{8}$ of the project B can be bought (with $\rho=1$). Project E and F can still be bough in full, but since voter $v_2$ run out of money, the value of $\rho$ for project E dropped to $\nicefrac{1}{3}$. 
Now, project D can either be bought in full (with unequal payments $(66.\bar{6}, 66.\bar{6}, 66.\bar{6}, 40)$ and $\rho = 66.\bar{6}/240 = \frac{5}{18}$) or in part: for $\alpha = \nicefrac{2}{3}$ with equal payments $(40,40,40,40)$ and $\rho = \frac{1}{4}$. Since the ratio $\rho/\alpha$ for the former option is lower and it is also lower than the ratio for other projects, Bounded Overspending selects project D with unequal payments.

\begin{center}
	\setlength{\tabcolsep}{2pt}
	\begin{tabularx}{10cm}{l|YYYYYYYYYY}
		& $v_1$ & $v_2$ & $v_3$ & $v_4$ & $v_5$ & $v_6$ & $v_7$ & $v_8$ & $v_9$ & $v_{10}$ \\
		\midrule
		\ $b_i$\ \  & \small $50$ & \small $0$ \tiny$(-10)$ & \small $0$ \tiny$(-10)$ & \small $0$ \tiny$(-10)$ & \small $0$ \tiny$(-10)$ & \small $50$ & \small $33.\bar{3}$ & \small $33.\bar{3}$ & \small $33.\bar{3}$ & \small $0$
	\end{tabularx}
\end{center}			

In the fifth round, the remaining budget is $\$160\text{k}$.
Consider the following values of $\alpha$ and $\rho$:
\begin{center}
	\setlength{\tabcolsep}{2pt}
	\begin{tabularx}{7cm}{l|aYaaYa}
		& A & B & C & D & E & F \\
		\midrule
		\ $\alpha = 1, \rho$\ \  & -- & -- & -- & -- & -- & -- \\
		\ $\alpha = \nicefrac{5}{6}, \rho$\ \  & -- & -- & -- & -- & -- & \nicefrac{3}{5} \\
		\ $\alpha = \nicefrac{2}{3}, \rho$\ \  & -- & -- & -- & -- & -- & \nicefrac{1}{2} \\
	\end{tabularx}
\end{center}
Hence, only project F can be bought which is done for $\alpha=\nicefrac{5}{6}$, $\rho = \nicefrac{3}{5}$ (voter $v_6$ pays $50$ out of $83.\bar{3}$ that supporters can cover) and ratio $\rho/\alpha = \frac{18}{25}$.
As a result, the remaining budget is \$60k, no project can be added and BOS terminates.

\clearpage
\section{Proofs Omitted from the Main Text}\label{app:proofs}
\appendixText

\clearpage

\clearpage
\section{Pseudo-Code of BOS Equal Shares Plus}
\label{app:bounded-overspending-plus}
\SetKw{Return}{return}
\SetKw{Input}{Input:}
\begin{algorithm}[h!]
	\small
	\captionsetup{labelfont={sc,bf}, labelsep=newline}
	\caption{Pseudo-code of BOS Equal Shares Plus.}\label{alg:bounded-overspending-plus}
	\Input{A PB election $(C,V,b)$ \\}
	\DontPrintSemicolon
	\SetAlgoNoEnd
	\SetAlgoLined
	$W \gets \emptyset, \quad b_i \gets \nicefrac{b}{n}  \text{~for each~} v_i \in V$ \; 
	$\mathrm{over}_i \gets 0  \text{~for each~} v_i \in V$ {\color{winered}\tcc*[l]{Overspending for each voter}}
	\While{\emph{exists} $c \in C \setminus W$ \emph{s.t.} $\cost(c) \le b-\cost(W)$}{
		{\color{winered}\tcc{The part of computing the values $\alpha^*$, $\rho^*$, $c^*$ is the same as in pure BOS.}}
		$(\alpha^*, \rho^*, c^*) \gets (1, +\infty, c)$ \;
		
		\For{$c \in C \setminus W$ \emph{s.t.} $\cost(c) \le b-\cost(W)$}{
			$\lambda' \gets \lambda$ satisfying $\cost(c) = \sum_{i=1}^n \min(b_i, u_i(c) \cdot \lambda)$ or $+\infty$ if there is no such $\lambda$, i.e., $\cost(c)>\sum_{v_i \in V : u_i(c) > 0}b_i$\;
			\For{$\lambda \in \{b_i/u_i(c) : v_i \in V, b_i >0, u_i(c) > 0\} \cup \{\lambda'\}$}{
				$\alpha \gets \min(\left(\sum_{i=1}^n \min(b_{i},u_{i}(c) \cdot \lambda)\right)/\cost(c),1)$ \;
				$\rho \gets \lambda / \alpha$ \;
				\If{$\rho/\alpha < \rho^*/\alpha^*$}{
					$(\alpha^*, \rho^*, c^*) \gets (\alpha, \rho, c)$ \;
				}
			}
		}
		{\color{winered}\tcc{Temporary increase of the budgets.}}
		$\Delta b = \frac{\cost(c^*) - \alpha^{*} \cost(c^*)}{|\{v_i \in V \colon \rho^* u_i(c^*) \geq b_i > 0\}|}$ \;
		$b_i^* = b_i + \max(0, \Delta b - \mathrm{over}_i) \text{~for each~} v_i \in V$ \;
		{\color{winered}\tcc{Finding the optimal candidate is the same as in the Method of Equal Shares.}}
		\For{$c \in C \setminus W$}{
			\eIf{$\sum_{v_i \in V : u_i(c) > 0} b_i^* < \cost(c)$}{
				$\rho_{\mathrm{MES}}(c) \gets \infty$ {\color{winered}\tcc*[l]{Project not affordable}}
			}{
				Let $v_{i_1}, \dots, v_{i_t}$ be a list of all voters $v_{i_j} \in V$ with $u_{i_j}(c) > 0$, ordered so that $b_{i_1}^*/u_{i_1}(c) \le \cdots \le b_{i_t}^*/u_{i_t}(c) $.\;
				\For{$s = 1, \dots, t$}{
					$\rho_{\mathrm{MES}} \gets (\cost(c) - (b_{i_1}^* + \cdots + b_{i_{s-1}}^*))/(u_{i_{s}}(c) + \cdots + u_{i_t}(c))$\;
					\If{$\rho_{\mathrm{MES}}(c) \cdot u_{i_s} \le b_{i_s}^*$}{
						\textbf{break} {\color{winered}\tcc*[l]{we have found the optimal $\rho_{\mathrm{MES}}$-value}}
					}
				}
			}
		}
		$c^* \gets \argmin_{c \in C \setminus W} \rho_{\mathrm{MES}}(c)$ \;
		$W \gets W \cup \{c^*\}$ \;
		{\color{winered}\tcc{Updating budgets.}}
		\For{$v_i \in V$}{
			$\mathrm{pay}_i \gets \min(b_{i}^*, \rho_{\mathrm{MES}}(c^*) \cdot u_i(c^*))$ \;
			\eIf{$b_{i} \geq  \mathrm{pay}_i$}{
				$b_i \gets  b_i - \mathrm{pay}_i$ \;
			}{
				$\mathrm{over}_i \gets \mathrm{over}_i + (\mathrm{pay}_i - b_i)$ \;
				$b_i \gets 0$ \;
			}
		}
	}
	\Return{W}\;
\end{algorithm}

\clearpage
\section{Additional Empirical Results}
\label{app:experiments}

In \Cref{tab:pabulib:aggregated:app},
we report the aggregated statistics for each of our rules
based on their performance in the \pabulib{} real-world instances.
Additionally, we present the values of these statistics
when limited to the instances with a particular type of ballots used for voting.
\pabulib{} distinguishes four such ballot types:
\myemph{approval ballots}, where a voter indicates a subset of project he or she approves,
\myemph{choose-1 ballots}, in which a voter must select exactly one project,
\myemph{cumulative ballots}---a voter distributes a number of points between the projects, and
\myemph{ordinal ballots}, where a voter provides an ordering of (a subset of) projects---we use \myemph{Borda scores} to transform them to utilities.   

In \Cref{tab:experiments:pabulib:values},
we report the mean values of statistics presented in the first three plots of \Cref{fig:experiments:pabulib} in \Cref{sec:experiments}
together with standard deviations and five quantiles.
In the same table, we also present the $p$-values
for the significance of the differences between the average values of statistics for each pair of rules.
We note that they are very small, rarely exceeding $0.05$,
which means that almost all the observed differences
between the behavior of the rules are statistically significant.
In particular, we note that BOS has a lower average exclusion ratio than the original Method of Equal Shares
for every instance size
and the difference is significant at the 0.01 level.

Finally, in \Cref{fig:experiments:euclidean:app},
we present the full results of the experiment on the Euclidean elections.
Recall that we have defined the utility of voter $ v_i $ for candidate $ c $ 
as $ u_i(c) = (\dist(v_i, c) + \lambda)^{-1} $,
where $ \dist(v_i, c) $ is the Euclidean distance between them, and
the denominator is shifted by the constant of $\lambda$ to bound the maximal utility.
In the main body of the paper, we presented results for $\lambda = 1$.
Here, we also include results for $\lambda = 2$ and $\lambda = 0.5$.
Note that taking $\lambda=2$ instead of $\lambda=1$
leads to exactly the same outcomes
as multiplying all distances by $0.5$
since all of the considered rules are invariant towards
scaling all the utilities by a constant.
Similarly, taking $\lambda=0.5$ instead of $\lambda=1$
is equivalent to doubling the distances.

\begin{table}
	\centering
	\begin{tabular}{l ccccc}
		\toprule
		Metric & Util. & Eq. Shares & BOS & BOS+ & FrES \\
		\midrule
		Avg. rel. cost satisfaction & 1.000 & 0.836 & 0.903 & 0.909 & 0.921 \\ 
		\midrule
		\rightcell{\quad instances with approval ballots} & 1.000 & 0.846 & 0.890 & 0.900 & 0.960 \\ 
		\rightcell{\quad instances with choose-1 ballots} & 1.000 & 0.908 & 0.984 & 0.984 & 0.825 \\ 
		\rightcell{\quad instances with cumulative ballots} & 1.000 & 0.751 & 0.890 & 0.887 & 0.885 \\ 
		\rightcell{\quad instances with ordinal ballots} & 1.000 & 0.821 & 0.896 & 0.902 & 0.885 \\ 
		\midrule
		Avg. rel. score satisfaction & 1.000 & 1.201 & 1.160 & 1.169 & 1.029 \\ 
		\midrule
		\rightcell{\quad instances with approval ballots} & 1.000 & 1.286 & 1.201 & 1.215 & 1.052 \\ 
		\rightcell{\quad instances with choose-1 ballots} & 1.000 & 0.964 & 1.004 & 1.004 & 0.855 \\ 
		\rightcell{\quad instances with cumulative ballots} & 1.000 & 1.099 & 1.138 & 1.140 & 1.070 \\ 
		\rightcell{\quad instances with ordinal ballots} & 1.000 & 1.187 & 1.164 & 1.171 & 1.060 \\ 
		\midrule
		Avg. exclusion ratio & 19.85\% & 17.62\% & 16.16\% & 16.50\% & 0.00\% \\ 
		\midrule
		\rightcell{\quad instances with approval ballots} & 13.92\% & 11.86\% & 10.36\% & 10.88\% & 0.00\% \\ 
		\rightcell{\quad instances with choose-1 ballots} & 41.36\% & 43.37\% & 41.19\% & 41.19\% & 0.00\% \\ 
		\rightcell{\quad instances with cumulative ballots} & 29.57\% & 27.12\% & 24.90\% & 24.89\% & 0.00\% \\ 
		\rightcell{\quad instances with ordinal ballots} & 12.38\% & 5.10\% & 5.54\% & 5.85\% & 0.00\% \\ 
		\midrule
		Avg. running time in sec. & 0.001 & 6.822 & 0.086 & 0.263 & 2.151 \\ 
		\midrule
		\rightcell{\quad instances with approval ballots} & 0.001 & 10.585 & 0.120 & 0.382 & 3.296 \\ 
		\rightcell{\quad instances with choose-1 ballots} & 0.000 & 1.087 & 0.044 & 0.116 & 0.028 \\ 
		\rightcell{\quad instances with cumulative ballots} & 0.000 & 0.751 & 0.014 & 0.031 & 0.060 \\ 
		\rightcell{\quad instances with ordinal ballots} & 0.000 & 2.914 & 0.060 & 0.151 & 1.676 \\ 
		\midrule
		Avg. EJR+ violations & 0.953 & 0.000 & 0.061 & 0.060 & 0.000 \\ 
		\midrule
		\rightcell{\quad instances with approval ballots} & 1.138 & 0.000 & 0.059 & 0.058 & 0.000 \\ 
		\rightcell{\quad instances with choose-1 ballots} & 0.141 & 0.000 & 0.071 & 0.071 & 0.000 \\ 
		\midrule
		EJR+ violation instances & 26.20\% & 0.00\% & 4.59\% & 4.48\% & 0.00\% \\ 
		\midrule
		\rightcell{\quad instances with approval ballots} & 29.62\% & 0.00\% & 4.29\% & 4.16\% & 0.00\% \\ 
		\rightcell{\quad instances with choose-1 ballots} & 11.18\% & 0.00\% & 5.88\% & 5.88\% & 0.00\% \\ 
		\midrule
		Avg. budget spending & 96.47\% & $44.84\%^{\dagger}$ & 93.98\% & 93.95\% & $93.40\%^{\dagger}$ \\ 
		\midrule
		\rightcell{\quad instances with approval ballots} & 96.09\% & $53.86\%^{\dagger}$ & 93.20\% & 93.16\% & $92.51\%^{\dagger}$ \\ 
		\rightcell{\quad instances with choose-1 ballots} & 97.40\% & $14.98\%^{\dagger}$ & 96.35\% & 96.35\% & $95.94\%^{\dagger}$ \\ 
		\rightcell{\quad instances with cumulative ballots} & 95.79\% & $27.08\%^{\dagger}$ & 92.77\% & 92.77\% & $90.90\%^{\dagger}$ \\ 
		\rightcell{\quad instances with ordinal ballots} & 98.16\% & $56.85\%^{\dagger}$ & 96.64\% & 96.59\% & $98.02\%^{\dagger}$ \\ 
		\midrule
		Exhausted budgets & 100.00\% & $6.99\%^{\dagger}$ & 99.84\% & 99.69\% & $24.65\%^{\dagger}$ \\ 
		\midrule
		\rightcell{\quad instances with approval ballots} & 100.00\% & $8.31\%^{\dagger}$ & 99.73\% & 99.60\% & $17.29\%^{\dagger}$ \\ 
		\rightcell{\quad instances with choose-1 ballots} & 100.00\% & $4.12\%^{\dagger}$ & 100.00\% & 100.00\% & $65.88\%^{\dagger}$ \\ 
		\rightcell{\quad instances with cumulative ballots} & 100.00\% & $4.00\%^{\dagger}$ & 100.00\% & 99.50\% & $16.00\%^{\dagger}$ \\ 
		\rightcell{\quad instances with ordinal ballots} & 100.00\% & $7.59\%^{\dagger}$ & 100.00\% & 100.00\% & $25.95\%^{\dagger}$ \\ 
		\bottomrule
	\end{tabular}
	\caption{Aggregated statistics from running our rules on instances from \pabulib.
The values for Equal Shares assume Add1U completion, except for the average budget spending and exhausted budgets.
Similarly, FrES is completed in a utilitarian fashion except for these two cases. The usage of a completion method is denoted by $\dagger$.}
	\label{tab:pabulib:aggregated:app}
\end{table}

\begin{table*}[p!]
	\small
	\renewcommand{\arraystretch}{0.97}
	\setlength{\tabcolsep}{4pt}
	\centering
	\begin{tabular}{cc cccc cccc cccc}
		\toprule
		\multicolumn{2}{c}{Statistics} & \multicolumn{4}{c}{Score satisfaction} & \multicolumn{4}{c}{Cost satisfaction} & \multicolumn{4}{c}{Exclusion ratio} \\
		\multicolumn{2}{c}{No. projects} & $1$--$8$ & $9$--$16$ & $17$--$28$ & $29$+ & $1$--$8$ & $9$--$16$ & $17$--$28$ & $29$+ & $1$--$8$ & $9$--$16$ & $17$--$28$ & $29$+ \\
		\midrule
			& mean & $1.000$ & $1.000$ & $1.000$ & $1.000$ & $1.000$ & $1.000$ & $1.000$ & $1.000$ & $0.206$ & $0.209$ & $0.178$ & $0.199$ \\ 
			& std & $0.000$ & $0.000$ & $0.000$ & $0.000$ & $0.000$ & $0.000$ & $0.000$ & $0.000$ & $0.201$ & $0.201$ & $0.185$ & $0.166$ \\ 
			& q10 & $1.000$ & $1.000$ & $1.000$ & $1.000$ & $1.000$ & $1.000$ & $1.000$ & $1.000$ & $0.000$ & $0.013$ & $0.006$ & $0.050$ \\ 
		Util. & q25 & $1.000$ & $1.000$ & $1.000$ & $1.000$ & $1.000$ & $1.000$ & $1.000$ & $1.000$ & $0.025$ & $0.051$ & $0.040$ & $0.086$ \\ 
			& q50 & $1.000$ & $1.000$ & $1.000$ & $1.000$ & $1.000$ & $1.000$ & $1.000$ & $1.000$ & $0.148$ & $0.145$ & $0.121$ & $0.147$ \\ 
			& q75 & $1.000$ & $1.000$ & $1.000$ & $1.000$ & $1.000$ & $1.000$ & $1.000$ & $1.000$ & $0.324$ & $0.304$ & $0.239$ & $0.252$ \\ 
			& q90 & $1.000$ & $1.000$ & $1.000$ & $1.000$ & $1.000$ & $1.000$ & $1.000$ & $1.000$ & $0.526$ & $0.523$ & $0.457$ & $0.414$ \\ 
		\midrule
			& mean & $1.061$ & $1.148$ & $1.265$ & $1.350$ & $0.820$ & $0.845$ & $0.834$ & $0.848$ & $0.229$ & $0.200$ & $0.134$ & $0.135$ \\ 
			& std & $0.463$ & $0.620$ & $0.625$ & $0.635$ & $0.302$ & $0.230$ & $0.177$ & $0.123$ & $0.227$ & $0.204$ & $0.155$ & $0.155$ \\ 
			& q10 & $0.860$ & $0.955$ & $0.996$ & $1.038$ & $0.264$ & $0.429$ & $0.575$ & $0.671$ & $0.000$ & $0.005$ & $0.002$ & $0.019$ \\ 
		MES & q25 & $1.000$ & $1.000$ & $1.000$ & $1.114$ & $0.770$ & $0.777$ & $0.735$ & $0.816$ & $0.022$ & $0.047$ & $0.019$ & $0.041$ \\ 
			& q50 & $1.000$ & $1.000$ & $1.115$ & $1.257$ & $1.000$ & $0.970$ & $0.890$ & $0.886$ & $0.164$ & $0.136$ & $0.072$ & $0.081$ \\ 
			& q75 & $1.000$ & $1.075$ & $1.281$ & $1.475$ & $1.000$ & $1.000$ & $0.970$ & $0.927$ & $0.395$ & $0.291$ & $0.190$ & $0.159$ \\ 
			& q90 & $1.114$ & $1.324$ & $1.573$ & $1.682$ & $1.000$ & $1.000$ & $1.000$ & $0.960$ & $0.573$ & $0.500$ & $0.398$ & $0.302$ \\ 
		\midrule
			& mean & $1.024$ & $1.085$ & $1.240$ & $1.314$ & $0.941$ & $0.924$ & $0.878$ & $0.866$ & $0.201$ & $0.186$ & $0.126$ & $0.127$ \\ 
			& std & $0.148$ & $0.300$ & $0.579$ & $0.632$ & $0.162$ & $0.146$ & $0.145$ & $0.103$ & $0.200$ & $0.191$ & $0.149$ & $0.155$ \\ 
			& q10 & $0.996$ & $0.987$ & $0.999$ & $1.005$ & $0.793$ & $0.739$ & $0.678$ & $0.755$ & $0.000$ & $0.005$ & $0.002$ & $0.018$ \\ 
		BOS & q25 & $1.000$ & $1.000$ & $1.000$ & $1.086$ & $1.000$ & $0.908$ & $0.830$ & $0.813$ & $0.018$ & $0.046$ & $0.020$ & $0.035$ \\ 
			& q50 & $1.000$ & $1.000$ & $1.095$ & $1.224$ & $1.000$ & $1.000$ & $0.920$ & $0.883$ & $0.147$ & $0.122$ & $0.072$ & $0.069$ \\ 
			& q75 & $1.000$ & $1.033$ & $1.261$ & $1.411$ & $1.000$ & $1.000$ & $0.992$ & $0.941$ & $0.319$ & $0.267$ & $0.171$ & $0.150$ \\ 
			& q90 & $1.007$ & $1.265$ & $1.538$ & $1.612$ & $1.000$ & $1.000$ & $1.000$ & $0.978$ & $0.525$ & $0.476$ & $0.366$ & $0.283$ \\ 
		\midrule
			& mean & $1.025$ & $1.088$ & $1.249$ & $1.337$ & $0.941$ & $0.926$ & $0.881$ & $0.885$ & $0.201$ & $0.187$ & $0.129$ & $0.137$ \\ 
			& std & $0.148$ & $0.300$ & $0.590$ & $0.634$ & $0.163$ & $0.146$ & $0.147$ & $0.103$ & $0.199$ & $0.191$ & $0.149$ & $0.153$ \\ 
			& q10 & $1.000$ & $0.999$ & $1.000$ & $1.023$ & $0.793$ & $0.731$ & $0.678$ & $0.781$ & $0.000$ & $0.007$ & $0.002$ & $0.022$ \\ 
		BOS+ & q25 & $1.000$ & $1.000$ & $1.000$ & $1.101$ & $1.000$ & $0.912$ & $0.831$ & $0.849$ & $0.021$ & $0.047$ & $0.022$ & $0.042$ \\ 
			& q50 & $1.000$ & $1.000$ & $1.104$ & $1.239$ & $1.000$ & $1.000$ & $0.923$ & $0.905$ & $0.147$ & $0.124$ & $0.073$ & $0.081$ \\ 
			& q75 & $1.000$ & $1.038$ & $1.268$ & $1.433$ & $1.000$ & $1.000$ & $1.000$ & $0.956$ & $0.319$ & $0.267$ & $0.174$ & $0.164$ \\ 
			& q90 & $1.007$ & $1.265$ & $1.549$ & $1.667$ & $1.000$ & $1.000$ & $1.000$ & $0.985$ & $0.525$ & $0.476$ & $0.372$ & $0.304$ \\ 
		\midrule
			& mean & $1.037$ & $0.996$ & $1.048$ & $1.038$ & $0.973$ & $0.943$ & $0.904$ & $0.857$ & $0.000$ & $0.000$ & $0.000$ & $0.000$ \\ 
			& std & $0.255$ & $0.272$ & $0.442$ & $0.378$ & $0.150$ & $0.162$ & $0.144$ & $0.104$ & $0.000$ & $0.000$ & $0.000$ & $0.000$ \\ 
			& q10 & $0.812$ & $0.790$ & $0.802$ & $0.772$ & $0.781$ & $0.731$ & $0.749$ & $0.713$ & $0.000$ & $0.000$ & $0.000$ & $0.000$ \\ 
		FrES & q25 & $0.943$ & $0.886$ & $0.894$ & $0.902$ & $0.910$ & $0.875$ & $0.837$ & $0.814$ & $0.000$ & $0.000$ & $0.000$ & $0.000$ \\ 
			& q50 & $1.000$ & $0.970$ & $0.977$ & $1.014$ & $0.990$ & $0.966$ & $0.920$ & $0.881$ & $0.000$ & $0.000$ & $0.000$ & $0.000$ \\ 
			& q75 & $1.074$ & $1.027$ & $1.084$ & $1.126$ & $1.006$ & $1.011$ & $0.982$ & $0.933$ & $0.000$ & $0.000$ & $0.000$ & $0.000$ \\ 
			& q90 & $1.236$ & $1.222$ & $1.282$ & $1.296$ & $1.127$ & $1.071$ & $1.016$ & $0.960$ & $0.000$ & $0.000$ & $0.000$ & $0.000$ \\ 
		\midrule
		\multicolumn{2}{c}{Util. vs. MES} & $0.008$ & $0.000$ & $0.000$ & $0.000$ & $0.000$ & $0.000$ & $0.000$ & $0.000$ & $0.000$ & $0.017$ & $0.000$ & $0.000$ \\ 
		\multicolumn{2}{c}{Util. vs. BOS} & $0.001$ & $0.000$ & $0.000$ & $0.000$ & $0.000$ & $0.000$ & $0.000$ & $0.000$ & $0.015$ & $0.000$ & $0.000$ & $0.000$ \\ 
		\multicolumn{2}{c}{Util. vs. BOS+} & $0.001$ & $0.000$ & $0.000$ & $0.000$ & $0.000$ & $0.000$ & $0.000$ & $0.000$ & $0.028$ & $0.000$ & $0.000$ & $0.000$ \\ 
		\multicolumn{2}{c}{Util. vs. FrES} & $0.004$ & $0.385$ & $0.029$ & $0.041$ & $0.000$ & $0.000$ & $0.000$ & $0.000$ & $0.000$ & $0.000$ & $0.000$ & $0.000$ \\ 
		\multicolumn{2}{c}{MES vs. BOS} & $0.064$ & $0.019$ & $0.024$ & $0.000$ & $0.000$ & $0.000$ & $0.000$ & $0.001$ & $0.000$ & $0.000$ & $0.000$ & $0.000$ \\ 
		\multicolumn{2}{c}{MES vs. BOS+} & $0.066$ & $0.024$ & $0.107$ & $0.000$ & $0.000$ & $0.000$ & $0.000$ & $0.000$ & $0.000$ & $0.000$ & $0.001$ & $0.072$ \\ 
		\multicolumn{2}{c}{MES vs. FrES} & $0.133$ & $0.000$ & $0.000$ & $0.000$ & $0.000$ & $0.000$ & $0.000$ & $0.055$ & $0.000$ & $0.000$ & $0.000$ & $0.000$ \\ 
		\multicolumn{2}{c}{BOS vs. BOS+} & $0.224$ & $0.000$ & $0.000$ & $0.000$ & $0.441$ & $0.076$ & $0.010$ & $0.000$ & $0.032$ & $0.000$ & $0.000$ & $0.000$ \\ 
		\multicolumn{2}{c}{BOS vs. FrES} & $0.139$ & $0.000$ & $0.000$ & $0.000$ & $0.004$ & $0.036$ & $0.003$ & $0.115$ & $0.000$ & $0.000$ & $0.000$ & $0.000$ \\ 
		\multicolumn{2}{c}{BOS+ vs. FrES} & $0.148$ & $0.000$ & $0.000$ & $0.000$ & $0.004$ & $0.053$ & $0.008$ & $0.000$ & $0.000$ & $0.000$ & $0.000$ & $0.000$ \\ 
		\bottomrule
	\end{tabular}
	\caption{Detailed values of statistics from running our rules on instances from Pabulib.
		In the first 35 rows, for each rule, statistic, and size range,
		we provide the average, standard deviation, median, 1st and 3rd quartile, as well as 10th and 90th centile
		of the observed values of the statistic.
		In the last 10 rows, we present $p$-values for
		the significance of the difference between the average values for a given pair of rules.}
	\label{tab:experiments:pabulib:values}
\end{table*}

\begin{figure*}[t]
	\centering
	\setlength{\tabcolsep}{-1pt}
	\renewcommand{\arraystretch}{0}
	\begin{tabular}{c cccccc}
		& {\small Voters} & {\small Utilitarian} & {\small Equal Shares} & {\small BOS} & {\small BOS+} & {\small FrES} \\
		\multirow{4}{*}{\rotatebox{90}{Utility model with $\lambda=0.5$\quad\quad\quad\ }\ } & & & & & & \\ &
		\includegraphics[width=2.1cm]{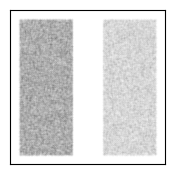} &
		\includegraphics[width=2.1cm]{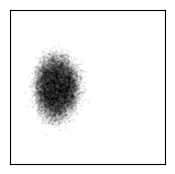} &
		\includegraphics[width=2.1cm]{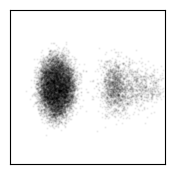} &
		\includegraphics[width=2.1cm]{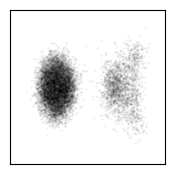} &
		\includegraphics[width=2.1cm]{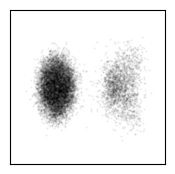} &
		\includegraphics[width=2.1cm]{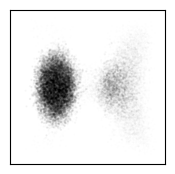} \\ &
		\includegraphics[width=2.1cm]{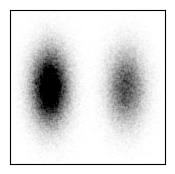} &
		\includegraphics[width=2.1cm]{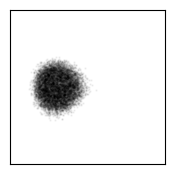} &
		\includegraphics[width=2.1cm]{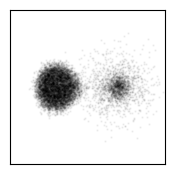} &
		\includegraphics[width=2.1cm]{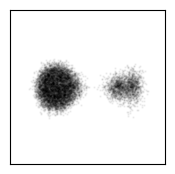} &
		\includegraphics[width=2.1cm]{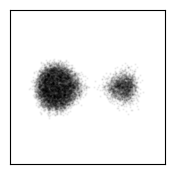} &
		\includegraphics[width=2.1cm]{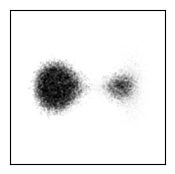} \\ &
		\includegraphics[width=2.1cm]{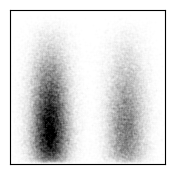} &
		\includegraphics[width=2.1cm]{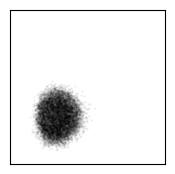} &
		\includegraphics[width=2.1cm]{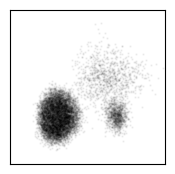} &
		\includegraphics[width=2.1cm]{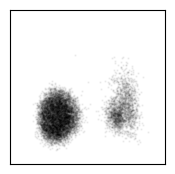} &
		\includegraphics[width=2.1cm]{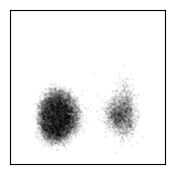} &
		\includegraphics[width=2.1cm]{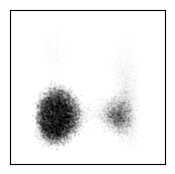} \\
		\multirow{4}{*}{\rotatebox{90}{Utility model with $\lambda=1$\quad\quad\quad\quad }\ } & \vspace{0.1cm} & & & & & \\ &
		\includegraphics[width=2.1cm]{img/euclidean/1_0_voters.png} &
		\includegraphics[width=2.1cm]{img/euclidean/1_0_util.png} &
		\includegraphics[width=2.1cm]{img/euclidean/1_0_mes.png} &
		\includegraphics[width=2.1cm]{img/euclidean/1_0_bos.png} &
		\includegraphics[width=2.1cm]{img/euclidean/1_0_bos_plus.png} &
		\includegraphics[width=2.1cm]{img/euclidean/1_0_fres.png} \\ &
		\includegraphics[width=2.1cm]{img/euclidean/1_1_voters.png} &
		\includegraphics[width=2.1cm]{img/euclidean/1_1_util.png} &
		\includegraphics[width=2.1cm]{img/euclidean/1_1_mes.png} &
		\includegraphics[width=2.1cm]{img/euclidean/1_1_bos.png} &
		\includegraphics[width=2.1cm]{img/euclidean/1_1_bos_plus.png} &
		\includegraphics[width=2.1cm]{img/euclidean/1_1_fres.png} \\ &
		\includegraphics[width=2.1cm]{img/euclidean/1_2_voters.png} &
		\includegraphics[width=2.1cm]{img/euclidean/1_2_util.png} &
		\includegraphics[width=2.1cm]{img/euclidean/1_2_mes.png} &
		\includegraphics[width=2.1cm]{img/euclidean/1_2_bos.png} &
		\includegraphics[width=2.1cm]{img/euclidean/1_2_bos_plus.png} &
		\includegraphics[width=2.1cm]{img/euclidean/1_2_fres.png} \\
		\multirow{4}{*}{\rotatebox{90}{Utility model with $\lambda=2$\quad\quad\quad\quad }\ } & \vspace{0.1cm} & & & & & \\ &
		\includegraphics[width=2.1cm]{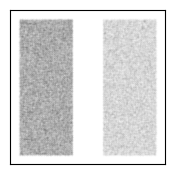} &
		\includegraphics[width=2.1cm]{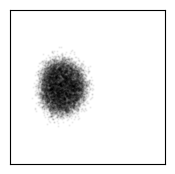} &
		\includegraphics[width=2.1cm]{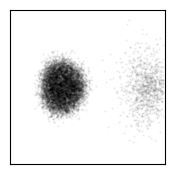} &
		\includegraphics[width=2.1cm]{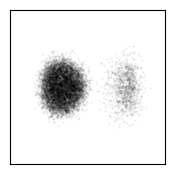} &
		\includegraphics[width=2.1cm]{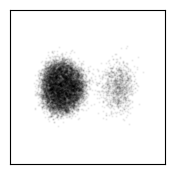} &
		\includegraphics[width=2.1cm]{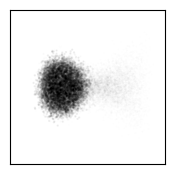} \\ &
		\includegraphics[width=2.1cm]{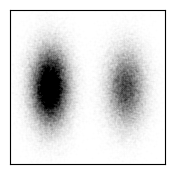} &
		\includegraphics[width=2.1cm]{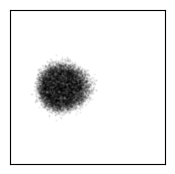} &
		\includegraphics[width=2.1cm]{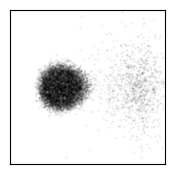} &
		\includegraphics[width=2.1cm]{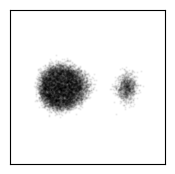} &
		\includegraphics[width=2.1cm]{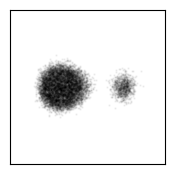} &
		\includegraphics[width=2.1cm]{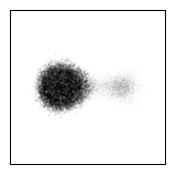} \\ &
		\includegraphics[width=2.1cm]{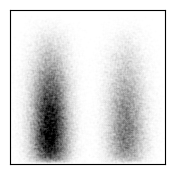} &
		\includegraphics[width=2.1cm]{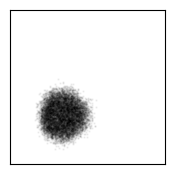} &
		\includegraphics[width=2.1cm]{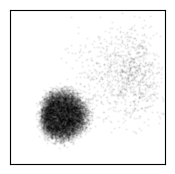} &
		\includegraphics[width=2.1cm]{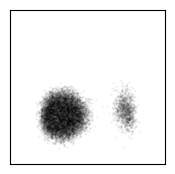} &
		\includegraphics[width=2.1cm]{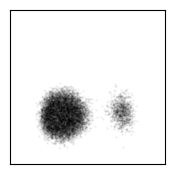} &
		\includegraphics[width=2.1cm]{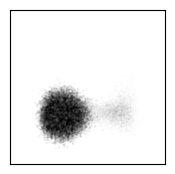} \\
	\end{tabular}
	\caption{Results of the experiment on Euclidean elections.
		The first column presents the superimposed positions of voters in all 1000 generated elections.
		The following columns show positions of candidates selected by respective rules
		(for FrES, the opacity of each point is proportional to the selected fraction of the respective candidate).}
	\label{fig:experiments:euclidean:app}
\end{figure*}

\end{document}